\newtheorem{theorem}{Theorem}[section]
\newtheorem{lemma}{Lemma}[section]
\newtheorem{corollary}{Corollary}[section]
\newtheorem{proposition}{Proposition}[section]
\newtheorem{assumption}{Assumption}
\theoremstyle{definition} 
\newtheorem{definition}{Definition}
\newtheorem{example}{Example}
\newtheorem{remark}{Remark}
\title{Flexible Modeling of Information Diffusion on Networks\\ with Statistical Guarantees}
\author{
  Alexander Kagan,  Elizaveta Levina, Ji Zhu \\
  Department of Statistics\\
University of Michigan\\
  \texttt{\{amkagan, elevina, jizhu\}@umich.edu} \\

}
\date{}
\begin{document}
\maketitle
\begin{abstract}
Modeling information spread through a network is one of the key problems of network analysis, with applications in a wide array of areas such as marketing and public health. Most approaches assume that the spread is governed by some probabilistic diffusion model, often parameterized by the strength of connections between network members (edge weights), highlighting the need for methods that can accurately estimate them. Multiple prior works suggest such estimators for particular diffusion models; however, most of them lack a rigorous statistical analysis that would establish the asymptotic properties of the estimator and allow for uncertainty quantification. In this paper, we develop a likelihood-based approach to estimate edge weights from the observed information diffusion paths under the proposed General Linear Threshold (GLT) model, a broad class of discrete-time information diffusion models that includes both the well-known linear threshold (LT) and independent cascade (IC) models. We first derive necessary and sufficient conditions that make the edge weights identifiable under this model. Then, we derive a finite sample error bound for the estimator and demonstrate that it is asymptotically normal under mild conditions. We conclude by studying the GLT model in the context of the Influence Maximization (IM) problem, that is, the task of selecting a subset of $k$ nodes to start the diffusion, so that the average information spread is maximized. We derive conditions that ensure the IM problem can be greedily solved under the proposed GLT model with standard optimality guarantees, and establish the dependency between the accuracy of the GLT weight estimates and the quality of the IM problem solutions. 
Extensive experiments on synthetic and real-world networks demonstrate that the flexibility of the proposed class of GLT models, coupled with the proposed estimation and inference framework for its parameters, can significantly improve estimation of spread from a given subset of nodes, prediction of node activation, and the quality of the IM problem solutions.
\end{abstract}

{\bf Keywords:} Social Networks, Information Diffusion, Linear Threshold Model, Independent Cascade Model
\section{Introduction}
    \label{ch:introduction}

The emergence of large-scale online social networks has led to the appearance of rich datasets which include not only connections between users and user features,  but also the paths of information propagation (also called information diffusion) between users. The term ``information'' here is interpreted broadly and can refer to anything that spreads from node to node, be it a news item, a tweet, or a virus. 
Information propagation paths, also known as {\it propagation traces}, are especially valuable in modeling information spread, since they provide direct data on the influence users have on their network neighbors.
For example, \cite{fakenews} used propagation traces for fake news detection, while \cite{Saito} used them to estimate the probabilities of information transmission between users, i.e., edge weights, assuming that the propagation follows the Independent Cascade (IC) model \citep{Goldenberg:2001}, a simple and arguably the most popular information diffusion model. 
\cite{goyal2011databased} used this data to assign influence credits to users and subsequently solve the influence maximization (IM) problem \citep{richardson, Kempe}, that is, a task of identifying a fixed-size subset of users to ``seed'' the propagation that would result in the highest number of nodes eventually reached. While all these papers use the propagation traces to solve important problems related to influence propagation, the solutions are usually obtained assuming a very specific diffusion model and with no further uncertainty quantification. This calls for developing a general statistical framework enabling flexible modeling, estimation, and uncertainty quantification of a network diffusion process based on these traces, which is the main contribution of this paper.

We introduce a natural but surprisingly rich class of models we call General Linear Threshold (GLT) diffusion models, which includes both the popular Linear Threshold (LT) \citep{granovetter1978threshold} and Independent Cascade (IC) models.  Unlike other flexible generalizations, it is not over-parametrized and thus allows for robust estimation of its parameters from propagation traces and uncertainty quantification for various downstream tasks. Similarly to the LT model, the GLT model assumes that each edge is assigned a deterministic weight, and a node activates when the sum of incoming weights from its currently active parents crosses a node-specific random threshold.  Unlike the LT model, which assumes that all the thresholds are uniformly distributed on $[0,1]$, the GLT model allows the threshold distribution to vary between nodes. This seemingly straightforward  generalization turns out to bring a lot of new flexibility to the model, since it allows for heterogeneity in how readily different nodes accept new information;  some may be much more easily influenced than others.  The popular alternative to the LT, the IC model, which we show corresponds to the GLT model with exponentially distributed thresholds (Proposition \ref{ic_is_glt_propos}), is also not able to allow for this heterogeneity, leaving the GLT model the only option that can.


 Importantly, the proposed class of GLT models comes with a convenient likelihood form,  allowing for efficient parameter estimation via constrained convex optimization. Under mild regularity conditions, we  establish a finite-sample error bound for this estimator which guarantees consistency, and derive its asymptotic distribution, allowing for construction of asymptotic confidence intervals for the GLT edge weights and their smooth transformations. In section \ref{consist_section}, we demonstrate several applications of this result, including uncertainty quantification for predicted node activation probabilities, testing for the difference in parents' influence on a child node, and solving the Robust Influence Maximization problem, an extension of the standard IM problem where edge weights are not available but are known to lie in a given confidence interval.

 While there have been several papers \citep{learnin_influence_kempe, learnability_influence} establishing Probably Approximately Correct (PAC) learnability guarantees for nodes' activation probabilities under the IC, LT, and several similar models, there has been very little work establishing theoretical guarantees for the diffusion model parameters. The only papers we are aware of that address this are  \cite{learning_diffusion_in_cont_time}, which derived sufficient identifiability conditions and established consistency for the parameters of several continuous-time diffusion models, and \cite{InferringGraphs2015}, which introduced a General Cascade model (which we show can be viewed as a variant of GLT as well) and focused on inferring the unobserved graph structure from the observed cascades.  While they established a bound implying $\sqrt{n}$-consistency for their estimator, they acknowledged their framework cannot be extended to the LT model and leave this for future work.   To the best of our knowledge, this paper is the first to provide a theoretical guarantee for parameter estimates from propagation traces for any threshold-based diffusion model. We conclude our theoretical analysis with a study of the GLT model in the context of the IM problem in  Section \ref{IM_section}, showing that, under the GLT model, the IM problem can be solved using the natural greedy strategy with standard optimality guarantees if all threshold distributions are concave, and that, for a class of graphs, the error rate of the IM solution mirrors the error rate of weight estimates.  

The rest of this manuscript is organized as follows.  In Section \ref{ch:methods1}, we fix notation and define important concepts related to information propagation.  In Section \ref{gltm_introduce_section}, we introduce the General Linear Threshold model and show how it relates to other diffusion models. In Section \ref{ch:methods2}, we establish identifiability conditions for the GLT, derive a likelihood approach to weight estimation, and establish conditions for the estimator's consistency and asymptotic normality. We further extend the estimation procedure to the case of partially observed propagation traces and unknown threshold distributions.
 Section \ref{IM_section} presents a short study of the IM problem under the GLT model. Finally, Section \ref{ch:experiments} presents experiments on synthetic and real-world networks showing how the flexibility of the GLT model, coupled with the proposed weight estimation and uncertainty quantification procedure, can significantly improve performance in various downstream tasks, including the IM problem, prediction of node activation probabilities, and spread estimation. 

\section{A statistical framework for information diffusion models}
    \label{ch:methods1}
	In this section, we set up a general framework for modeling discrete-time influence propagation on a network.  We present all models from the point of view of statistical models dependent on parametric distributions, which may differ from other standard treatments of such models.  We start by setting up notation.

Let $G=(V, E)$ be a graph where $V$ is the set of nodes and $E \subseteq V\times V$ is the set of edges. Unless otherwise stated, we assume throughout this manuscript that $G$ is a simple directed graph, that is, a graph with no self-loops and no repeated edges.    For a directed edge $u \rightarrow v$, we refer to $u$ as the parent node and $v$ as the child node.
We denote the sets of parent and children nodes  of a node $v$ as, respectively,  
$$P(v) = \{u: (u, v) \in E\}\quad \text{ and }\quad  C(v) = \{u: (v, u) \in E\}. $$
Similarly, for any set of nodes $S \subset V$, we write 
$$P(S) := \bigcup_{v\in S} P(v)  \, \setminus \, S \quad \text{ and }\quad  C(S) := \bigcup_{v\in S} C(v) \setminus S.
$$
In words, $P(S)$ and $C(S)$ consist of nodes outside of $S$ which have at least one child or parent in $S$, respectively.  Since we will often additionally assume that each edge is associated with a weight or a transmission probability, it is convenient to fix some edge order for this correspondence to be uniquely defined. For that, we assume that the nodes are enumerated from $1$ to $|V|$ and the edges are sorted lexicographically, by child node first and then by parent node.  For example, a triangle with each edge going in both directions corresponds to the ordered set $E = [(2, 1), (3, 1), (1, 2), (3, 2), (1, 3), (2, 3)]$.

A {\it diffusion model} associated with $G$ is a probabilistic model that governs the spread of information through the graph, usually with some edge- or node-dependent transmission parameters $\boldsymbol{\theta}$. Our primary goal is estimation and inference for these parameters from the observed information diffusion paths.  In general, the discrete-time information diffusion path starts with a given non-empty set $A_0 \subset V$ of initially activated (influenced) nodes, also known as the {\it seed set}. Then, at each time step $t = 1, 2\ldots$, the currently active nodes $A_{t-1}$ try to activate their network children while possibly deactivating themselves, making the set of all active nodes change to $A_t$. The process stops when no node can change its activation status or when the maximal time horizon is reached. In this work, we focus on the (quite general) subtype of discrete-time diffusion processes which satisfying the following assumptions.
\begin{enumerate}
    \item Information propagation is {\it progressive}, which means that once a node is activated, it remains activated.  This implies that $A_{t-1} \subset A_{t}$ for every time step $t$. We define the disjoint sets of {\it newly activated nodes} at times $t\ge 0$ as $D_t:= A_t\setminus A_{t-1}$ with the convention that $A_{-1} = \emptyset$  (no nodes are active before time $0$). 
    \item A child node $v$ can be activated at a time $t \ge 1$ only if it has a newly activated parent, that is, if $D_{t-1} \cap P(v)$ is not empty. Intuitively, this means that if a parent's influence was not enough to activate a child as soon as the parent become active, the only way for the child to be activated later is if another one of its parents becomes active.  
\end{enumerate}
Assumption 2 implies that propagation stops at time $T = \arg\min_{t \ge 0} \{t:  D_{t + 1} = \emptyset\}$, that is, the first time when no new nodes were activated.  The entire diffusion process can then be encoded by the set sequence $\mathcal{D}:= (D_0, \ldots, D_T)$, which we refer to as the {\it information diffusion path} or the {\it propagation trace}. We write $A(\mathcal{D})$  to denote the set of all nodes activated along the entire path.  

 The assumptions imply that not every sequence of node subsets can be a feasible propagation trace.  The following definition formalizes what makes a propagation trace feasible.    We denote the set of all feasible traces on $G$ by $\mathcal{F}(G)$.
 \begin{definition}[Feasible trace]\label{feas_trace_def}
 We say that a set sequence $\mathcal{D} = (D_0, \ldots, D_T)$ 
 is a feasible propagation trace if
 \begin{enumerate}
     \item $D_0 \neq \emptyset$; 
     \item All sets $D_t, \ t=0,\ldots, T$ are disjoint; 
     \item For every $t=1,\ldots, T$,  if $v \in D_t$ then $v \in C(D_{t-1}) \setminus A_{t-1}$, i.e., each node newly activated at time $t$ has at least one parent that was newly activated at time $t-1$.   
 \end{enumerate}
\end{definition}

Now, we are ready to state the formal definition of the diffusion model on the graph, which can be thought of as a collection of rules governing the transmission of information from parents to children.   These rules are not necessarily  Markovian and can depend on the entire previous propagation history.  

\begin{definition}[Diffusion model]\label{difmodel_def} 
A diffusion model on $G = (V, E)$ with a possibly graph-dependent parameter space $\Theta$ is a collection $\mathcal{M}_G (\Theta) = \{M_{G, \boldsymbol{\theta}}, \boldsymbol{\theta} \in \Theta\}$ where $M_{G, \boldsymbol{\theta}}$ is a mapping from a feasible trace $\mathcal{D}_t = (D_0, \ldots, D_{t-1}) \in \mathcal{F}(G)$ to a  probability distribution on feasible sets of newly activated nodes, 
\begin{equation}\label{diff_model_def_equation}
  \mathbb{P}_{\boldsymbol{\theta}}^{t}(D_{t} = S \ | \mathcal{D}_t), \ S \subset C(D_{t-1}) \setminus A_{t-1}.
\end{equation} 
We omit the subscript $G$ in $M_{G, \boldsymbol{\theta}}$ and $\mathcal{M}_{G} (\Theta)$ whenever it is clear from the context.  
\end{definition}

   
    The most general form of Definition \ref{difmodel_def} requires specifying the activation probability for each possible subset of $C(D_{t-1}) \setminus A_{t-1}$, which is impractical to work with.  A standard simplifying assumption is that node activations are independent conditionally on the preceding propagation history, implying that the right-hand side in \eqref{diff_model_def_equation} can be decomposed as follows:
     $$\mathbb{P}_{\boldsymbol{\theta}}^{t}(D_{t} = S \ | \ \mathcal{D}) = \prod_{v\in S} \mathbb{P}_{\boldsymbol{\theta}}^{t}(v \in D_t \ | \ \mathcal{D})\prod_{v\in C(D_{t-1}) \setminus (A_{t-1} \sqcup S)}\Bigl(1 -\mathbb{P}_{\boldsymbol{\theta}}^{t}(v\in D_t \ | \ \mathcal{D})\Bigr).
    $$
    With this additional assumption, defining a diffusion model is equivalent to defining the activation probability of a node given an arbitrary feasible diffusion history.
\begin{remark}\label{dif_model_on_arbitrary_graph_remark}
Sometimes, it may be convenient to refer to a class of  diffusion models  without a reference to a specific graph $G$.  
We  define a {\it diffusion model class} as the collection 
$\mathcal{M} = \{\mathcal{M}_G(\Theta_G): G \in \mathcal{G}\}$, where $\mathcal{G}$ is the set of simple directed graphs.  
\end{remark}

Next, we introduce the three arguably most popular diffusion model classes --  the Linear Threshold (LT), the Independent Cascade (IC), and the Triggering model -- and show how they fit into our general framework.  We note that by construction, node activation events are independent conditionally on propagation history for all of these model classes.   

\begin{example}[Linear Threshold (LT) Model]\label{ltm_def}
Assume each edge in the graph $G=(V, E)$ is assigned a weight parameter $b_{u,v}$, and we arrange the weights into a vector $\boldsymbol{\theta}$.  The parameter space for the LT model is given by 
\begin{equation}\label{param_space_ltm}
\Theta_{LT} = \Bigl\{\boldsymbol{\theta} \in \mathbb{R}^{|E|} \text{ s.t. for all } v\in V,\ \sum_{u\in P(v)} b_{u,v} \le 1 \ \text{ and } \ b_{u,v} \ge 0 \ \text{ for all } \ (u, v) \in E\Bigr\}.
\end{equation}
Each node $v\in V$ gets a random activation threshold $U_v$ sampled i.i.d.\ from the $\operatorname{Unif}[0,1]$ distribution at the outset.  At every time step $t\ge 1$, a non-active node $v$ becomes activated if the sum of the  edge weights from all its previously activated parents exceeds its threshold $U_v$, that is, 
\begin{align*}
\mathbb{P}^t_{\boldsymbol{\theta}}(v\in D_t\ |\ D_0, \ldots, D_{t-1})  
    &={\sum_{u \in D_{t-1} \cap P(v)} b_{u, v} \over 1- \sum_{u \in A_{t-2} \cap P(v)} b_{u, v}}.  
\end{align*}
This expression gives the probability of exceeding the threshold at time $t$ given that it was not yet exceeded at time $t-1$.  
\end{example}

\begin{example}[Independent Cascade (IC) Model]\label{icm_def}
Assume that each edge $(u, v)\in E$ is associated with a propagation probability $p_{u, v} \in [0, 1]$, arranged into a  vector $\boldsymbol{\theta} \in \Theta_{IC} = [0, 1]^{|E|}$.
At every time step $t\ge 1$, each newly active node $u \in D_{t-1}$ independently tries to activate all of its not yet active children $v \in C(u) \setminus A_{t-1}$  with probability $p_{u, v}$, that is,   
$$\mathbb{P}^t_{\boldsymbol{\theta}}(v\in D_t\ |\ D_0, \ldots, D_{t-1})  = 1 - \prod_{D_{t-1} \cap P(v)} (1 - p_{u, v}).
$$
\end{example}

 \begin{example}[Triggering Model] \label{trigger_def}
     At the outset, each node $v$ independently chooses a random triggering set $\Gamma_v$ according to some distribution over subsets of its parents, with probability of a node  $v$ choosing $S \subset P(v)$ as its triggering set denoted by $p_{v, S}$. With $\mathcal{P}^m$ denoting the set of all discrete distributions on $m$ points, the parameter space is then 
     $\Theta_{TR} = \{\mathcal{P}^{|P(v)|}, \ v\in V\}.
     $
     An inactive node $v$ becomes active at time $t \ge 1$ if its triggering set $\Gamma_v$ contains a node in $D_{t-1}$, with probability of activation given by 
$$\mathbb{P}^t_{\boldsymbol{\theta}}(v\in D_t\ |\ D_0, \ldots, D_{t-1})  = {\sum_{S\subset P(v)\cap D_{t-1}} p_{v, S}\over \sum_{S\subset P(v)\setminus A_{t-2}} p_{v, S}}.
$$
 \end{example}

As shown by \citep{Kempe}, the IC and LT model classes are special cases of the Triggering model class. Conveniently, within our parametric framework, we can formalize this in a very general form.  
\begin{definition}[Diffusion model subclass]\label{diff_model_subclass_def}
    Consider two diffusion model classes $\mathcal{M} = \{\mathcal{M}_G(\Theta_G): G\in \mathcal{G}\}$ and $\tilde{\mathcal{M}} =\{\tilde{\mathcal{M}}_G(\tilde{\Theta}_G): G\in \mathcal{G}\}$. We say that $\mathcal{M}$ is a {\it subclass} of $\tilde{\mathcal{M}}$ if for any $G\in\mathcal{G}$ and any instance $M_{G, \boldsymbol{\theta}}\in \mathcal{M}_{G}(\Theta_G)$, there exists an instance $\tilde{M}_{G, \boldsymbol{\theta}} \in \tilde{\mathcal{M}}_G(\tilde{\Theta}_G)$, such that they  coincide on each feasible trace $\mathcal{F}(G)$. If $\tilde{\mathcal{M}}$ and $\mathcal{M}$ are subclasses of each other, we say that the two model classes are {\it equivalent}.
\end{definition}

Note that Definition \ref{difmodel_def} does not specify the distribution that generates the seed set $D_0$; to describe the full data-generating distribution, we need a distribution $\mathbb{P}^0$ over the subsets of $V$ from which the seed set is generated.  We will assume that  $\mathbb{P}^0$ does not depend on $\boldsymbol{\theta}$, and $\mathbb{P}^0(\emptyset) = 0$.  We refer to the pair $(\mathcal{M}_{G}(\Theta), \mathbb{P}^0)$ as a {\em seeded diffusion model} with the seed distribution $\mathbb{P}^0$.
The seeded diffusion model corresponds to a distribution on all feasible propagation traces:
\begin{equation}\label{trace_distrib_def_eq}
\mathbb{P}_{\boldsymbol{\theta}}(\mathcal{D}):=\mathbb{P}^{0}(D_0)\prod_{t=1}^{T}\mathbb{P}_{\boldsymbol{\theta}}^t(D_{t}|D_0, \ldots, D_{t-1})\mathbb{P}_{\boldsymbol{\theta}}^{T+1}(\emptyset| \mathcal{D}), \qquad \mathcal{D} \in \mathcal{F}(G).
\end{equation}
In turn, the trace distribution also uniquely defines the seeded diffusion model.

Assuming one observes a collection of traces from a seeded diffusion model $(\mathcal{M}_{G}(\Theta), \mathbb{P}^0)$,  the established equivalence with the trace-generating distribution allows directly using the standard statistical parameter estimation techniques, such as the maximum likelihood estimation (MLE).
Before proceeding to estimation, a natural question statistician would address is when the family $\{\mathbb{P}_{\boldsymbol{\theta}}: \boldsymbol{\theta}\in\Theta\}$ of trace distributions is identifiable, that is,  $\mathbb{P}_{\boldsymbol{\theta}_1} = \mathbb{P}_{\boldsymbol{\theta}_2}$ implies   $\boldsymbol{\theta}_1 = \boldsymbol{\theta}_2$. While specific conditions will depend on the form of the distribution, it is clear that all nodes in the graph must be reachable with a positive probability; otherwise, parameters associated with these nodes, such as the incoming edge weights, have no influence on the trace distribution and thus cannot be identified.  The necessary and sufficient identifiability conditions for our proposed class of models will be stated formally in Theorem \ref{identif_theorem}.




\section{The General Linear Threshold model}\label{gltm_introduce_section}

In this section, we introduce the General Linear Threshold (GLT) model and establish its relationship to the IC, LT, and Triggering models.  One may ask why there is a need for a new model, when there are already several, and in particular the triggering model seems quite general, encompassing both IC and LT models.  The problem with the triggering model is its number of unknown parameters, $\sum_{v\in V} \left( 2^{|P(v)|} - 1\right)$, which is not feasible to fit in most cases. On the other extreme, the LT and IC models have only $|E|$ parameters, but are often  insufficiently flexible in practice. In particular, both these models  assume that all nodes behave identically when receiving equal amounts of influence from their neighbors, which does not allow for any node heterogeneity. In many contexts, some individuals will need a lot more influence than others to become ``activated", and this can be estimated from data on propagation traces. The proposed GLT model allows us to account for differences in users' willingness to accept new information while not significantly increasing the number of unknown parameters.  

\begin{definition}[General Linear Threshold (GLT) model]\label{gltm_def}
Assume that each node $v\in V$ has a random threshold $U_v$ drawn independently from a distribution supported on $[0, h_v]$ with $h_v \le \infty$, with cumulative distribution function (cdf) $F_v$. Further assume that edges have weights $b_{u,v} \ge 0$, such that the in-degree of each node $v\in V$ satisfies
$\sum_{u \in P(v)} b_{u, v} \leq h_v$.
As in the LT model, a node $v$ activates at time $t\ge 1$ if $\sum_{u \in P(v) \cap A_{t-1}} b_{u,v} \ge U_v$.
\end{definition} 

If we treat all threshold distributions as fixed, the parameter space of the GLT model is the set of all possible edge weights satisfying the model constraints:
\begin{equation}\label{glt_param_space}
 \Theta = \left\{\boldsymbol{\theta} \in \mathbb{R}^{|E|}  \text{ s.t.\ for all } v\in V, \ \|\boldsymbol{\theta}_v\|_1 \le h_v \text{ and } \boldsymbol{\theta}_v \ge 0\right\},
\end{equation}
where the sub-vector of $\boldsymbol{\theta}$ corresponding to the parent edges of $v$ is denoted by $\boldsymbol{\theta}_v = \{b_{u, v}: u\in P(v)\}$. 
If we also model the threshold distributions $F_v, v\in V$ in some way, there may be additional parameters.   We will discuss this option in Section \ref{est_threshold_params};  for now, we will assume they are fixed.

Under the GLT model, the activation probability of a node $v$ given a feasible history $(D_0, \ldots, D_{t-1})$ equals the probability that the sum of weights from $A_{t-1}$ is higher than $U_v$ conditional on the event that it had not exceeded $U_v$ at time $t-2$: 
\begin{equation}\label{trans_prob_glt}
\mathbb{P}^t_{\boldsymbol{\theta}}(v\in D_t\ |\ D_0, \ldots, D_{t-1}) = {F_v\Bigl(B_v(A_{t-1};{\boldsymbol{\theta}_v})\Bigr) - F_v\Bigl(B_v(A_{t-2};{\boldsymbol{\theta}_v})\Bigr) \over 1 - F_v\Bigl(B_v(A_{t-2};{\boldsymbol{\theta}_v})\Bigr)},
\end{equation}
where 
\begin{equation}\label{weight_indeg_from_set}
    B_v(S; \boldsymbol{\theta}_v) = \sum_{S \cap P(v)} b_{u,v}
\end{equation}
is the influence node $v$ with parent edge weights $\boldsymbol{\theta}_v$ receives from a node set $S$.   
Plugging the cdf $F_v(x) = x$ of the uniform distribution into \eqref{trans_prob_glt} gives the activation probability under the LT model stated in Example \ref{ltm_def}, confirming the LT model class is equivalent to the GLT model class with uniformly distributed thresholds.

The following proposition establishes the less obvious relationship between the IC and the GLT models. 
\begin{proposition}\label{ic_is_glt_propos}
   The class of IC models is equivalent to the class of GLT models with all node thresholds distributed as $U_v\sim \operatorname{Exponential(1)}$, that is, with $F_v(x) = 1-e^{-x}$.
\end{proposition}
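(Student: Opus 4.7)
The plan is to prove equivalence in both directions by explicitly constructing a parameter correspondence and showing the per-step activation probabilities coincide on every feasible history. Per Definition \ref{diff_model_subclass_def}, since both model classes satisfy conditional independence of activations given history, it suffices to verify that for each node $v$ and each feasible trace $(D_0,\ldots,D_{t-1})$ the activation probability $\mathbb{P}^t_{\boldsymbol{\theta}}(v\in D_t \mid \cdot)$ under one model matches that under the other.

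First, I would plug $F_v(x) = 1-e^{-x}$ into the GLT activation formula \eqref{trans_prob_glt}. The denominator becomes $e^{-B_v(A_{t-2};\boldsymbol{\theta}_v)}$, and the numerator telescopes to $e^{-B_v(A_{t-2};\boldsymbol{\theta}_v)} - e^{-B_v(A_{t-1};\boldsymbol{\theta}_v)}$, yielding
\begin{equation*}
\mathbb{P}^t_{\boldsymbol{\theta}}(v\in D_t \mid \mathcal{D}_t) \;=\; 1 - e^{-(B_v(A_{t-1};\boldsymbol{\theta}_v) - B_v(A_{t-2};\boldsymbol{\theta}_v))} \;=\; 1 - \prod_{u\in D_{t-1}\cap P(v)} e^{-b_{u,v}},
\end{equation*}
where the last equality uses $A_{t-1}\setminus A_{t-2} = D_{t-1}$ together with the additive form \eqref{weight_indeg_from_set} of $B_v$. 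This is the memorylessness property of the exponential distribution in disguise, and it is the one computation that makes the correspondence work.

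Next, I would exhibit the parameter bijection. Given an IC instance with edge probabilities $p_{u,v}\in[0,1]$, set $b_{u,v} := -\ln(1-p_{u,v}) \in [0,\infty]$; conversely, given GLT weights $b_{u,v}\ge 0$, set $p_{u,v} := 1 - e^{-b_{u,v}}\in[0,1]$. Under either mapping, the expression above matches the IC activation probability $1 - \prod_{u\in D_{t-1}\cap P(v)}(1-p_{u,v})$ from Example \ref{icm_def}. I would briefly note that the GLT constraint $\sum_{u\in P(v)} b_{u,v}\le h_v$ is vacuous here since $h_v = \infty$ for the exponential, so every IC parameter vector lands in the admissible GLT parameter space and vice versa. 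The edge cases $p_{u,v}=1$ (forcing $b_{u,v}=\infty$) can be handled either by excluding them or by interpreting the exponential with infinite weight as deterministic activation.

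Finally, since the activation probabilities agree at every step on every feasible history, equation \eqref{trace_distrib_def_eq} gives equality of the induced trace distributions, so each instance of one class coincides with an instance of the other in the sense of Definition \ref{diff_model_subclass_def}, establishing equivalence of the two classes. I do not anticipate any serious obstacle: the entire argument hinges on the exponential's memorylessness making the GLT conditional ratio collapse to a product, and the rest is bookkeeping of the parameter map and constraints.
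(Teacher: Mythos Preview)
Your proposal is correct and follows essentially the same approach as the paper: both define the bijection $b_{u,v}=-\log(1-p_{u,v})$ and then check that the GLT transition probability with exponential thresholds collapses, via memorylessness, to the IC product form $1-\prod_{u\in D_{t-1}\cap P(v)}(1-p_{u,v})$. Your version is slightly more explicit (working out the ratio in \eqref{trans_prob_glt} directly and noting that $h_v=\infty$ makes the GLT constraint vacuous), but the argument is the same.
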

\begin{proof}
  By Definition \ref{diff_model_subclass_def}, it is enough to construct a bijective mapping between each IC edge probability $p_{u, v} \in [0, 1],$ and the GLT edge weight $b_{u, v} \in [0, \infty]$ for each $(u, v)\in E$.  
   Let $b_{u, v}:=  -\log(1-p_{u,v})$.  Then, if we model node threshold distributions as exponential with parameter 1, by the memoryless property of the exponential  distribution the activation probability of a node $v\in V$ given a feasible history $(D_0, \ldots, D_{t-1})\in\mathcal{F}(G)$ can be written under the GLT model as
    \begin{align*}
        \mathbb{P}^t_{\boldsymbol{\theta}}(v\in D_t |\ D_0, \ldots, D_{t-1}) &= \mathbb{P}\Bigl(U_v \le B_v(A_{t-1}; \boldsymbol{\theta}_v) \mid \ U_v > B_v(A_{t-2}; \boldsymbol{\theta}_v)\Bigr) \\
        &=\mathbb{P}\Bigl(U_v \le B_v(D_{t-1}; \boldsymbol{\theta}_v)\Bigr) = 1 - \prod_{u\in D_{t-1}\cap P(v)} (1-p_{u, v}),
    \end{align*}
    which coincides with the transition kernel of the IC model.
\end{proof}

We have now established that both IC and LT models are subclasses of the GLT model with identically distributed node thresholds. As we argued, a more 
 interesting scenario is allowing this distribution to vary from node to node. A natural approach to make the number of GLT parameters manageable in this case is to choose threshold distributions from a parametric family. For example, if $U_v \sim \operatorname{Beta}(\alpha_v, \beta_v)$, the model has only $|E| + 2|V|$ parameters. If we further assume that the network can be partitioned into communities and nodes within one community follow the same distribution (see Figure \ref{fig:community_beta}), we can further reduce the number of parameters. The following proposition demonstrates that even if each node has an individual set of $r$ parameters but it is not too large compared to the average node in-degree, the GLT has negligibly more parameters than the IC and LT models and many fewer than the Triggering model.  
\begin{proposition}\label{propos_trig_more_params}
   Consider a directed graph $G = (V, E)$ with the average node in-degree $d$. 
   Then for any $0 \le r \le 2^d - d - 1$, it holds that 
   $$|E| + r|V| \le \sum_{v\in V}\left(2^{|P(v)|} - 1\right).$$
   In particular, the inequality is strict with $r=1$ and $d > 2$ and with $r=2$ and $d>2.45$.
\end{proposition}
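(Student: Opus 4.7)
The plan is to reduce the inequality to Jensen's inequality applied to the convex function $f(x) = 2^{x} - 1$. Specifically, since the average in-degree is $d$, we have $\sum_{v \in V} |P(v)| = |E| = d|V|$. The map $x \mapsto 2^{x} - 1$ is strictly convex on $\mathbb{R}_{\ge 0}$, so Jensen's inequality yields
\begin{equation*}
\frac{1}{|V|}\sum_{v \in V}\bigl(2^{|P(v)|} - 1\bigr) \;\ge\; 2^{d} - 1.
\end{equation*}
Multiplying by $|V|$ gives $\sum_{v \in V}(2^{|P(v)|} - 1) \ge |V|(2^{d} - 1)$, so it suffices to check that
\begin{equation*}
|E| + r|V| \;=\; |V|(d + r) \;\le\; |V|(2^{d} - 1),
\end{equation*}
which is exactly the hypothesis $r \le 2^{d} - d - 1$. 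This settles the main inequality.

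For the strict cases, the chain has slack in at least one of its two steps. The second step is strict precisely when $d + r < 2^{d} - 1$, i.e., when $r < 2^{d} - d - 1$. So I would verify each special case by a one-variable calculus argument on $g(d) := 2^{d} - d - 1$: for $r = 1$, we need $g(d) > 1$, equivalently $2^{d} > d + 2$; at $d = 2$ this holds with equality, and $g'(d) = 2^{d}\ln 2 - 1 > 0$ for $d \ge 2$, so $g(d) > 1$ for all $d > 2$. For $r = 2$, we need $g(d) > 2$, equivalently $2^{d} > d + 3$; a direct check at $d = 2.45$ gives $2^{2.45} \approx 5.466 > 5.45$, and monotonicity of $g$ on the relevant range finishes this case. (One can equivalently note that strictness is already guaranteed by Jensen whenever the in-degrees are not all equal, but invoking the slack in the second step avoids any regularity assumption on $G$.)

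The argument is essentially a one-line application of Jensen, so there is no serious obstacle; the only mild bookkeeping is making the convexity step rigorous despite $|P(v)|$ being integers while $d$ need not be (Jensen requires only convexity of $f$, not integrality of its inputs), and handling the two numerical strict-inequality thresholds carefully.
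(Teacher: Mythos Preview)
Your proof is correct and follows essentially the same route as the paper: both use $|E| = d|V|$ and Jensen's inequality for the convex function $x \mapsto 2^x$ (or equivalently $2^x - 1$) to reduce the claim to $d + r \le 2^d - 1$. Your treatment is slightly more complete in that you explicitly verify the two strict cases via monotonicity of $g(d) = 2^d - d - 1$, which the paper states without justification.
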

\begin{proof}
   Noting that  $|E| = d|V|$ and applying Jensen's inequality to the convex function $f(x) = 2^x$, we have 
    $${|E| + r|V| \over \sum_{v\in V}\left(2^{|P(v)|} - 1\right)} \le {(d + r)|V| \over  (2^{d} -1)|V|}  
    \le 1.
    $$
\end{proof}
  \begin{figure}[t!]
     \centering
     \includegraphics[width=9.cm, height=6cm]{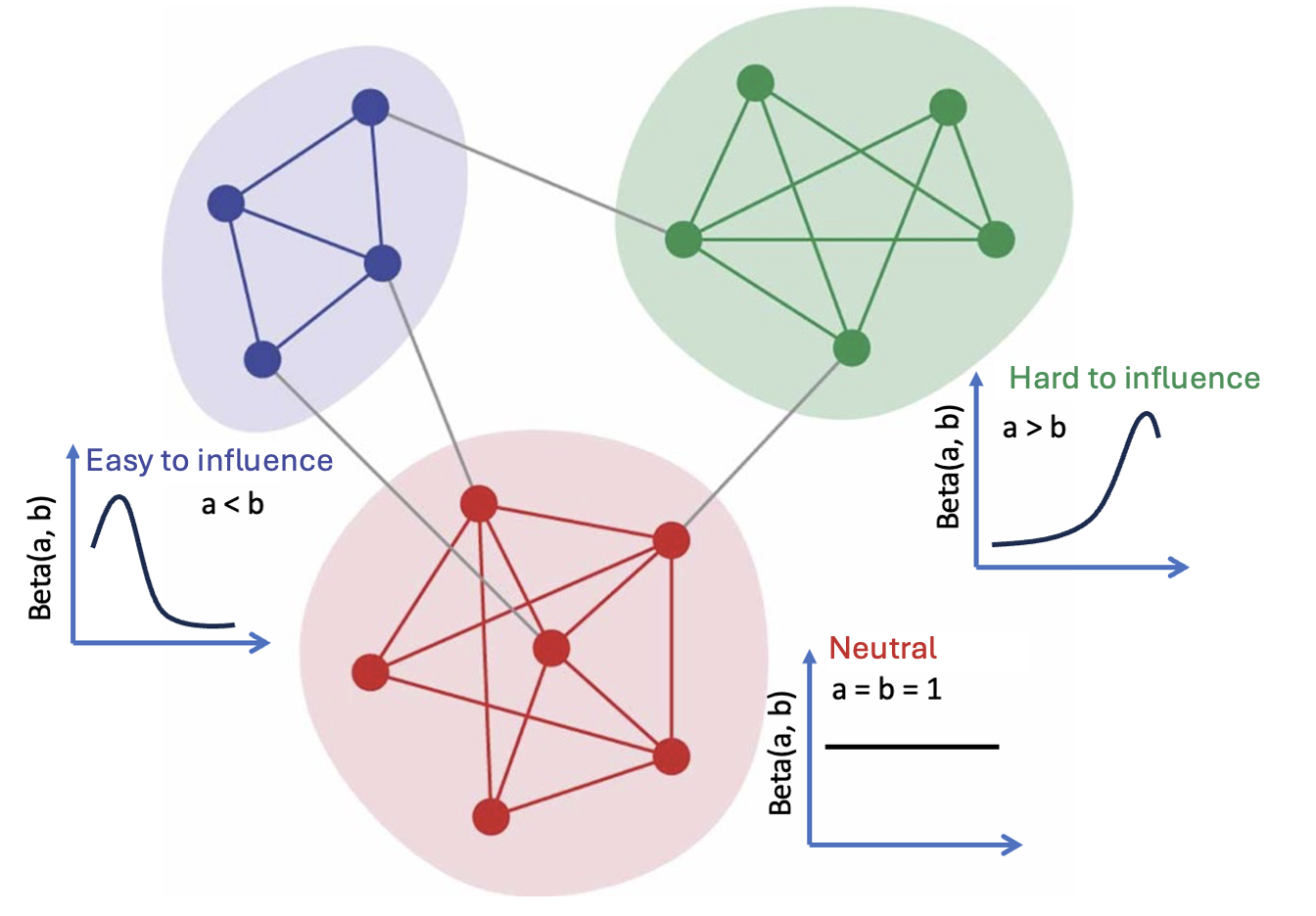}
     \caption{An example network with three communities with different levels of receptiveness to new information, all modeled with the Beta distribution.} 
           \label{fig:community_beta}
 \end{figure}
 
The next proposition shows that, despite a much smaller number of parameters, the GLT model is not a subclass of the Triggering model. Unsurprisingly, the Triggering model is also not a subclass of the GLT. The proof can be found in Section \ref{gltm_introduce_section_proofs} of the Appendix. 
  
\begin{proposition}\label{glt_vs_trig}
 The GLT model is not a subclass of the Triggering model and vice versa. 
 \end{proposition}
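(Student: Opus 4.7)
The plan is to prove both non-containment directions via explicit small counterexamples on graphs consisting of a single child $v$ with two or three parents, and to compare the single-step activation probabilities
\[
\alpha(S) := \mathbb{P}^1(v\in D_1\mid D_0 = S), \qquad S\subset P(v).
\]
Every single-step trace $(D_0 = S)$ with $S\neq\emptyset$ is feasible, so coincidence of two models on $\mathcal{F}(G)$ in the sense of Definition \ref{diff_model_subclass_def} must in particular imply equal $\alpha(\cdot)$.

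For the direction that GLT is not a subclass of Triggering, the idea is to exploit the fact that under any Triggering model $\alpha(S) = \mathbb{P}(\Gamma_v\cap S\neq \emptyset)$ is a probability of a union of events, hence satisfies the union bound $\alpha(S\cup T)\le \alpha(S)+\alpha(T)$ for disjoint $S,T$. I would construct a GLT instance violating this inequality by using a convex threshold CDF, which makes the GLT activation super-additive in the cumulative weight. Concretely, take $G$ with nodes $\{u_1,u_2,v\}$ and edges $(u_1,v),(u_2,v)$, GLT weights $b_{u_1,v}=b_{u_2,v}=1/2$, and threshold CDF $F_v(x)=x^2$ on $[0,1]$. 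Since $F_v(0)=0$, the kernel \eqref{trans_prob_glt} gives $\alpha(\{u_i\}) = F_v(1/2) = 1/4$ and $\alpha(\{u_1,u_2\}) = F_v(1) = 1$, and $1 > 1/4 + 1/4$ rules out any Triggering representation.

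For the converse direction, I would use a Triggering instance whose activation pattern is incompatible with the GLT in-degree constraint $\sum_{u\in P(v)} b_{u,v}\le h_v$. Take $G$ with nodes $\{u_1,u_2,u_3,v\}$ and edges $(u_i,v)$ for $i=1,2,3$, and triggering distribution $p_{v,\{u_1,u_2\}} = p_{v,\{u_1,u_3\}} = 1/2$ with all other masses zero. Then $\alpha(\{u_1\}) = 1$ (both triggering sets contain $u_1$), $\alpha(\{u_2\}) = \alpha(\{u_3\}) = 1/2$, and $\alpha(\{u_2,u_3\}) = 1$. If any GLT matched these, $\alpha(\{u_1\})=1$ would force $F_v(b_{u_1,v})=1$, so $b_{u_1,v}\ge h_v$ with $h_v<\infty$, which by the in-degree constraint pins $b_{u_2,v}=b_{u_3,v}=0$. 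But then the GLT kernel makes $\alpha(\{u_2\})$ and $\alpha(\{u_2,u_3\})$ depend on the same cumulative weight $0$ (and one can check separately the degenerate edge case $F_v(0)=1$ forces $h_v=0$ and $\alpha\equiv 1$), forcing $\alpha(\{u_2\}) = \alpha(\{u_2,u_3\})$ and contradicting $1/2\neq 1$.

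The proof is not obstructed by any single hard step; the real work is picking the right structural separators. Convexity of the threshold distribution produces super-additive activations that a probability-of-union function cannot attain, while GLT's additivity in the cumulative weight combined with the hard normalization $\sum_u b_{u,v}\le h_v$ leaves no room to accommodate Triggering instances in which one parent already saturates activation on its own yet other parents still contribute non-trivially.
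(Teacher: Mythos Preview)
Your proof is correct and handles both directions, whereas the paper's appendix only spells out the first direction (GLT not a subclass of Triggering). For that direction, your approach is genuinely different and more conceptual: you exploit the structural fact that the Triggering activation function $\alpha(S)=\mathbb{P}(\Gamma_v\cap S\neq\emptyset)$ is a coverage probability and hence subadditive on disjoint sets, then pick a strictly convex threshold cdf $F_v(x)=x^2$ on a two-parent star to violate this. The paper instead works on a three-parent star with a \emph{concave} $F$ (increments $0.5,0.35,0.15$), for which subadditivity holds, and shows directly that the $7\times 7$ linear system determining the triggering-set distribution has its unique solution outside the probability simplex. Your route is shorter and identifies a clean structural separator; the paper's example is more delicate in that it exhibits a GLT instance that satisfies subadditivity yet still admits no Triggering representation, which incidentally shows that subadditivity is necessary but not sufficient for a threshold function to be Triggering-representable. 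For the converse direction your construction is fine; the one place worth tightening is the $h_v=\infty$ case, which you dispatch implicitly by noting $b_{u_1,v}\in\mathbb{R}$ is finite while $F_v(x)<1$ for all finite $x$ when the support is unbounded---it would help to state this explicitly.
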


  Figure \ref{fig:model_relation} summarizes the relationships between all the diffusion models discussed. 
  We can think of the GLT and the Triggering model as two alternative flexible generalizations, both encompassing the popular IC and LT models, with GLT being preferable for statistical estimation and inference due to its better balance between flexibility and the number of parameters.
  
  \begin{figure}[tbh!]
    \centering
    \includegraphics[scale=0.4]{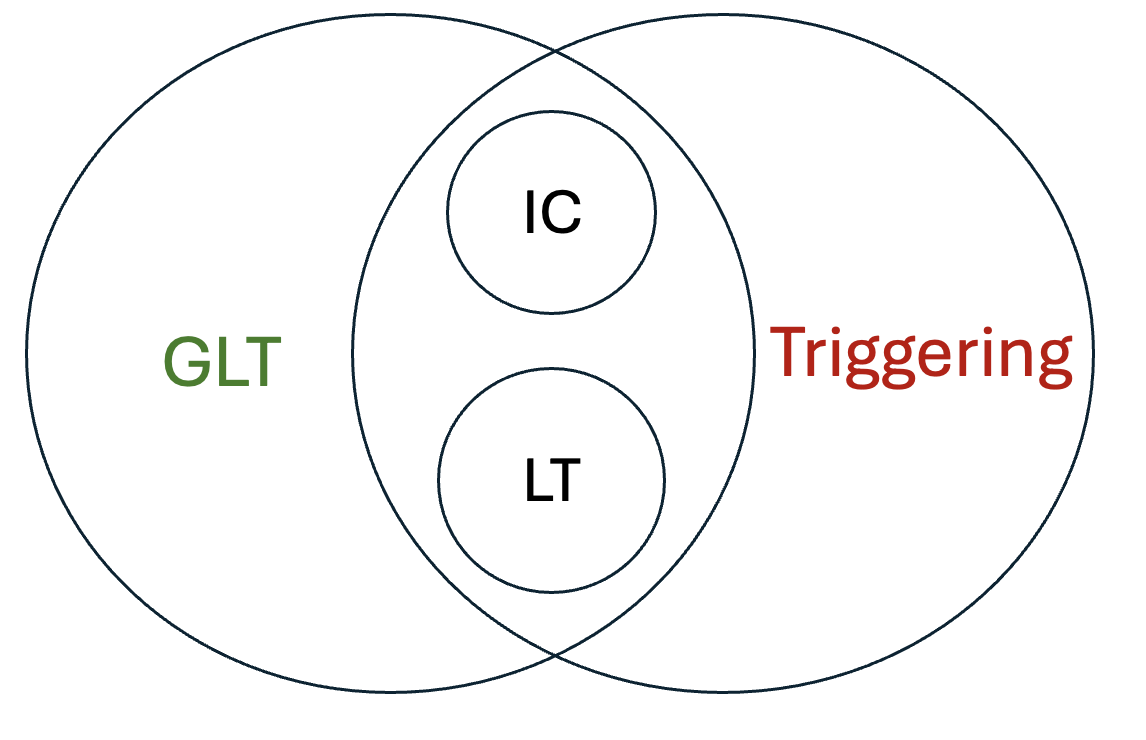}
    \caption{Relationship between different diffusion models.}
    \label{fig:model_relation}
\end{figure}

\section{Estimation and theoretical properties under the GLT model}
    \label{ch:methods2}
	
In this section, we study the GLT model on a given fixed network $G = (V, E)$. We present the necessary and sufficient conditions for identifiability of the weights, derive a constrained maximum likelihood estimator (MLE) of the weights from fully observed propagation traces and a finite sample bound for its error, and show that the estimator is asymptotically normal. We then extend the estimation procedure to the case of partially observed traces and unknown node threshold distributions. 

\subsection{Identifiability for the GLT model}\label{identif_section}
We begin by studying identifiability of the GLT edge weights with respect to the trace distribution family $\{\mathbb{P}_{\boldsymbol{\theta}}, \boldsymbol{\theta}\in \Theta\}$ induced by the GLT model. Plugging the GLT transition probability from \eqref{trans_prob_glt} into the general trace likelihood in \eqref{trace_distrib_def_eq}, we can conveniently express the trace distribution as follows:
\begin{align}\label{individ_trace_prob}
\mathbb{P}_{\boldsymbol{\theta}}(\mathcal{D})& = \mathbb{P}^0(D_0) \prod_{v \in C(A_{T})}  \left\{1 - F_v\left[B_{v}\left(A_T; \boldsymbol{\theta}_v\right)\right]\right\}
\prod_{t=0}^{T - 1} \prod_{v \in D_{t+1}}  \left\{F_v\left[B_{v}\left( A_t; \boldsymbol{\theta}_v\right)\right] - F_v\left[B_{v}\left( A_{t-1}; \boldsymbol{\theta}_v\right)\right]\right\}.
\end{align}
\noindent Here, the first term does not depend on $\boldsymbol{\theta}$ by definition of the seeded diffusion model. The second term represents nodes that were not activated but have at least one active parent in the trace. 
The third term captures activated nodes, i.e., nodes in $A_{T} \setminus D_{0}$.

For reasons that we will elaborate on later, it is hard to establish identifiability unless the probability of any feasible trace with $\mathbb{P}^0(D_0) > 0$ in \eqref{individ_trace_prob} is bounded away from zero. To guarantee that, we first need to assume that the threshold cdf $F_v$ is strictly monotone for the nodes $v$ that may appear in the trace likelihood, that is, the (child) nodes having at least one parent:
\begin{equation}\label{sink_nodes}
    V_c = \{v \in V: P(v) \ne \emptyset \}.
\end{equation}
\begin{assumption}[Invertible cdf]\label{assump1} The threshold cdf $F_v$ of every node $v\in V_c$ is strictly monotone (and thus invertible) on its support $[0, h_v]$.
\end{assumption}
We also need to ensure that edge weights are strictly positive and that the parent weights of every node $v\in V_c$ sum to less than $h_v$. Thus we  truncate the parameter space as follows: 
\begin{equation}\label{trunc_glt_param_space}
 \tilde{\Theta} = \left\{\boldsymbol{\theta} \in \mathbb{R}^{|E|}  \text{ s.t.\ for all } v\in V, \ \|\boldsymbol{\theta}_v\|_1 \le \gamma \text{ and } \boldsymbol{\theta}_v \ge \varepsilon\right\},
\end{equation}
where $\varepsilon > 0$ and $\max_{v\in V_c}|P(v)|\varepsilon <\gamma < \min_{v\in V_c}h_v$ are some universal constants. Together with Assumption \ref{assump1},  $\boldsymbol{\theta}_v \ge \varepsilon$ will ensure that any node has a positive chance to activate its child even if none of the other parents are activated, and $ \|\boldsymbol{\theta}_v\|_1 \le \gamma$ will ensure that even if all parents of a node are activated, there is a positive probability that the node will not be activated.  

 The following lemma states these restrictions on the parameter space and Assumption \ref{assump1} are sufficient for the likelihood in \eqref{individ_trace_prob} to be positive for every feasible trace starting with a seed set from the support of $\mathbb{P}^0$.  
\begin{lemma} \label{pos_trace_prob_lemma}
    Under Assumption \ref{assump1}, 
    any feasible trace $\mathcal{D} \in\mathcal{F}(G)$ with $\mathbb{P}^0(D_0) > 0$ satisfies $\mathbb{P}_{\boldsymbol{\theta}}(\mathcal D)> 0$ for each $\boldsymbol{\theta}\in \Tilde{\Theta}$.
\end{lemma}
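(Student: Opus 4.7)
The plan is to inspect the three pieces of the factorized trace probability in \eqref{individ_trace_prob} one at a time and verify that each is strictly positive under Assumption \ref{assump1} together with the truncation in \eqref{trunc_glt_param_space}. The first factor $\mathbb{P}^0(D_0)$ is positive by assumption, so it suffices to handle the two product terms.

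For the ``unactivated but reachable'' factor $\prod_{v \in C(A_T)} \{1 - F_v[B_v(A_T;\boldsymbol{\theta}_v)]\}$, I would argue as follows. For any $v \in C(A_T) \subset V_c$, the influence received satisfies $B_v(A_T;\boldsymbol{\theta}_v) = \sum_{u\in A_T\cap P(v)} b_{u,v} \le \|\boldsymbol{\theta}_v\|_1 \le \gamma$. Since $\gamma < h_v$ by the definition of $\tilde{\Theta}$, and $F_v$ is strictly monotone on $[0,h_v]$ with $F_v(h_v)=1$, one obtains $F_v[B_v(A_T;\boldsymbol{\theta}_v)] \le F_v(\gamma) < 1$, making each factor strictly positive.

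For the ``activation'' factor $\prod_{t=0}^{T-1}\prod_{v\in D_{t+1}}\{F_v[B_v(A_t;\boldsymbol{\theta}_v)] - F_v[B_v(A_{t-1};\boldsymbol{\theta}_v)]\}$, the key is to exploit the feasibility of $\mathcal{D}$. By condition (3) of Definition \ref{feas_trace_def}, every $v\in D_{t+1}$ has at least one parent $u \in D_t \cap P(v)$, so $D_t\cap P(v)\neq\emptyset$. Since $A_t = A_{t-1}\sqcup D_t$, this yields
\begin{equation*}
B_v(A_t;\boldsymbol{\theta}_v) - B_v(A_{t-1};\boldsymbol{\theta}_v) = \sum_{u\in D_t\cap P(v)} b_{u,v} \ge \varepsilon > 0,
\end{equation*}
using the lower bound $\boldsymbol{\theta}_v \ge \varepsilon$ from \eqref{trunc_glt_param_space}. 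Both $B_v(A_{t-1};\boldsymbol{\theta}_v)$ and $B_v(A_t;\boldsymbol{\theta}_v)$ lie in $[0,\gamma]\subset[0,h_v)$ by the same norm bound as above, so they live on the strict monotonicity interval of $F_v$ guaranteed by Assumption \ref{assump1}. Strict monotonicity then upgrades the strict inequality $B_v(A_t;\boldsymbol{\theta}_v) > B_v(A_{t-1};\boldsymbol{\theta}_v)$ to $F_v[B_v(A_t;\boldsymbol{\theta}_v)] > F_v[B_v(A_{t-1};\boldsymbol{\theta}_v)]$, making each factor positive.

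Combining the three observations, every factor in \eqref{individ_trace_prob} is strictly positive, hence $\mathbb{P}_{\boldsymbol{\theta}}(\mathcal{D})>0$. There is no real obstacle here; the proof is essentially a bookkeeping exercise whose only subtlety is recognizing that the truncation constants $\varepsilon$ and $\gamma$ in \eqref{trunc_glt_param_space} were tailored precisely so that (i) newly arriving parents contribute a strictly positive jump in influence, and (ii) the total influence never saturates the support of any threshold distribution. Both bounds, together with invertibility of $F_v$, are exactly what the three types of factors require.
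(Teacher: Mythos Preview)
Your proof is correct and follows essentially the same approach as the paper: you examine each factor in \eqref{individ_trace_prob} and combine trace feasibility (to guarantee $D_t\cap P(v)\neq\emptyset$), the truncation bounds $\varepsilon$ and $\gamma$ on $\tilde{\Theta}$, and the strict monotonicity of $F_v$ from Assumption~\ref{assump1} to conclude that every factor is strictly positive. Your write-up is, if anything, slightly more explicit than the paper's in spelling out why the arguments of $F_v$ remain in the monotonicity interval.
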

\noindent The proof of this lemma and all other results in this section can be found in Section \ref{identif_section_proofs} of the Appendix.

Lemma \ref{pos_trace_prob_lemma} does not guarantee that any node can appear in a feasible trace with positive probability. Indeed, if the seed sets from the support of $\mathbb{P}^0$ are not sufficiently rich, some nodes may be not reachable by any trace, making their parent weights non-identifiable. We say a node $u \in V$ is {\it reachable} if either $\mathbb{P}^0(u\in D_0) > 0$ or there is a directed path to $u$ from at least one $v\in V$ with $\mathbb{P}^0(v\in D_0) > 0$. The following proposition formalizes this intuition as a necessary identifiability condition.
\begin{proposition}\label{all_reachable_nodes_propos}  If $\{\mathbb{P}_{\boldsymbol{\theta}}, \boldsymbol{\theta} \in \Tilde{\Theta}\}$ is identifiable, all non-isolated nodes in $V$ are reachable.  
\end{proposition}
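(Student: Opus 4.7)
My plan is to prove the contrapositive: assuming some non-isolated node $u \in V$ is unreachable, I will exhibit two distinct parameters $\boldsymbol{\theta}^{(1)}, \boldsymbol{\theta}^{(2)} \in \tilde{\Theta}$ inducing the same trace distribution, contradicting identifiability.

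The first step is the structural observation that an unreachable $u$ is absent from every trace of positive probability. Indeed, by Lemma \ref{pos_trace_prob_lemma}, any feasible trace whose seed lies in the support of $\mathbb{P}^0$ has positive probability under $\mathbb{P}_{\boldsymbol{\theta}}$; if $u$ were active at some time $t \geq 1$ in such a trace, then unrolling the activation back through Definition \ref{feas_trace_def}(3) would produce a path from the seed to $u$, contradicting unreachability, while the case $u \in D_0$ is ruled out by $\mathbb{P}^0(u \in D_0)=0$. The same path-extension argument shows that $u \notin C(A_T)$ for any such trace, since any parent $w$ of $u$ lying in $A_T$ would itself be reachable and thus force $u$ reachable.

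The second step extracts from this that the edge weights touching $u$ do not enter the likelihood. In the factorization \eqref{individ_trace_prob}, the sub-vector $\boldsymbol{\theta}_u$ appears only through the factor indexed by $v=u$ (in both the $C(A_T)$ product and the $D_{t+1}$ products), which is absent from every positive-probability trace. Symmetrically, for any child $v$ of $u$, the weight $b_{u,v}$ enters $B_v(S;\boldsymbol{\theta}_v)$ only via the term $b_{u,v}\mathbbm{1}\{u \in S\}$, whose indicator is always zero because $u$ never becomes active. Since $u$ is non-isolated, at least one of these two classes of coordinates is nonempty, so some coordinate of $\boldsymbol{\theta}$ is entirely absent from the log-likelihood.

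The third step is to exhibit two distinct admissible values of this coordinate. The constraints in \eqref{trunc_glt_param_space} impose $|P(v)|\varepsilon < \gamma$ for every $v \in V_c$, leaving strictly positive slack: starting from any $\boldsymbol{\theta}^{(1)} \in \tilde\Theta$ in which the relevant parent-weight vector has all entries at $\varepsilon$, we can raise the target coordinate up to $\gamma - (|P(v)|-1)\varepsilon > \varepsilon$ while remaining in $\tilde\Theta$. Applying this to $\boldsymbol{\theta}_u$ in the case $u \in V_c$, or to $b_{u,v}$ inside $\boldsymbol{\theta}_v$ for some child $v$ in the case $P(u) = \emptyset$, produces $\boldsymbol{\theta}^{(2)} \neq \boldsymbol{\theta}^{(1)}$ in $\tilde\Theta$ with $\mathbb{P}_{\boldsymbol{\theta}^{(1)}} = \mathbb{P}_{\boldsymbol{\theta}^{(2)}}$. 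The only real obstacle is bookkeeping rather than anything conceptual: one must verify that the perturbed $\boldsymbol{\theta}^{(2)}$ simultaneously satisfies the componentwise lower bound and the $\ell_1$ upper bound, and the strict inequality built into the definition of $\tilde\Theta$ is exactly what makes this possible.
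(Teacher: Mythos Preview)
Your proof is correct and follows essentially the same contrapositive strategy as the paper's own proof: show that an unreachable non-isolated node $u$ can never appear in any positive-probability trace (neither as an activated node nor as a child of an activated node), conclude that some edge weight incident to $u$ is absent from the likelihood \eqref{individ_trace_prob}, and then perturb that coordinate within $\tilde\Theta$ to break identifiability. The paper packages the first step through the auxiliary Lemma~\ref{reachability_equiv_lemma}, whereas you unfold the path-backtracking argument from Definition~\ref{feas_trace_def}(3) directly; the case split (perturb $b_{u,v}$ for a child $v$ of $u$, or $b_{w,u}$ for a parent $w$ of $u$) and the use of the slack $|P(v)|\varepsilon < \gamma$ are the same in substance.
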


 \begin{figure}[t!]
\centering  \includegraphics[scale=0.3]{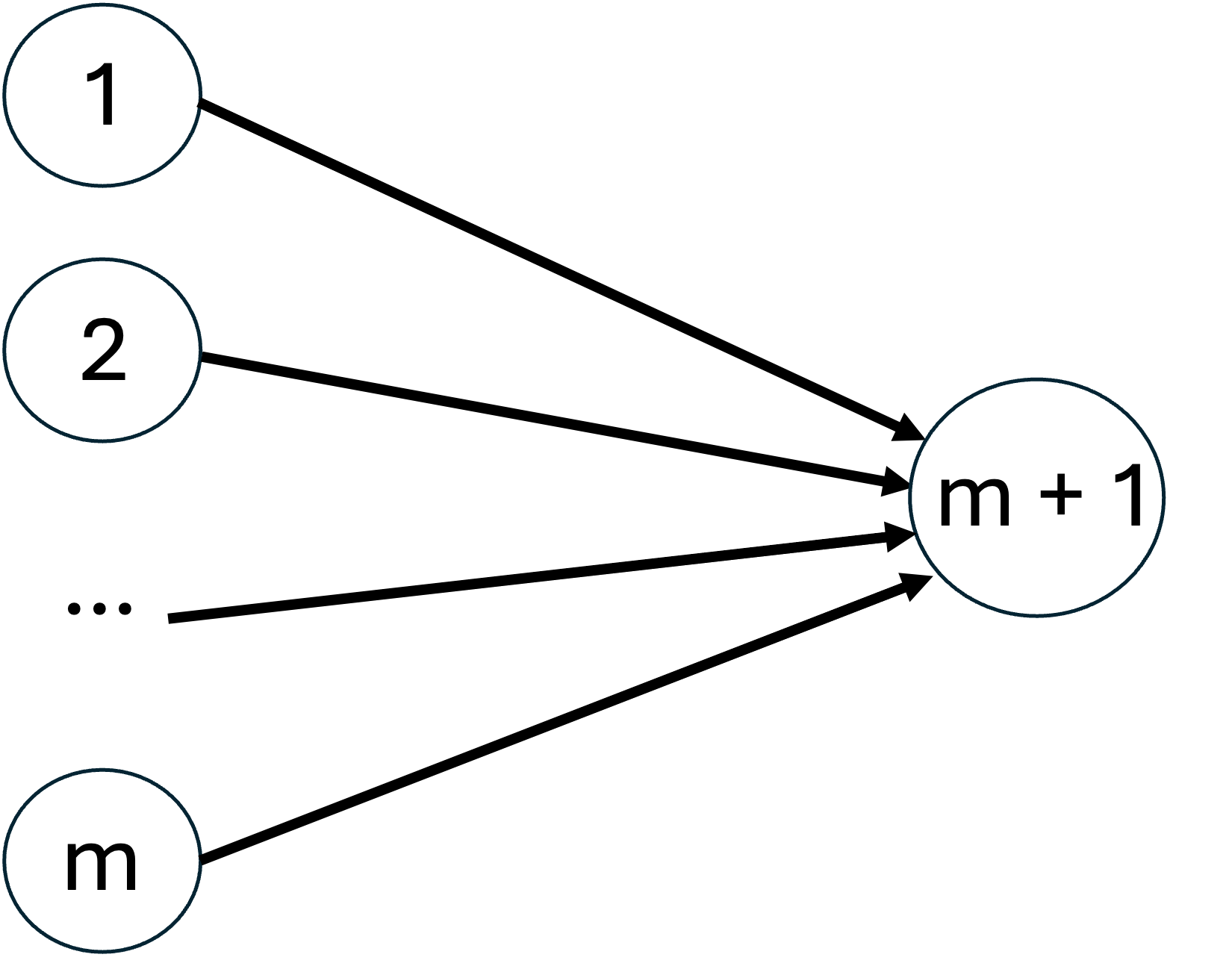}  \caption{A star graph of in-degree $m$:  $V=\{1, \ldots, m+1\}$,   $E=\{(1, m+1) \ldots, (m, m+1)\}$. }
   \label{fig:instar_graph}
\end{figure}

The following example demonstrates that reachability by itself is not sufficient for identifiability. 
\begin{example} Consider the star graph in Figure \ref{fig:instar_graph} with $m=2$.  Its edge set $E = \{(1,3), (2,3) \}$ has weights $b_{13}$ and $b_{23}$, respectively.  
 Fix the seed set $D_0 =\{1,2\}$, so that $\mathbb{P}^0(\{1, 2\}) = 1$.   Any seeded GLT model with this $\mathbb{P}^0$ induces a trace distribution that is a function of $b_{13} + b_{23}$:
$$\mathbb{P}(\mathcal{D}) =\begin{cases}
    F_3(b_{13} + b_{23}), & \text{ node } 3 \in D_1, \\
    1 - F_3(b_{13} + b_{23}), & \text{ node 3} \notin D_1, 
\end{cases} $$ 
and therefore $b_{13}$ and $b_{23}$ are not individually identifiable, only their sum is.   It is easy to verify that if the support of $\mathbb{P}^0$ includes at least two distinct subsets of $\{1, 2\}$, the weights are identifiable. More formally, there needs to exist $S_j \subseteq \{1, 2\}, j=1, 2$ with $\mathbb{P}^0(S_j) > 0$ such that the $2 \times 2$ matrix $X = [\mathbf{1}(i \in S_j)]_{i, j=1}^2$ has full rank.
\end{example}
The condition on $\mathbb{P}^0$ can be directly extended to a star graph of arbitrary in-degree $m$,  requiring the existence of $S_1, \ldots, S_m \subseteq \{1, \ldots, m\}$ such that an analogous $m\times m$ matrix $X$ is of full rank. For graphs with more than one child node, it turns out that it is necessary and sufficient to require a similar condition for each child node $v$, with the only difference that the corresponding matrix $X_v$ is now constructed using parent subsets $S_j$ that can appear within any active set $D_t$ preceding activation of $v$, not just the seed $D_0$. We formally state this condition in the following theorem.  

\begin{theorem}\label{identif_theorem}
    Under Assumption \ref{assump1}, $\{\mathbb{P}_{\boldsymbol{\theta}},\boldsymbol{\theta}\in \Tilde{\Theta}\}$ is identifiable if and only if for each child node $v\in V_c$ with  $P(v) = \{u_1, \ldots, u_m\}$, there exist $S_1,\ldots, S_{m} \subseteq P(v)$ such that 
    %
    \begin{enumerate}
        \item 
        For each $j=1,\ldots, m$, there is a feasible trace $(D_0^{(j)}, \ldots, D_{t_j}^{(j)}) \in \mathcal{F}(G)$ with $\mathbb{P}^0(D_0^{(j)}) > 0$, $v\notin A_{t_j}^{(j)}$, and $D_{t_j}^{(j)} \cap P(v) = S_j$.
        \item The matrix $\tilde{X}_v=[\mathbf{1}(u_i \in S_j)]_{i,j = 1}^m$ is invertible.
    \end{enumerate}  
\end{theorem}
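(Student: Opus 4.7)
I would split the argument into sufficiency and necessity, both organized around the key structural observation that, via the factorization \eqref{individ_trace_prob}, $\boldsymbol{\theta}_v$ enters the trace likelihood only through values of $F_v(B_v(A_s;\boldsymbol{\theta}_v))$ evaluated at parent-activation sets $A_s \cap P(v)$ that arise at times $s$ with $v \notin A_s$. The plan is to show that under the stated condition these values can be read off from the trace distribution at enough points to pin down $\boldsymbol{\theta}_v$ (using invertibility of $F_v$ and of $\tilde X_v$), and conversely that if the condition fails there is a nonzero perturbation of $\boldsymbol{\theta}_v$ that leaves all such values invariant.

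For sufficiency, I would fix $v \in V_c$ and take the $m$ feasible traces $\mathcal{D}^{(j)}$ supplied by the hypothesis. The conditional probability $\mathbb{P}_{\boldsymbol{\theta}}(v \in D_{s+1} \mid D_0=D_0^{(j)},\ldots,D_s=D_s^{(j)})$ is a ratio of two sums of trace probabilities, both strictly positive by Lemma \ref{pos_trace_prob_lemma}, hence determined by the trace distribution alone; so it agrees under $\boldsymbol{\theta}_1$ and $\boldsymbol{\theta}_2$. Writing $\phi_k(s) = F_v(B_v(A_s^{(j)};\boldsymbol{\theta}_{k,v}))$ and rearranging \eqref{trans_prob_glt}, I obtain $\phi_k(s) = \phi_k(s-1) + (1-\phi_k(s-1))\mathbb{P}^{s+1}_{\boldsymbol{\theta}_k}(v \in D_{s+1}\mid \cdot)$, so an induction on $s$ (with base $\phi_k(-1)=F_v(0)$, and trivial steps when $D_s^{(j)}\cap P(v)=\emptyset$) yields $\phi_1(s)=\phi_2(s)$ for all $s\le t_j$. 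Here I use $\phi_k(s-1)<1$, which follows from $\|\boldsymbol{\theta}_v\|_1\le \gamma < h_v$ and the strict monotonicity of $F_v$ (Assumption \ref{assump1}). Inverting $F_v$ and subtracting the resulting identities at $s=t_j$ and $s=t_j-1$ (or just reading off $s=0$ when $t_j=0$) gives $\mathbf{1}_{S_j}^{\top}(\boldsymbol{\theta}_{1,v}-\boldsymbol{\theta}_{2,v})=0$; stacking over $j=1,\ldots,m$ yields $\tilde X_v^{\top}(\boldsymbol{\theta}_{1,v}-\boldsymbol{\theta}_{2,v})=0$, and invertibility of $\tilde X_v$ concludes.

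For necessity, I would contrapose. If the condition fails at some $v$, the indicator vectors of the family $\mathcal{S}_v := \{D_t \cap P(v):(D_0,\ldots,D_t)\in \mathcal{F}(G),\ \mathbb{P}^0(D_0)>0,\ v\notin A_t\}$ span a proper subspace of $\mathbb{R}^{|P(v)|}$, so some nonzero $\boldsymbol{\lambda}$ satisfies $\mathbf{1}_S^{\top}\boldsymbol{\lambda}=0$ for every $S \in \mathcal{S}_v$. Choosing $\boldsymbol{\theta}$ in the interior of $\tilde\Theta$ and setting $\boldsymbol{\theta}' = \boldsymbol{\theta}$ except $\boldsymbol{\theta}'_v = \boldsymbol{\theta}_v + \epsilon\boldsymbol{\lambda}$ for $\epsilon>0$ small enough to keep $\boldsymbol{\theta}' \in \tilde\Theta$, every $\boldsymbol{\theta}_v$-dependent factor in \eqref{individ_trace_prob} is of the form $F_v(B_v(A_s;\cdot))$ with $v \notin A_s$, and $A_s \cap P(v)$ is the disjoint union of $D_r \cap P(v) \in \mathcal{S}_v$ over $r\le s$; so the perturbation leaves every such $B_v$ value invariant, hence $\mathbb{P}_{\boldsymbol{\theta}}=\mathbb{P}_{\boldsymbol{\theta}'}$ on $\mathcal{F}(G)$ while $\boldsymbol{\theta}\ne\boldsymbol{\theta}'$, violating identifiability.

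The main obstacle I anticipate is the decoupling step in the sufficiency direction: the joint trace likelihood mixes all $\{\boldsymbol{\theta}_u\}_{u\in V_c}$, so a naive comparison of $\mathbb{P}_{\boldsymbol{\theta}_1}(\mathcal{D})=\mathbb{P}_{\boldsymbol{\theta}_2}(\mathcal{D})$ does not immediately isolate $\boldsymbol{\theta}_v$. The resolution is to pass to the one-step-ahead conditional activation probability of a single node $v$, which by the conditional-independence property of the GLT depends only on $\boldsymbol{\theta}_v$. A secondary technicality is avoiding division by zero in that inductive inversion, which is exactly why the theorem is stated over the truncated parameter space $\tilde\Theta$ (the bound $\gamma<h_v$ guarantees $\phi_k(s-1)<1$ throughout the induction).
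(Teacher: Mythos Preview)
Your proposal is correct and follows essentially the same approach as the paper: for sufficiency you extract the one-step conditional activation probabilities of $v$ from the trace distribution and use them to recover $F_v(B_v(A_s;\boldsymbol{\theta}_v))$ (your forward induction is equivalent to the paper's telescoping product of the survival ratios $\frac{1-F_v(B_v(A_{t_{jk}};\boldsymbol{\theta}_v))}{1-F_v(B_v(A_{t_{jk}-1};\boldsymbol{\theta}_v))}$), then invert $F_v$, difference, and apply invertibility of $\tilde X_v$; for necessity you perturb $\boldsymbol{\theta}_v$ along a null vector of the span of $\{\mathbf{1}_S:S\in\mathcal{S}_v\}$ and verify all trace probabilities are preserved, exactly as in the paper.
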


 \begin{remark}
     The identifiability condition in Theorem \ref{identif_theorem} implies the necessary reachability condition from Proposition \ref{all_reachable_nodes_propos}: if a source node $u$ of an edge $(u,v)$ is unreachable, the matrix $\tilde{X}_v$ is not invertible, since it has a row of zeros $[\mathbf{1}(u\in S_j)]_{j=1}^m$ for any choice of $S_j$. 
 \end{remark}

To conclude this section, we would like to stress the novelty of the result in Theorem \ref{identif_theorem}.  Most previously proposed methods for estimating the parameters of a diffusion model \citep{learnability_influence,  learning_diffusion_in_cont_time, daneshmand2014estimatingdiffusionnetworkstructures} assume that any node can appear in the seed set with a positive probability, which is much more  restrictive, as in many applications not all nodes can be directly influenced.   To the best of our knowledge, this paper is the first to derive both necessary and sufficient identifiability conditions for the parameters of a diffusion model.

\subsection{Weight estimation under the GLT model}
\label{weight_estim_section}

Next, we derive a maximum likelihood estimator for the weights in the GLT model, given a  collection $\mathbb{D}$ of $N$ observed (and therefore feasible) propagation traces, 
\begin{equation}\label{trace_eq}
    \mathbb{D} = \{\mathcal{D}_n :=(D_1 ^{(n)}, \ldots, D_{T_n} ^{(n)}) \ | \ n=1, \ldots, N\},
\end{equation}
where $T_n$ is the number of time steps in trace $\mathcal{D}_n$. 
For now, we assume that all threshold distributions $F_v$ are known and postpone the discussion of estimating the threshold distribution to Section \ref{est_threshold_params}.  

We assume that the trace collection $\mathbb{D} = \{\mathcal{D}_n| \ n= 1,\ldots, N\}$ is i.i.d., by which we mean 
\begin{enumerate}
    \item[(a)] Seed sets $\{D_0^{(n)}| \ n=1, \ldots, N\}$ are generated independently from the seed distribution $\mathbb{P}^0$; 
    \item[(b)]  Node thresholds are generated independently for each trace and for each node, with 
    $$\mathcal{U}_n := \left(U_1^{(n)}, \ldots, U_{|V|}^{(n)}\right) \stackrel{\text{iid}} {\sim} (F_1,\ldots, F_{|V|}), \quad n=1,\ldots, N.$$
\end{enumerate}
For an i.i.d.\ trace collection, the parameters can be estimated by 
\begin{equation}\label{opt_prob}
   \hat{\boldsymbol{\theta}} =\arg\max_{\boldsymbol{\theta}\in \tilde{\Theta}} \sum_{n=1}^N L(\mathcal{D}_n|\boldsymbol{\theta}),   
\end{equation} 
where $L(\mathcal{D}_n|\boldsymbol{\theta})$, the log-likelihood of the trace $\mathcal{D}_{n}$, by \eqref{individ_trace_prob}, takes the form 
\begin{align}\label{individ_trace_lik}
L(\mathcal{D}_n| \boldsymbol{\theta})& =  \sum_{v \in C(A_{T_n}^{(n)})} \log \left\{1 - F_v\left[B_v\left(A_{T_n}^{(n)}; \boldsymbol{\theta}_v\right)\right]\right\}\\
&+\sum_{t=1}^{T_n} \sum_{v \in D_{t}^{(n)}} \log \left\{F_v\left[B_v\left(A_{t-1}^{(n)}; \boldsymbol{\theta}_v\right)\right] - F_v\left[B_v\left(A_{t-2}^{(n)}; \boldsymbol{\theta}_v\right)\right]\right\}. \nonumber
\end{align}
Here, we omitted the $\log \mathbb{P}^0(D_0)$ term as it does not depend on $\boldsymbol{\theta}$.
 Note that in \eqref{opt_prob}, we optimize over the truncated space $\Tilde{\Theta}$, for which we need to know the ``slack'' variables $\varepsilon$ and $\gamma$ from Assumption \ref{assump2}. As their values are inaccessible in practice, in our implementation, we set $\varepsilon = 10^{-6}$ and $\gamma = h_v - \varepsilon$ for distributions with $h_v < \infty$ and $\gamma = 10$, otherwise.
 These constraints empirically improved numerical stability compared to optimization over the untruncated space $\Theta$.
 
Examining \eqref{individ_trace_lik}, we see that the trace log-likelihood 
only involves weights $b_{u,v}$ for nodes $v$ that were either activated after time 0 or failed to become activated but had an active parent. 
We will denote this set of ``informative'' nodes in trace $\mathcal{D}$ as
\begin{equation}\label{informative_node_set}
    V_c(\mathcal{D}) := \left[A(\mathcal{D}) \setminus D_{0}^{(n)}\right] \cup C(A(\mathcal{D})).
\end{equation}
Similarly, we can define the set of all informative nodes in the observed trace collection $\mathbb{D}$ as $V_c(\mathbb{D}) := \bigcup_{n=1}^N V_c(\mathcal{D}_n)$.
In principle, this set may still be smaller than $|V_c|$, meaning that the parent weights of some nodes cannot be estimated from the data. But even in this case, with a sufficiently rich trace collection, we expect the total number of parameters in \eqref{opt_prob} to be close to $|E|$, which creates a major computational problem for large networks. 

Fortunately, the optimization problem has a block structure we can use to speed up computations.  
By changing the order of summation, we can rewrite its objective as a sum of terms, each depending only on the parent edges of a single child node:
$$
\sum_{n=1}^N L(\mathcal{D}_n| \boldsymbol{\theta}) = \sum_{v\in V_c(\mathbb{D})} L_v(\boldsymbol{\theta}_v) $$
with 
\begin{align}\label{individ_node_likelihood}
      L_v(\boldsymbol{\theta}_v)  &= \sum_{n:\ v \in C(A_{T_n}^{(n)})} \log \left\{1 - F_v\left[B_v\left( A_{t(v, n)}^{(n)}; \boldsymbol{\theta}_v\right)\right]\right\} \nonumber
    \\
    + & \sum_{n:\ v \in A_{T_n}^{(n)}\setminus D_0^{(n)}} \log \left\{F_v\left[B_v\left( A_{t(v, n)}^{(n)}; \boldsymbol{\theta}_v\right)\right] - F_v\left[B_v\left(A_{t(v, n)-1}^{(n)}; \boldsymbol{\theta}_v\right)\right]\right\},
\end{align}
where we denote the last time that node $v$ is {\it not} active in trace $n$ by $t(v, n):= \arg\max\{t\le T_n: v \notin A_{t}^{(n)} \}$.
Importantly, it is not just the likelihood that conveniently separates into blocks with independent variables, but also the feasibility set $\tilde{\Theta}$ that can be rewritten as a Cartesian product of child node-specific individual parameter spaces 
 $\tilde{\Theta} =\mathop{\otimes}\limits_{v\in V_c}  \tilde{\Theta}_v$
 defined as
\begin{equation}\label{individ_param_space}
    \tilde{\Theta}_v  = \{\boldsymbol{\theta}_v \in \mathbb{R}^{|P(v)|}: \ \boldsymbol{\theta}_v \geq \varepsilon, \ \|\boldsymbol{\theta}_v\|_1 \le \gamma\}.
\end{equation}
Therefore, solving \eqref{opt_prob} is equivalent to maximizing $L_v(\boldsymbol{\theta}_v)$ over $\boldsymbol{\theta}_v \in \tilde{\Theta}_v$ for each $v \in V_c(\mathbb{D})$:
\begin{equation}\label{opt_prob_changed_sum}
    \hat{\boldsymbol{\theta}}_v = \arg\max_{\boldsymbol{\theta}_v \in \tilde{\Theta}_v}  L_v(\boldsymbol{\theta}_v), \quad v \in V_c(\mathbb{D}).
\end{equation}
Each optimization problem in \eqref{opt_prob_changed_sum} now has only $|P(v)|$ variables and $|P(v)| + 1$ affine constraints, allowing for efficient parallelized optimization.

The next natural question is whether a node-specific optimization problem in \eqref{opt_prob_changed_sum} is convex.   The feasible set $\tilde{\Theta}_v$ is a convex simplex, and the arguments of $F_v$ in \eqref{individ_node_likelihood} depend linearly on $\boldsymbol{\theta}_v$. Thus, $L_v$ is a concave function of $\boldsymbol{\theta}_v$ if $\log [F_v(x) - F_v(y)]$ is a concave function on $h_v \ge x > y \ge 0$. For example, if $U_v$ is uniformly distributed on $[0,1]$, as in the standard LT model, then $\log [F_v(x) - F_v(y)] = \log (x - y)$
is concave.  The following proposition demonstrates that it is true for all distributions with log-concave densities, and in particular when $F_v$ is the Beta distribution with parameters $\alpha_v \ge 1$ and $\beta_v \ge 1$.  The proof is given in Section \ref{weight_estim_section_proofs} of the Appendix.

\begin{proposition}\label{logconc}
    The function      $L_v(\boldsymbol{\theta}_v)$  in \eqref{individ_node_likelihood} is concave in $\boldsymbol{\theta}_v$ if $F_v$ has a log-concave density. 
\end{proposition}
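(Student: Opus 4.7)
The plan is to reduce concavity of $L_v$ to concavity of the two elementary building blocks that appear inside $L_v$, namely $x\mapsto \log[1-F_v(x)]$ and $(x,y)\mapsto \log[F_v(x)-F_v(y)]$ on $\{x\ge y\}$. I would then derive log-concavity of $1-F_v$ and log-concavity of the ``interval-probability'' map $(x,y)\mapsto F_v(x)-F_v(y)$ from log-concavity of the density $f_v$ via Prékopa's theorem.

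First, I would note that $B_v(S;\boldsymbol{\theta}_v)=\sum_{u\in S\cap P(v)}b_{u,v}$ is a linear function of $\boldsymbol{\theta}_v$, and that for every trace $n$ we have $A_{t(v,n)-1}^{(n)}\subseteq A_{t(v,n)}^{(n)}$ and all entries of $\boldsymbol{\theta}_v$ are nonnegative, so the argument pair $(x,y)=(B_v(A_{t(v,n)}^{(n)};\boldsymbol{\theta}_v),B_v(A_{t(v,n)-1}^{(n)};\boldsymbol{\theta}_v))$ always lies in $\{x\ge y\ge 0\}$. Since $L_v$ is a finite sum of terms of the form $\log[1-F_v(x(\boldsymbol{\theta}_v))]$ and $\log[F_v(x(\boldsymbol{\theta}_v))-F_v(y(\boldsymbol{\theta}_v))]$ composed with these affine maps, and since composition of a concave function with an affine map preserves concavity, it will suffice to prove concavity of $x\mapsto\log[1-F_v(x)]$ on $[0,h_v)$ and joint concavity of $(x,y)\mapsto\log[F_v(x)-F_v(y)]$ on $\{0\le y<x\le h_v\}$.

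Second, I would apply Prékopa's theorem, which states that if $H:\mathbb{R}^{k+1}\to[0,\infty)$ is log-concave, then the marginal $\int H(\mathbf{z},t)\,dt$ is log-concave in $\mathbf{z}$. Writing
\[
F_v(x)-F_v(y)=\int_{\mathbb{R}}\mathbf{1}\{y\le t\le x\}\,f_v(t)\,dt,
\]
the integrand is a product of the indicator of the convex set $\{(x,y,t):y\le t\le x\}$ (which is log-concave as the indicator of a convex set) and the log-concave density $f_v(t)$, hence log-concave jointly in $(x,y,t)$. Prékopa's theorem then yields log-concavity of $F_v(x)-F_v(y)$ on $\{x\ge y\}$, i.e.\ joint concavity of $\log[F_v(x)-F_v(y)]$ there. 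The same argument applied to $1-F_v(x)=\int\mathbf{1}\{t\ge x\}f_v(t)\,dt$ gives log-concavity of $1-F_v$.

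Combining these two concavity statements with the affine-composition observation from the first step, each summand in \eqref{individ_node_likelihood} is concave in $\boldsymbol{\theta}_v$; summing preserves concavity, so $L_v$ is concave on the convex set $\tilde{\Theta}_v$, proving the proposition. The only delicate point, and the step I would be most careful about, is the joint log-concavity argument for $F_v(x)-F_v(y)$: one must verify that the indicator of $\{y\le t\le x\}$ is genuinely log-concave as a function on $\mathbb{R}^3$ (it is, since its support is a convex cone, and one adopts the convention $\log 0=-\infty$) so that Prékopa applies. Everything else is routine.
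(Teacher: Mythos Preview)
Your proposal is correct and follows essentially the same route as the paper: represent $F_v(x)-F_v(y)$ as $\int \mathbf{1}\{y\le t\le x\}\,f_v(t)\,dt$, observe that the integrand is jointly log-concave in $(x,y,t)$, and invoke Pr\'ekopa's marginalization theorem. The paper verifies log-concavity of the indicator by a direct pointwise check rather than by citing the convex-support fact, but that is a purely stylistic difference; your treatment is in fact slightly more complete, since you explicitly handle the survival-function term $1-F_v(x)$ via the same device, whereas the paper's proof addresses only the $F_v(x)-F_v(y)$ term and leaves the other case implicit.
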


Note that for a child node $v$ with just one parent $u$, \eqref{individ_node_likelihood} reduces to the log-likelihood of a $\operatorname{Bernoulli}\bigl(F_v(b_{u, v})\bigr)$ random variable, which is concave in $b_{u, v}$ only if $F_v$ is log-concave. Therefore, the log-concavity of the cdf $F_v$ is clearly necessary as well as sufficient for concavity of $L_v(\boldsymbol{\theta}_v)$ for a node $v$ with an arbitrary in-degree.

\begin{remark}  Some special cases of the GLT model may have a different natural parametrization; 
for example, for the IC model the natural parameters are the edge transmission probabilities $p_{u,v}$, which can be expressed in terms of the GLT weights as $p_{u,v} = 1 - \exp(-b_{u,v})$ according to Proposition \ref{ic_is_glt_propos}.   We can always estimate a reparametrized set of parameters by using the plug-in estimators, such as $\hat{p}_{u,v} = 1 - \exp(-\hat{b}_{u,v})$ for the IC model.
\end{remark}


\subsection{Theoretical properties of the GLT weight estimator}\label{consist_section}
In this section, we derive a finite sample bound on the error of the MLE in \eqref{opt_prob_changed_sum} which holds with high probability,  and also the asymptotically normal distribution of the estimator. Since the weight estimation procedure separates into a collection of optimization problems, each involving only the parent weights $\boldsymbol{\theta}_v$ of a given node $v\in V_c$, we will establish the theoretical properties of the estimate $\hat{\boldsymbol{\theta}}_v$ for a given fixed node $v\in V_c$ with $P(v) = \{u_1, \ldots, u_m\}$ and the corresponding ground-truth parameter $\boldsymbol{\theta}_v^*$. The proofs of all the results in this section can be found in Section \ref{consist_section_proofs} of the Appendix.

We begin by establishing the finite sample result.  First, we introduce additional notation to conveniently encode the data used to fit the subproblem of node $v$ in \eqref{opt_prob_changed_sum}. Denote the trace indices where node $v$ is informative by $\mathcal{I}_v:=\{1\le n\le N: \ v\in V_c(\mathcal{D}_n)\}$ and for each $\mathcal{D}_n, n\in\mathcal{I}_v$, denote the time points when $v$ acquired at least one new active parent node by $\mathcal{T}_{v}^{(n)}:= \{0\le \tau\le T_n:  D_{\tau}^{(n)} \cap P(v)\ne \emptyset \}$. For each time point $t\in \mathcal{T}_{v}^{(n)}$, let $x^{(n)}_t := [\mathbf{1}(u_i \in D_t^{(n)}]_{i=1}^m$ be the indicator vector of $v$'s newly active parents at time $t$. Define also the matrix of $x^{(n)}_t$ stacked over $t\in \mathcal{T}_{v}^{(n)}$ as $X_v^{(n)}\in \{0, 1\}^{|\mathcal{T}_{v}^{(n)}|\times m}$ and the further stacked matrix of $X_v^{(n)}, n\in\mathcal{I}_v$ as $X_v \in \{0, 1\}^{N_v \times m}$ with $N_v:=\sum_{n\in \mathcal{I}_v} |\mathcal{T}_{v}^{(n)}|$. 

Importantly, the role of the ``sample size'' in our finite sample bound will be played not by the number of informative traces $|\mathcal{I}_v|$, but by $N_v$, the number of times $v$ had a non-empty newly active parent set across all traces. Note that $1\le |\mathcal{T}_v^{(n)}| \le m$ for any $n\in \mathcal{I}_v$, since $v$ can have no more than $m$ different newly active parent sets throughout a progressive propagation, but should have at least one such set, since the trace is informative. This means that in the best-case scenario, $N_v$ can be up to $m$ times larger than $|\mathcal{I}_v|$. This aligns with other work,  for example, with the finite sample bound for the parent weight estimator derived for the General Cascade model in Theorem 1 of \citep{InferringGraphs2015}. 
Similarly to their framework, we assume that within each informative trace $n\in\mathcal{I}_v$, we observe the propagation history up to the last time point $t(v, n)$ when $v$ is not active, and that the only randomness is in the activation event of the node $v$. Since the matrix $X_v$ essentially encodes all such histories, we condition on $X_v$ whenever we need to emphasize that these histories are observed.


Before we state the main result, we make a couple of additional mild regularity assumptions.  
First, we require that the threshold cdfs are not only invertible as in Assumption \ref{assump1}, but also sufficiently smooth, and, secondly, that the ground-truth parent weights of every node lie in the interior of the corresponding parameter space:
\begin{assumption}\label{assump2} For each $v\in V_c$, the threshold cdf $F_v$ is strictly monotone and three times continuously differentiable, and the ground-truth weights $\boldsymbol{\theta}_v^*$ satisfy
$\boldsymbol{\theta}^*_v > \varepsilon$ and $\|\boldsymbol{\theta}_v^*\|_1 < \gamma$.
\end{assumption}

\noindent We also require that the negative log-likelihood is almost surely non-strictly convex and strictly convex on average for any $\boldsymbol{\theta}_v$ in the parameter space:
\begin{assumption}[Convexity] \label{invertible_hessian_assump}
   The density of $F_v$ is log-concave to ensure that, per Proposition \ref{logconc}, $-L_v(\boldsymbol{\theta}_v)$ is non-strictly convex. Moreover, the expected Hessian of $-L_v(\boldsymbol{\theta}_v)$ 
   should be positive definite everywhere on $\tilde{\Theta}_v$. With compactness of $\tilde{\Theta}_v$ and continuity of the Hessian guaranteed by Assumption \ref{assump2}, this means that there exists $\lambda_{\min} > 0$ such that for every $\boldsymbol{\theta}_v \in \tilde{\Theta}_v$, it holds conditional on $X_v$:
    $$-\mathbb{E}\bigl[{1\over N_v}\nabla^2 L_v(\boldsymbol{\theta}_v)\bigr] \succeq \lambda_{\min}I_{|P(v)|}.
    $$
\end{assumption}
\noindent For the LT, IC, and Beta-GLT models, Assumption \ref{invertible_hessian_assump} can be replaced by a much more intuitive sufficient condition requiring non-degeneracy of $X_v$:
\begin{proposition}\label{ic_and_lt_satify_concavity}
    With $F_v \sim \operatorname{Exponential}(1)$ or $F_v \sim \operatorname{Beta}(\alpha, \beta)$ with $\alpha,\beta \ge 1$, Assumption \ref{invertible_hessian_assump} is satisfied if $X_v$ has a full column rank. 
    Moreover, $\lambda_{\min}$ can be set as the smallest eigenvalue of a Gram matrix ${c_\lambda} X_v^\top X_v / N_v$ 
    where $c_\lambda$ is a constant that depends only on $(F_v, \varepsilon, \gamma)$.
\end{proposition}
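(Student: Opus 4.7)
The plan is to show that, conditional on $X_v$, the expected negative Hessian $-\mathbb{E}[\nabla^2 L_v(\boldsymbol{\theta}_v) \mid X_v]$ factors as $X_v^\top P X_v$ for a block-diagonal positive definite matrix $P$ whose smallest eigenvalue is uniformly bounded below by a constant $c_\lambda(F_v, \varepsilon, \gamma) > 0$. The proposition then follows upon dividing by $N_v$ and invoking full column rank of $X_v$ to obtain $\lambda_{\min}(c_\lambda X_v^\top X_v / N_v) > 0$.

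First I would decompose $L_v = \sum_{n \in \mathcal{I}_v} L_v^{(n)}$ and observe that, conditional on the parent history, each $L_v^{(n)}$ is the log-likelihood of a categorical variable with $K_n + 1 = |\mathcal{T}_v^{(n)}| + 1$ mutually exclusive outcomes, namely that $v$ first activates at one of the $K_n$ trials $\tau_1 < \dots < \tau_{K_n}$ or never activates. These have probabilities $\Delta_k = F_v(z_{\tau_k}) - F_v(z_{\tau_{k-1}})$ (with $z_{\tau_0} := 0$) and $q_{K_n} = 1 - F_v(z_{\tau_{K_n}})$, which sum to $1$, so the standard multinomial Fisher-information identity yields
\begin{equation*}
-\mathbb{E}[\nabla^2 L_v^{(n)} \mid X_v] = \sum_{k=1}^{K_n}\frac{(\nabla \Delta_k)(\nabla \Delta_k)^\top}{\Delta_k} + \frac{(\nabla q_{K_n})(\nabla q_{K_n})^\top}{q_{K_n}}.
\end{equation*}
Substituting $a_{\tau_k} = \sum_{j \le k}x_{\tau_j}^{(n)}$ shows that $\nabla \Delta_k$ and $\nabla q_{K_n}$ all lie in the column span of $(X_v^{(n)})^\top$, so the per-trace contribution factors as $(X_v^{(n)})^\top P_n X_v^{(n)}$ with $P_n = M_n D_n^{-1} M_n^\top$, where $M_n$ is the coefficient matrix of the $\nabla$'s in the $\{x_{\tau_j}^{(n)}\}$ basis and $D_n = \operatorname{diag}(\Delta_1, \ldots, q_{K_n})$.

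For the Exponential case I would exploit memorylessness: $\Delta_k = e^{-z_{\tau_{k-1}}}(1 - e^{-w_k})$ with $w_k = x_{\tau_k}^\top \boldsymbol{\theta}_v$, so $\log \Delta_k$ decomposes into a linear-in-$\boldsymbol{\theta}_v$ piece (whose Hessian vanishes) plus $\log(1 - e^{-w_k})$. Consequently $-\nabla^2 L_v^{(n)}$ is already diagonal in the $\{x_{\tau_k} x_{\tau_k}^\top\}$ basis, and a short calculation shows $P_n$ is diagonal with entries $e^{-\sum_{j<k}w_j}/(e^{w_k}-1)$. Using $w_j \in [\varepsilon, \gamma]$ and $K_n \le |P(v)|$, these are uniformly bounded below by the explicit constant $c_\lambda^{\mathrm{exp}} := e^{-(|P(v)|-1)\gamma}/(e^\gamma - 1)$. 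For the Beta case with $\alpha,\beta \ge 1$, memorylessness fails and $P_n$ is no longer diagonal, so I would argue via compactness. The first $K_n$ columns of $M_n$ form an upper-triangular submatrix with diagonal entries $f_v(z_{\tau_k})$, strictly positive for $\alpha,\beta \ge 1$ and $z_{\tau_k}\in(0,1)$; hence $M_n$ has full row rank and $P_n \succ 0$ at every admissible $(z_{\tau_0},\ldots,z_{\tau_{K_n}}) \in [0,\gamma]^{K_n+1}$. Since $P_n$ depends continuously on these arguments and $K_n \le |P(v)|$, its smallest eigenvalue attains a strictly positive minimum $c_\lambda^{\mathrm{beta}}$ on this compact set, depending only on $(F_v, \varepsilon, \gamma)$.

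Summing over traces yields $-\mathbb{E}[\nabla^2 L_v \mid X_v] \succeq c_\lambda X_v^\top X_v$ with $c_\lambda$ equal to $c_\lambda^{\mathrm{exp}}$ or $c_\lambda^{\mathrm{beta}}$ as appropriate, and dividing by $N_v$ produces the claimed eigenvalue bound. The main obstacle will be the Beta case: unlike the clean diagonal structure available for the Exponential, one must establish uniform strict positive definiteness of $P_n$ indirectly, by combining the upper-triangular structure of $M_n$ with a compactness argument rather than reading off an explicit closed-form bound.
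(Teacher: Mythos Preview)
Your approach is correct and genuinely different from the paper's. The paper works with the two-variable Hessian $H_v(z_1,z_2)=\nabla^2_{z_1,z_2}\log[F_v(z_1)-F_v(z_2)]$: for the Beta case it shows $H_v$ is strictly negative definite, bounds the expected Hessian below by a multiple of $Z_v^\top Z_v$ (the cumulative-sum design), and then invokes a separate lemma relating $\lambda_{\min}(Z_v^\top Z_v)$ to $\lambda_{\min}(X_v^\top X_v)$; for Uniform and Exponential it exploits that $H_v$ has rank one with eigenvector $(1,-1)$, which collapses $z_t-z_{t-1}$ directly to $x_t$ and yields $X_v^\top X_v$ without the auxiliary lemma. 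Your route---the categorical Fisher-information identity followed by the exact factorization $(X_v^{(n)})^\top P_n X_v^{(n)}$---bypasses the $Z_v$ detour entirely and handles all cases through the same $K_n\times K_n$ matrix $P_n$. The upper-triangular structure of $M_n$ plus compactness is a clean replacement for the paper's $2\times2$ eigenvalue check, and your memorylessness shortcut for the Exponential is arguably more transparent than the rank-one observation.

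One small repair is needed: as written, both of your constants depend on $|P(v)|$ (explicitly in $c_\lambda^{\mathrm{exp}}=e^{-(|P(v)|-1)\gamma}/(e^\gamma-1)$, and implicitly in the Beta argument through ``$K_n\le |P(v)|$''), whereas the proposition demands $c_\lambda=c_\lambda(F_v,\varepsilon,\gamma)$ only. The fix is immediate: since $\sum_{j}w_j\le\|\boldsymbol{\theta}_v\|_1\le\gamma$, you have $\sum_{j<k}w_j\le\gamma-\varepsilon$, so $c_\lambda^{\mathrm{exp}}$ can be taken as $e^{-(\gamma-\varepsilon)}/(e^\gamma-1)$; and since each $w_j\ge\varepsilon$ forces $K_n\le\gamma/\varepsilon$, the Beta compactness argument ranges over finitely many values of $K_n$ bounded in terms of $(\varepsilon,\gamma)$ alone.
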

\noindent  Notice that the condition in Proposition \ref{ic_and_lt_satify_concavity} is guaranteed to hold asymptotically if the identifiability condition of Theorem \ref{identif_theorem} is satisfied. Indeed, with positive probability, $X_v$ includes each row of the $m\times m$ invertible identifiability matrix $\tilde{X}_v$ defined in Theorem \ref{identif_theorem}, and thus has the full column rank itself.

Now, we are ready to state the main theoretical result.

\begin{theorem}\label{mle_consistency_theorem}
    Consider the MLE $\hat{\boldsymbol{\theta}}_v$ obtained by solving the optimization problem \eqref{opt_prob_changed_sum} and fix an arbitrary $\delta \in (0,1)$. Under Assumptions \ref{assump2}, \ref{invertible_hessian_assump}, and the assumption of i.i.d.\ traces, $\hat{\boldsymbol{\theta}}_v$ satisfies  the following concentration bound conditional on $X_v$ as long as $N_v \ge {c_0m \over \lambda_{\min}}  \log{2m\over \delta}$:
    \begin{equation} \label{finite_sample_bound_mle_error} \mathbb{P}\Bigl[\|\hat{\boldsymbol{\theta}}_v - \boldsymbol{\theta}_v^*\|_2 \le {C_0\over \lambda_{\min}}\sqrt{{m\over N_v}\log {4m\over\delta}}  \Bigr] \ge 1-\delta.
    \end{equation}
    Here, $C_0, c_0 > 0$ are constants depending only on $(F_v, \varepsilon, \gamma)$.
\end{theorem}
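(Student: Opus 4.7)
The plan is to follow the standard M-estimator consistency template: derive a basic inequality from MLE optimality, control the score by concentration, lower-bound the realized Hessian using Assumption \ref{invertible_hessian_assump} together with matrix concentration, and combine. As a preparatory step, I would first recast $L_v(\boldsymbol{\theta}_v)$ in an equivalent ``hazard'' form that exposes $N_v$ independent observations. The telescoping identity that reduces the product over $\tau \in \mathcal{T}_v^{(n)}$ of conditional activation/survival factors to the single log-likelihood summand in \eqref{individ_node_likelihood} lets me equivalently view the data, conditional on $X_v$, as $N_v$ independent Bernoulli observations $Y_\tau^{(n)} = \mathbf{1}(v \in D_{\tau+1}^{(n)})$ with success probability given by the conditional hazard in \eqref{trans_prob_glt}. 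This reformulation means the score $\nabla L_v(\boldsymbol{\theta}_v^*)$ decomposes as a sum of $N_v$ conditionally independent, mean-zero bounded random vectors, to which standard concentration tools apply directly.

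Since $L_v$ is concave on $\tilde{\Theta}_v$ by Proposition \ref{logconc} and $\boldsymbol{\theta}_v^*$ lies in the interior by Assumption \ref{assump2}, combining $L_v(\hat{\boldsymbol{\theta}}_v) \ge L_v(\boldsymbol{\theta}_v^*)$ with a second-order Taylor expansion around $\boldsymbol{\theta}_v^*$ yields the basic inequality
\begin{equation*}
\tfrac12 (\hat{\boldsymbol{\theta}}_v-\boldsymbol{\theta}_v^*)^\top\bigl[-\nabla^2 L_v(\tilde{\boldsymbol{\theta}}_v)\bigr](\hat{\boldsymbol{\theta}}_v-\boldsymbol{\theta}_v^*) \le \nabla L_v(\boldsymbol{\theta}_v^*)^\top(\hat{\boldsymbol{\theta}}_v-\boldsymbol{\theta}_v^*)
\end{equation*}
for some $\tilde{\boldsymbol{\theta}}_v$ on the segment $[\boldsymbol{\theta}_v^*,\hat{\boldsymbol{\theta}}_v] \subset \tilde{\Theta}_v$. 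For the score on the right, each coordinate is a sum of $N_v$ bounded conditionally independent mean-zero terms, where boundedness comes from the smoothness of $F_v$ in Assumption \ref{assump2} and the $(\varepsilon,\gamma)$-truncation of $\tilde{\Theta}_v$ keeping the arguments of $F_v$ and its derivatives away from the boundary. Applying Hoeffding's inequality coordinate-wise and a union bound over the $m = |P(v)|$ parents gives
\begin{equation*}
\|\nabla L_v(\boldsymbol{\theta}_v^*)\|_2 \le C_1 \sqrt{N_v\, m \log(4m/\delta)}
\end{equation*}
with probability at least $1-\delta/2$, where $C_1$ depends only on $(F_v,\varepsilon,\gamma)$.

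For the Hessian lower bound, I would apply matrix Bernstein to $-\nabla^2 L_v(\boldsymbol{\theta}_v)/N_v$ at a single $\boldsymbol{\theta}_v \in \tilde{\Theta}_v$ and then extend the bound uniformly over $\tilde{\Theta}_v$ using a covering argument that exploits Lipschitzness of the Hessian, which follows from the existence and continuity of the third derivative of $F_v$ in Assumption \ref{assump2} together with compactness of $\tilde{\Theta}_v$. The sample-size threshold $N_v \ge (c_0 m/\lambda_{\min}) \log(2m/\delta)$ is exactly what drives the uniform operator-norm deviation of the sample Hessian from $\mathbb{E}[\nabla^2 L_v(\boldsymbol{\theta}_v)\mid X_v]$ below $N_v\lambda_{\min}/2$, so that $-\nabla^2 L_v(\tilde{\boldsymbol{\theta}}_v) \succeq (N_v\lambda_{\min}/2)I$ with probability at least $1-\delta/2$. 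On the intersection of the two events, applying Cauchy--Schwarz to the right side and the Hessian bound to the left side of the basic inequality, then dividing by $\|\hat{\boldsymbol{\theta}}_v-\boldsymbol{\theta}_v^*\|_2$, gives $\|\hat{\boldsymbol{\theta}}_v-\boldsymbol{\theta}_v^*\|_2 \le (4C_1/\lambda_{\min})\sqrt{(m/N_v)\log(4m/\delta)}$, which is the bound in \eqref{finite_sample_bound_mle_error} with $C_0 = 4C_1$.

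The main obstacle I expect is the uniform Hessian lower bound: one must carefully combine matrix Bernstein (or matrix Chernoff for a sum of PSD summands) with an $\epsilon$-net over the $m$-dimensional simplex $\tilde{\Theta}_v$ while keeping every constant depending only on $(F_v,\varepsilon,\gamma)$, and the choice of the $\epsilon$-net radius must balance the covering cost $m\log(1/\epsilon_{\text{net}})$ against the Hessian-Lipschitz deviation inside each ball. A secondary subtlety is that all probabilistic statements must be conditional on $X_v$ to match the theorem's conditioning; fortunately, the $N_v$ Bernoulli hazard observations remain independent under this conditioning, so the concentration steps go through with essentially no change.
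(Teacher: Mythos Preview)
Your overall template (basic inequality, score concentration, Hessian lower bound) matches the paper's, but there is a genuine gap in the score step. You claim that the telescoping identity lets you view the data, conditional on $X_v$, as $N_v$ \emph{independent} Bernoulli hazard observations. This is false: within a single trace $n$, at most one of the $Y_\tau^{(n)}$ can equal $1$, and once $v$ activates, the remaining summands are zeroed out by the indicator $\mathbf{1}[t\le t(v,n)]$; both the values and the number of nonzero summands are random and coupled through the single threshold $U_v^{(n)}$. The hazard factorization is a statement about the \emph{form} of $L_v$, not about independence of the summands. The paper handles this by noting that the coordinatewise score terms $[\nabla\ell_{n,t}(\boldsymbol{\theta}_v^*)]_j$ form a martingale difference sequence (each has conditional mean zero given the preceding ones within the trace, and distinct traces are independent) and applies Azuma--Hoeffding rather than Hoeffding. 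The resulting tail bound has the same shape as yours, so your final display is recoverable, but only after this fix.

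For the Hessian, the paper takes a different and cheaper route than your covering argument: it telescopes the per-trace Hessian into the single expression $H_n = -y^{(n)}_{t^*+1}\nabla^2\log[F_v(\boldsymbol{\theta}_v^\top z^{(n)}_{t^*})-F_v(\boldsymbol{\theta}_v^\top z^{(n)}_{t^*-1})] - (1-y^{(n)}_{t^*+1})\nabla^2\log[1-F_v(\boldsymbol{\theta}_v^\top z^{(n)}_{t^*})]$, observes that each $H_n$ is positive semidefinite by log-concavity with operator norm bounded by $c_0' m$, and applies Tropp's matrix Chernoff lower tail to the $|\mathcal{I}_v|$ \emph{independent} PSD summands. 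Two consequences for your plan: first, your matrix Bernstein step also needs independent summands, which you only get by grouping per trace, not per time point; second, an $\epsilon$-net of the $m$-dimensional simplex $\tilde{\Theta}_v$ has cardinality $\exp(\Theta(m\log(1/\epsilon_{\mathrm{net}})))$, so union-bounding over it would push the sample-size requirement above the stated $N_v \ge (c_0 m/\lambda_{\min})\log(2m/\delta)$ and you would not recover the theorem's threshold as written.
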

\begin{remark}
    Since in our framework, the underlying network and thus the node's indegree $m=|P(v)|$ are fixed, the established result presents sufficient conditions for the estimator $\hat{\boldsymbol{\theta}}_v$ to be $\sqrt{N_v}$-consistent. However, in principle, we could consider a sequence of networks with associated GLT models on them and apply this result to each network to explore how the estimator's error changes when the sample size $N_v$ and the indegree $m$ grow simultaneously. For this scenario, the derived concentration bound suggests that consistency holds as long as $N_v$ asymptotically dominates $m\log m$. Unfortunately, Proposition \ref{ic_and_lt_satify_concavity} shows that in the final sample case, the necessary Assumption \ref{invertible_hessian_assump} may not be satisfied when $m$ is too close to $N_v$, as this would lead to the singularity of the log-likelihood Hessian.
\end{remark}

In addition to a finite sample result, we derive an asymptotic distribution for the MLE error. Contrary to a finite sample statement of Theorem \ref{mle_consistency_theorem}, establishing an asymptotic result requires considering the full trace-generating distribution in \eqref{individ_trace_prob} with the total number of traces $N$ being a more natural candidate for the sample size than $N_v$. Another difference is that we need much weaker conditions on the log-likelihood convexity -- now it is sufficient to require it only locally in the neighborhood of $\boldsymbol{\theta}^*$. We state this as a separate assumption:

\begin{assumption}[Local convexity] \label{local_convexity_assumop}
The negative expectation of the trace log-likelihood Hessian is positive definite at $\boldsymbol{\theta}^*$:
$$\mathbb{E}_{\mathcal{D}\sim\mathbb{P}_{\boldsymbol{\theta}^*}}\bigl[-\nabla^2\log \mathbb{P}_{\boldsymbol{\theta}}(\mathcal{D})\bigr] \rvert_{\boldsymbol{\theta} = \boldsymbol{\theta}^*}\succ 0.
$$
\end{assumption}

The following proposition characterizes the asymptotic behavior of the MLE.
\begin{proposition}\label{asympt_normality_propos}
    Consider the MLE $\hat{\boldsymbol{\theta}}_v$ obtained by solving the optimization problem \eqref{opt_prob_changed_sum}. Then, under Assumptions \ref{assump2}, \ref{local_convexity_assumop}, identifiability condition on the seed distribution $\mathbb{P}^0$ in Theorem \ref{identif_theorem}, and the assumption of i.i.d.\ traces, it holds
    \begin{equation}\label{asympt_normal_mle} 
       \hat{\Sigma}_v(\hat{\boldsymbol{\theta}}_v)^{-1/2}(\hat{\boldsymbol{\theta}}_v - \boldsymbol{\theta}^*_v)\stackrel{\mathbb{P}_{\boldsymbol{\theta^*}}}{\longrightarrow} \mathcal{N}(0, I_{m}) \quad \text{as } \ N\rightarrow \infty,
    \end{equation}
    where the limit is taken with respect to the ground-truth trace distribution $\mathbb{P}_{\boldsymbol{\theta^*}}$ defined in \eqref{trace_distrib_def_eq} and the estimated covariance matrix is $\hat{\Sigma}_v(\boldsymbol{\theta}_v) = [-\nabla^2 L_v(\boldsymbol{\theta}_v)]^{-1}$.
\end{proposition}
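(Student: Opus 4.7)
The plan is to treat this as a standard maximum likelihood asymptotic normality proof, relying on the fact that the node-specific log-likelihood in \eqref{individ_node_likelihood} decomposes as a sum of i.i.d.\ per-trace contributions. Specifically, write $L_v(\boldsymbol{\theta}_v) = \sum_{n=1}^N \ell_v^{(n)}(\boldsymbol{\theta}_v)$, where $\ell_v^{(n)}$ equals the ``failed activation'' term, the ``activation'' term, or zero depending on the realization of trace $\mathcal{D}_n$. By the i.i.d.\ trace assumption, the summands $\ell_v^{(n)}$ are i.i.d., which puts the problem squarely in classical M-estimation territory, and we can chase the usual five ingredients: consistency, interior first-order conditions, Taylor expansion, a CLT for the score, and a LLN for the Hessian.

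First I would establish consistency $\hat{\boldsymbol{\theta}}_v \xrightarrow{\mathbb{P}_{\boldsymbol{\theta}^*}} \boldsymbol{\theta}_v^*$. Since the parameter set $\tilde{\Theta}_v$ is compact, the per-trace contribution $\ell_v^{(n)}(\boldsymbol{\theta}_v)$ is continuous in $\boldsymbol{\theta}_v$ on $\tilde{\Theta}_v$ by Assumption \ref{assump2}, and the Kullback--Leibler divergence $\mathbb{E}[\ell_v(\boldsymbol{\theta}_v^*) - \ell_v(\boldsymbol{\theta}_v)]$ is strictly positive for $\boldsymbol{\theta}_v \ne \boldsymbol{\theta}_v^*$ by the identifiability condition of Theorem \ref{identif_theorem} (applied to the $\boldsymbol{\theta}_v$ block, since distinct weight vectors for $v$ produce different conditional activation probabilities on traces that realize the configurations $S_1,\ldots,S_m$), Wald's consistency theorem applies.

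Next, by Assumption \ref{assump2}, $\boldsymbol{\theta}_v^*$ lies in the interior of $\tilde{\Theta}_v$, so with probability tending to one the MLE is interior and satisfies $\nabla L_v(\hat{\boldsymbol{\theta}}_v) = 0$. A second-order Taylor expansion then gives
\begin{equation*}
\sqrt{N}\,(\hat{\boldsymbol{\theta}}_v - \boldsymbol{\theta}_v^*) = -\Bigl[{1\over N}\nabla^2 L_v(\tilde{\boldsymbol{\theta}})\Bigr]^{-1} {1\over \sqrt{N}} \nabla L_v(\boldsymbol{\theta}_v^*),
\end{equation*}
for some $\tilde{\boldsymbol{\theta}}$ on the segment joining $\hat{\boldsymbol{\theta}}_v$ and $\boldsymbol{\theta}_v^*$. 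A multivariate CLT applied to the i.i.d.\ scores $\nabla \ell_v^{(n)}(\boldsymbol{\theta}_v^*)$ yields ${1\over \sqrt N}\nabla L_v(\boldsymbol{\theta}_v^*) \Rightarrow \mathcal{N}(0,\mathcal{I}_v(\boldsymbol{\theta}_v^*))$, where $\mathcal{I}_v(\boldsymbol{\theta}_v^*) := \operatorname{Var}[\nabla\ell_v^{(1)}(\boldsymbol{\theta}_v^*)]$. The uniform LLN for continuous functions on a compact set, combined with consistency of $\tilde{\boldsymbol{\theta}}$, gives ${1\over N}\nabla^2 L_v(\tilde{\boldsymbol{\theta}}) \xrightarrow{\mathbb{P}} \mathbb{E}[\nabla^2 \ell_v^{(1)}(\boldsymbol{\theta}_v^*)]$, which is nonsingular by Assumption \ref{local_convexity_assumop} (noting that the full-trace Hessian is block-diagonal across parameter blocks $\boldsymbol{\theta}_v$ for distinct $v$, so the per-node block inherits positive definiteness).

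The main obstacle, as usual, is verifying the information identity $\mathcal{I}_v(\boldsymbol{\theta}_v^*) = -\mathbb{E}[\nabla^2 \ell_v^{(1)}(\boldsymbol{\theta}_v^*)]$, i.e., that the Fisher information can be computed either from the score variance or the Hessian expectation. This requires exchanging derivatives and expectation twice, which I would justify by the three-times continuous differentiability of $F_v$ (Assumption \ref{assump2}), the compactness of $\tilde{\Theta}_v$, and the fact that the score and Hessian of $\ell_v^{(n)}$ are uniformly bounded on $\tilde{\Theta}_v$ (since $F_v$ is bounded away from $0$ and $1$ on the relevant range by the truncation in $\tilde{\Theta}$). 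With the identity in hand, Slutsky's theorem yields $\sqrt N (\hat{\boldsymbol{\theta}}_v - \boldsymbol{\theta}_v^*) \Rightarrow \mathcal{N}(0,\mathcal{I}_v(\boldsymbol{\theta}_v^*)^{-1})$, and consistency of the plug-in estimator $-{1\over N}\nabla^2 L_v(\hat{\boldsymbol{\theta}}_v) \xrightarrow{\mathbb{P}} \mathcal{I}_v(\boldsymbol{\theta}_v^*)$ combined with continuity of the matrix square root allows replacing $\mathcal{I}_v(\boldsymbol{\theta}_v^*)$ by its empirical counterpart $[\hat{\Sigma}_v(\hat{\boldsymbol{\theta}}_v)]^{-1}/N$, giving \eqref{asympt_normal_mle}.
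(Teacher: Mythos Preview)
Your approach is correct and follows the same classical MLE asymptotics as the paper, though the packaging differs slightly. The paper invokes a ready-made result (Theorem 5.1 in \cite{lehmann1998theory}) for the \emph{full} parameter vector $\boldsymbol{\theta}$, verifies its regularity conditions (a)--(g) one by one, and only at the very end extracts the node-$v$ block from the block-diagonal structure of $\hat{\Sigma}^{-1}(\hat{\boldsymbol{\theta}})$. You instead work at the block level from the outset and unpack the proof by hand (Wald consistency, Taylor expansion, CLT for the score, uniform LLN for the Hessian, Bartlett identity, Slutsky). The paper's route is shorter because the Bartlett identities are absorbed into the cited theorem; your route is more self-contained but requires you to observe that the score and information identities for the $\boldsymbol{\theta}_v$-block follow from those of the full trace likelihood via the block-diagonal Hessian, which you correctly do. One small simplification you could note: the set of feasible traces $\mathcal{F}(G)$ is finite for a fixed finite graph, so the interchange of differentiation and expectation needed for the information identity is in fact a finite-sum argument rather than a dominated-convergence one.
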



There are many downstream tasks for which the asymptotic distribution of the GLT weights may be useful. For example, \cite{chen2016robustinfluencemaximization} propose a method for solving the Influence Maximization problem (defined formally in Section \ref{IM_section}) assuming that the edge weights are only known to lie in some intervals. This method can be naturally coupled with estimation of these confidence intervals from the data using the normal approximation in \eqref{asympt_normal_mle}.

Uncertainty quantification for the estimated weights may also be of interest.  As a simple example, consider comparing the effects of two parent nodes $u$ and $w$ on a child node $v$.   We may then want to test the hypothesis of no difference between the corresponding edge weights, 
$$H_0: \ b_{u, v} - b_{w, v} = 0\quad \text{vs}\quad H_a: \ b_{u, v} -  b_{w, v} \ne 0,
$$
and use the difference of their estimates as the test statistic. Then, the reference distribution is normal with zero mean and variance that can be derived from the asymptotic covariance matrix of $\hat{\boldsymbol{\theta}}_v$ in \eqref{asympt_normal_mle}:
$$\operatorname{Var}[\hat{b}_{u, v} - \hat{b}_{w, v}] \ \approx \ \hat{\Sigma}_{v, uu} + \hat{\Sigma}_{v, ww} - 2\hat{\Sigma}_{v, uw},
$$
where $\hat{\Sigma}_{v, ij}$ denotes the $(i,j)$-th entry of $\hat{\Sigma}_{v}(\hat{\boldsymbol{\theta}}_v)$.

In other applications, knowing the asymptotic distribution of the weights can help quantify uncertainty in predicting various quantities of interest;  for example, in epidemiology we may be interested in predicting the probability of node activation (infection) in the next time step given the propagation history.  This is a complicated function of the weights but one can still compute a confidence interval for it using the delta method.   

\subsection{Extension to partially-observed traces.} \label{pseudo_trace_section}
In many applications, we do not observe a full propagation trace, but we know which of the node's parents were active  before it was activated.  
We write each such observation for node $v$ as a pair $(A_v, y)$, referred to as {\it pseudo-trace},  where $A_v \subset P(v)$ is a set of $v$'s active (infected) parents and $y \in \{0, 1\}$ is an indicator of the event that $A_v$ together activate (infect) $v$.  

Suppose that for each child node $v\in V_c$, we observe a possibly empty collection of pseudo-traces 
\begin{equation}\label{nodes_pseudotraces}
    \mathbb{D}_v = \{(A_v^{(n)}, y_v^{(n)}), \ n = 1, \ldots, N_v\}
\end{equation}
that we would like to use to estimate the GLT weights $\boldsymbol{\theta}_v$.  If we could specify a pseudo-trace generating distribution for $\mathbb{D}_v$, we could apply the likelihood approach.  One way to do this is to treat a pseudo-trace $(A_v, y)$ as a trace seeded at $A_v$ and propagating in a star graph $G_v$ attached to the child node $v$. Given that $A_v \subset P(v)$, the feasible traces on $G_v$ can only be of two types:  those that stopped immediately at the seed set $A_v$ (corresponding to the pseudo-trace $(A_v, 0)$) and those that activated $v$ at $t=1$ and then stopped (pseudo-trace $(A_v, 1)$). 
 Assuming that the sets $A_v^{(n)},\ n=1, \ldots, N_v$ are independently generated from a parameter-free seed distribution $\mathbb{P}^0_v$ supported on the subsets of $P(v)$, the pseudo-trace likelihood has the form
\begin{equation}\label{pseudo_trace_probability}
\mathbb{P}_{\boldsymbol{\theta}_v}(A_v, y) = 
\mathbb{P}^0_v(A_v)\left\{1 - F_v\left[B_v(A_v; \boldsymbol{\theta}_v)\right]\right\}^{1-y} F_v\left[B_v(A_v; \boldsymbol{\theta}_v)\right]^y.
\end{equation}
Aggregation of these terms across all pseudo-traces in $\mathbb{D}_v$ results in the log-likelihood 
\begin{equation}\label{individ_pseudo_trace_likelihood}
      L_v^{pt}(\boldsymbol{\theta}_v)  = \sum_{n:\ y^{(n)} = 0} \log \bigl\{1 - F_v\bigl[B_{v}\bigl(A_{v}^{(n)}; \boldsymbol{\theta}_v\bigr)\bigr]\bigr\} + \sum_{n:\ y^{(n)} = 1} \log F_v\bigl[B_{v}\bigl(A_{v}^{(n)}; \boldsymbol{\theta}_v\bigr)\bigr] , 
\end{equation}
where we omitted the parameter-free terms $\log \mathbb{P}^0_v(A_v)$.   This pseudo-trace likelihood coincides with the that of the General Cascade model (see Section 2.3 of \citep{InferringGraphs2015}), which assumes that the activation probability of a node is an increasing function $f$ mapping the sum of incoming edge weights from active parents to $[0, 1]$. This implies that the model is the analogue of the GLT model for pseudo-trace case with $f$ set as the cdf $F_v$.

The assumption that the seed distribution $\mathbb{P}_v^0$ does not depend on any diffusion model parameters may seem strong.  
However, by carefully comparing \eqref{individ_pseudo_trace_likelihood} with its counterpart for fully observed traces in \eqref{individ_node_likelihood}, we observe that the only difference is that, in the latter case, we always subtract $F_v\bigl[B_v(A^{(n)}_{t(v, n) - 2})\bigr]$ under the logarithm for traces where $v$ was activated. 
This term represents the probability that the active parent set preceding the one that eventually activated $v$ was \textit{not} enough for $v$'s activation. Thus, the only information lost in a pseudo-trace, compared to a fully observed trace, is which parent subset of an influenced node $v$ was \textit{not} sufficient to activate it. Importantly, since pseudo-traces are assumed as traces propagating on a star graph $G_v$, both consistency and asymptotic normality results of Theorem \ref{mle_consistency_theorem} and Proposition \ref{asympt_normality_propos} still hold in the pseudo-trace case.

\subsection{Estimation of threshold parameters}\label{est_threshold_params}

So far, we have treated threshold distributions as known, which is unlikely in reality.   While it would be challenging to estimate these distributions fully nonparametrically given we typically only observe a limited number of traces concerning any given node, we could easily obtain an estimate if we model each $F_v, v\in V$ as a member of some parametric family with parameters $\boldsymbol{\varphi}_v \in \Phi_v \subset \mathbb{R}^{r_v}$. For example, if we model $F_v \sim \operatorname{Beta}(\alpha_v, \beta_v)$, we can define $\boldsymbol{\varphi}_v = (\alpha_v, \beta_v)$ with $\Phi_v = [1, + \infty)^2$ to satisfy the condition of Proposition \ref{logconc}. Then we can estimate $(\boldsymbol{\theta}_v, \boldsymbol{\varphi}_v)$ for each $v \in V_c(\mathbb{D})$ by solving the following optimization problem:
\begin{equation}\label{individ_opt_prob_two_sets}
    \max_{ \boldsymbol{\varphi}_v \in \Phi_v, \  \boldsymbol{\theta}_v \in \Theta_v}   L_v\left(\mathbb{D}| \boldsymbol{\theta}_v, \boldsymbol{\varphi}_v \right) \, , 
\end{equation}   
where the individual node likelihood $L_v$ is obtained from \eqref{individ_node_likelihood} with the Beta distribution cdfs plugged in.   Allowing $F_v$ to vary within the feasible set makes the optimization problem non-convex even in the simplest case of a one-parameter $\operatorname{Beta}$ family.   Thus finding even a local optimum of  \eqref{individ_opt_prob_two_sets} requires careful tuning of the gradient steps since $\boldsymbol{\theta}$ and $\boldsymbol{\varphi}$ might have very different magnitudes. A natural way to deal with that problem is to switch to coordinate gradient descent, alternating between fixing one set of variables ($\boldsymbol{\theta}_v$ or $\boldsymbol{\varphi}_v$) and optimizing over the other one.  However, our numerical experiments (available in the GitHub repository) showed that this type of coordinate gradient descent converges reliably only if the initial values are sufficiently close to the truth.     
Therefore, unless the dimension $r_v$ of $\Phi_v$ is very high, we choose $\Phi_v$ from a discrete grid, optimize \eqref{individ_opt_prob_two_sets} over $\boldsymbol{\theta}$ for each $\boldsymbol{\varphi} _v\in \Phi_v$  and choose the one resulting in the highest log-likelihood. 
While in further numerical experiments we do assume all nodes' thresholds follow the Beta distribution, it is important to note that due to the node-wise separability of the optimization problem in  \eqref{individ_opt_prob_two_sets}, the parametric family as well as the parameter grid $\Phi_v$  are not required to be the same across $v\in V_c(\mathbb{D})$.  


\section{Influence maximization under the GLT model}\label{IM_section}
In this section, we study the GLT model in the context of the Influence Maximization (IM) problem, that is, the task of choosing a seed set of a given size that maximizes the expected spread of information through the network. We start with a brief review of the IM problem, which was introduced by \cite{richardson} and further formalized by \cite{Kempe}.  Formally, we define the {\it influence function}, 
    for a given simple directed graph $G=(V, E)$ and a diffusion model $M$, as the function that maps any subset $S \subset V$ to the expected number of nodes influenced if $M$ is seeded by $D_0 = S$, 
$$\sigma_{ M}(S)  =  \mathbb{E}_{\mathcal{D}|D_0=S}|A(\mathcal{D})|, \qquad S\subset V. 
$$


The IM problem is then to find a subset $S^* \subset V$ which maximize the influence function over all subsets of $V$ of a given size $k$, 
$$S^* = \operatorname*{argmax}_{S\subset V:\ |S |\le k}\sigma_{ M}(S).$$
\cite{Kempe} showed that the IM problem is NP-hard under the LT, IC, and Triggering models. However, if the influence function has certain properties, the optimal solution can be well approximated by a greedy strategy (see Algorithm \ref{algorithm:greedy}). These properties are monotonicity and submodularity.  
\begin{definition}[Monotonicity] An influence function $\sigma(\cdot)$ is monotone if $\sigma\left(S^{\prime}\right) \leq \sigma(S)$ for any $S^{\prime} \subset S \subseteq V$.  In words, increasing the size of the seed set cannot decrease the value of the influence function. 
\end{definition} 
\begin{definition}[Submodularity] \label{submod_def} An influence function $\sigma(\cdot)$ is submodular if $\sigma\left(\{v\} \cup S^{\prime}\right)-\sigma\left(S^{\prime}\right) \geq \sigma(\{v\} \cup S)-\sigma(S)$ for any $S^{\prime} \subset S \subseteq V$ and $v \in V \backslash S$.  In words, adding a node to a given seed set increases the influence function by at least as much as adding it to a superset of this seed set.  
\end{definition}
These are both reasonable and mild assumptions, reflecting the intuitive meaning of information propagation.     
Though \cite{Kempe} were the first to study the greedy algorithm behavior in the context of the IM problem, worst-case performance under monotonicity and submodularity assumptions dates back to the following classical result: 
\begin{theorem}[\cite{ga_proof1} and \cite{ga_proof2}]\label{ga_theorem}
Let $\sigma: 2^{V} 
 \rightarrow \mathbb{R}{+}$ be a a monotone and submodular influence function. Let $\hat{S} \subset V$ of size $k$ be the set obtained by selecting elements from $V$ one at a time, when at each step one chooses an element that provides the largest marginal increase in the value of $\sigma$. Let $S^*$ be the true maximizer of $\sigma$ over all $k$-element subsets of $V$. Then 
 \begin{equation}\label{opt_ineq}
\sigma(\hat{S}) \ge \left(1-\frac{1}{\mathrm{e}} \right)\sigma(S^*) \, , 
\end{equation}
that is, $\hat{S}$ provides a $1-\frac{1}{\mathrm{e}}$ approximation to the optimal $S^*$.      
\end{theorem}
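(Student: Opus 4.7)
The plan is to establish a per-step recursion on the optimality gap of the greedy algorithm and then iterate it $k$ times to extract the $(1-1/e)$ factor. Let $S_0 = \emptyset, S_1, \ldots, S_k = \hat S$ denote the sequence of sets produced by the greedy procedure, so that $S_{i+1} = S_i \cup \{v_{i+1}\}$ with $v_{i+1} \in \arg\max_{v\notin S_i}[\sigma(S_i \cup \{v\}) - \sigma(S_i)]$. Let $S^* = \{v^*_1, \ldots, v^*_k\}$ denote the true optimizer, and define the gap $\delta_i := \sigma(S^*) - \sigma(S_i)$. The strategy is to show that $\delta_{i+1} \le (1 - 1/k)\,\delta_i$ at each step, from which the conclusion follows by straightforward iteration.

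The key step -- and the main technical hurdle -- is to prove that for every $i = 0, 1, \ldots, k-1$,
\begin{equation*}
\delta_i \le k \cdot \bigl[\sigma(S_{i+1}) - \sigma(S_i)\bigr].
\end{equation*}
To prove this, I would first apply monotonicity to replace $\sigma(S^*)$ by the larger quantity $\sigma(S^* \cup S_i)$, and then telescope the difference $\sigma(S^* \cup S_i) - \sigma(S_i)$ by adding the elements of $S^*$ one at a time on top of $S_i$:
\begin{equation*}
\sigma(S^* \cup S_i) - \sigma(S_i) = \sum_{j=1}^k \Bigl[\sigma\bigl(S_i \cup \{v^*_1, \ldots, v^*_j\}\bigr) - \sigma\bigl(S_i \cup \{v^*_1, \ldots, v^*_{j-1}\}\bigr)\Bigr].
\end{equation*}
Submodularity then bounds each summand above by the single-element marginal $\sigma(S_i \cup \{v^*_j\}) - \sigma(S_i)$, and the greedy rule bounds each of these in turn by $\sigma(S_{i+1}) - \sigma(S_i)$, since $v_{i+1}$ maximizes marginal gain at step $i+1$. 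Summing over the $k$ indices $j$ yields the claimed inequality.

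Once the per-step bound is in hand, the remainder of the argument is mechanical: rearranging gives $\delta_{i+1} \le (1 - 1/k)\,\delta_i$, so iterating produces $\delta_k \le (1 - 1/k)^k\, \delta_0 \le e^{-1}\sigma(S^*)$, using $(1-1/k)^k \le e^{-1}$ together with $\delta_0 = \sigma(S^*) - \sigma(\emptyset) \le \sigma(S^*)$ (which follows from monotonicity and the fact that influence functions are nonnegative). Unwinding the definition of $\delta_k$ yields $\sigma(\hat S) \ge (1 - 1/e)\sigma(S^*)$. The conceptual obstacle sits entirely in the telescoping identity above: this is the only place where both structural hypotheses -- monotonicity and submodularity -- enter in an essential way, and it is precisely what pins the approximation factor at $1-1/e$ rather than anything smaller.
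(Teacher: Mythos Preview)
The paper does not supply its own proof of this theorem: it is stated as a classical result attributed to \cite{ga_proof1} and \cite{ga_proof2} and used as a black box. Your argument is the standard Nemhauser--Wolsey--Fisher proof and is correct, so there is nothing to compare against in the paper itself.
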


 \cite{Kempe} showed that under the LT, IC, and Triggering models, the influence function has the monotonicity and submodularity properties,  thus showing that the greedy algorithm can solve the IM problem under these models with the optimality guarantee \eqref{opt_ineq}. A natural question about our new GLT model is whether it enjoys similar properties. Monotonicity is trivially satisfied for the GLT model class,  and the following theorem gives a sufficient condition for submodularity, which is also necessary in a particular sense. The proof of this theorem and all other results in this section can be found in Section \ref{IM_section_proofs} of the Appendix.

 \begin{theorem}\label{GLT_submodular_theorem}
     A GLT model on a graph $G=(V, E)$ has a submodular influence function if the threshold cdf $F_v$ is concave for each child node $v\in V_c$. Moreover, for any non-concave cdf $F$ with a non-negative support, there exists an instance of the GLT model on a star graph as in Figure \ref{fig:instar_graph} with $F$ as the threshold cdf of the only child node, such that the corresponding influence function is not submodular.
 \end{theorem}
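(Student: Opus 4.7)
The plan is to handle sufficiency by reducing the GLT model to a general threshold model with uniform thresholds, and necessity by an explicit counterexample on a star graph. For sufficiency, I first use the invertibility of $F_v$ from Assumption \ref{assump1} to rewrite the activation rule: setting $V_v := F_v(U_v)$, so that $V_v \sim \operatorname{Unif}[0,1]$, the condition $U_v \le B_v(A;\boldsymbol{\theta}_v)$ is equivalent to $V_v \le f_v(A\cap P(v))$, where
\[
f_v(S) := F_v\!\left(\sum_{u\in S} b_{u,v}\right), \qquad S\subseteq P(v).
\]
This casts GLT as a general threshold model with local activation functions $\{f_v\}$. By the classical result of Mossel and Roch for such models, the influence function $\sigma$ is monotone and submodular whenever each $f_v$ is monotone and submodular on $2^{P(v)}$.

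Verifying the latter is the key step. Monotonicity of $f_v$ follows from $b_{u,v}\ge 0$ and monotonicity of $F_v$; for submodularity, I pick $S\subseteq T\subseteq P(v)$ and $w\notin T$, and use $B_v(S)\le B_v(T)$ together with concavity of $F_v$. Concavity implies the fixed-gap increment $x\mapsto F_v(x+b_{w,v}) - F_v(x)$ is non-increasing, so
\[
F_v(B_v(S)+b_{w,v}) - F_v(B_v(S)) \ \ge\ F_v(B_v(T)+b_{w,v}) - F_v(B_v(T)),
\]
which is exactly the submodularity of $f_v$ as a set function.

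For necessity, I build a counterexample on the star graph of Figure \ref{fig:instar_graph}. Non-concavity of $F$ on its non-negative support implies the failure of the non-increasing-chord property, so there exist $0\le a<b$ and $\delta>0$ with $b+\delta$ in the support and $F(a+\delta)-F(a)<F(b+\delta)-F(b)$. Take a star graph with $m=3$ parents, edge weights $b_{1,v}=a$, $b_{2,v}=b-a$, $b_{3,v}=\delta$, all non-negative, with total in-weight $b+\delta$ lying in the support, so the GLT constraints hold. Because the graph has depth one, $\sigma(S)=|S|+F(B_v(S))\mathbf{1}\{v\notin S,\,S\ne\emptyset\}$, and with $S=\{1\}\subsetneq T=\{1,2\}$ and $w=3$,
\[
\sigma(S\cup\{w\})-\sigma(S)= 1 + F(a+\delta)-F(a)\ <\ 1 + F(b+\delta)-F(b)=\sigma(T\cup\{w\})-\sigma(T),
\]
violating submodularity.

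The main obstacle is the sufficiency step, specifically lifting local submodularity of the $\{f_v\}$ to global submodularity of $\sigma$. The standard live-edge argument of Kempe, Kleinberg, and Tardos is tailored to uniform thresholds and does not extend immediately; the trick is the distributional transformation $V_v=F_v(U_v)$, which brings GLT into the general threshold framework of Mossel and Roch, at which point the result is a black box. The remaining pieces---monotonicity, the chord-slope comparison, and the star-graph counterexample---are then routine.
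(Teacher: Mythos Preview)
Your proof is correct and follows essentially the same route as the paper: both reduce sufficiency to the Mossel--Roch local-to-global theorem by verifying that each local threshold function $f_v(S)=F_v\bigl(\sum_{u\in S}b_{u,v}\bigr)$ is monotone and submodular via the decreasing-increment property of concave $F_v$, and both construct the same three-parent star counterexample for necessity. The only cosmetic differences are that you make the probability-integral-transform $V_v=F_v(U_v)$ explicit before invoking Mossel--Roch, whereas the paper cites their theorem directly, and the paper isolates the equivalence ``$F$ concave $\iff$ $F(x+b)-F(x)\ge F(y+b)-F(y)$ for $x\le y$'' as a separate lemma with proof while you treat it as a standard fact.
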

Theorem \ref{GLT_submodular_theorem} implies submodularity of the IC and LT models since both uniform and exponential distributions have concave cdfs. For GLT models with beta-distributed thresholds, we can use this general statement to derive a simpler submodularity condition. \begin{corollary}\label{beta_submod_corol}
     A GLT model on a graph $G=(V, E)$ with $U_v \sim \operatorname{Beta}(\alpha_v, \beta_v)$ has a submodular influence function if $\alpha_v \le 1$ and $\beta_v \ge 1$ for all $v\in V_c$.
 \end{corollary}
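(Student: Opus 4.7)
The plan is to reduce the corollary to the sufficient condition in Theorem \ref{GLT_submodular_theorem}, namely that each threshold cdf $F_v$ is concave on its support. Since the condition in the corollary is imposed node by node and the sufficient condition in Theorem \ref{GLT_submodular_theorem} is likewise node by node, it suffices to prove the following scalar fact: if $\alpha\le 1$ and $\beta\ge 1$, then the cdf of $\operatorname{Beta}(\alpha,\beta)$ is concave on $[0,1]$.

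Concavity of the cdf is equivalent to the density $f$ being non-increasing on $(0,1)$. The Beta density is
\begin{equation*}
f(x) = \frac{x^{\alpha-1}(1-x)^{\beta-1}}{B(\alpha,\beta)}, \qquad x\in(0,1),
\end{equation*}
which is strictly positive on $(0,1)$, so I can work with the logarithmic derivative
\begin{equation*}
\frac{f'(x)}{f(x)} \;=\; \frac{\alpha-1}{x} \;-\; \frac{\beta-1}{1-x}.
\end{equation*}
Under the assumptions $\alpha\le 1$ and $\beta\ge 1$, the first term is non-positive and the second term is non-negative, so $f'(x)\le 0$ on $(0,1)$. Therefore $f$ is non-increasing there, which gives concavity of $F$ on $(0,1)$, and by continuity on all of $[0,1]$.

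With $F_v$ concave for every $v\in V_c$, Theorem \ref{GLT_submodular_theorem} applies and yields the submodularity of the influence function, completing the proof. There is essentially no obstacle here; the only thing to watch is the boundary behavior of $f$ at $0$ (where $f$ may blow up when $\alpha<1$) and at $1$ (where $f=0$ when $\beta>1$), but these are consistent with $f$ being non-increasing and do not affect concavity of the cdf. A brief remark is worth including: the constraint $\alpha\ge 1,\beta\ge 1$ in Proposition \ref{logconc} (used for convex estimation) is complementary to, not redundant with, the constraint here, so fitting the GLT model and solving IM greedily under it can only be simultaneously guaranteed on the intersection $\alpha=1$, $\beta\ge 1$, i.e.\ when the threshold is $\operatorname{Beta}(1,\beta)$.
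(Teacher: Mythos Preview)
Your proof is correct and follows essentially the same approach as the paper: both arguments verify that the $\operatorname{Beta}(\alpha,\beta)$ cdf is concave on $[0,1]$ when $\alpha\le 1$ and $\beta\ge 1$ by checking that $F_v''=f_v'\le 0$ (you via the log-derivative $f'/f$, the paper by writing out $F''$ explicitly), and then invoke Theorem~\ref{GLT_submodular_theorem}. Your closing remark about the complementarity with the log-concavity condition in Proposition~\ref{logconc} is a nice observation that the paper does not make in the proof itself.
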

\begin{proof}
      The second derivative of the cdf of the $\operatorname{Beta}(\alpha, \beta)$ distribution is given by 
    $$F''(x| \alpha, \beta)\ = C(\alpha, \beta) \ x^{\alpha-2}(1-x)^{\beta - 2}[(\alpha-1)(1-x) - (\beta-1)x] 
    $$
 where $C(\alpha, \beta) > 0$ does not depend on $x$, and thus $F''(x| \alpha, \beta)\ \le 0$ for all $x \in (0, 1)$  if $\alpha \le 1$ and $\beta \ge 1$.
\end{proof}

\begin{algorithm}[t!]
\caption{The greedy algorithm for the IM problem}
\begin{algorithmic}
\State {\bf Input:} graph $G = (V, E)$, diffusion model instance $M$,
and seed budget $k$ 
\State $S \leftarrow \emptyset$
\While{$|S| < k$} 
    \State $v \leftarrow \arg \max _{v \in V\setminus S}\left(\sigma_{ M}(S\cup \{v\})-\sigma_{ M}(S)\right)$
    \State $S \leftarrow S \cup \{v\}$
\EndWhile
\State \Return{S}

\end{algorithmic}
\label{algorithm:greedy}
\end{algorithm}

Since we estimate the parameters of GLT model from data, the estimation error in $\hat{\boldsymbol{\theta}}$ can affect the IM problem solution. To study this, let $\sigma_{\boldsymbol{\theta}}$ with $\boldsymbol{\theta}\in \Theta$ denote the influence function of a diffusion model $M_{G, \boldsymbol{\theta}}$.  Let $S^*(\boldsymbol{\theta}) = \arg\max_{|S|\le k} \sigma_{\boldsymbol{\theta}}(S)$ denote the solution of the IM problem under $M_{G, \boldsymbol{\theta}}$, where we omit the dependency on $k$ and treat it as fixed throughout this section. Then the question is to relate the difference $|  \sigma_{\boldsymbol{\theta}}(S^*(\hat{\boldsymbol{\theta}})) - \sigma_{\boldsymbol{\theta}}(S^*(\boldsymbol{\theta}))|$ to the parameter estimation error 
$\|\hat{\boldsymbol{\theta}} -\boldsymbol{\theta}\|$. Unfortunately,  general results which hold for an arbitrary graph topology and any choice of the true parameters usually imply loose and impractical bounds (see, for example, Lemma 3 and subsequent discussion in \cite{chen2016robustinfluencemaximization}).
Therefore, we present a less general but more illustrative result for a family of directed bipartite graphs, that is, graphs with two disjoint sets of nodes and all edges going from a node in the first set to a node in the second. The bipartite graph structure can be thought of as the most general graph topology that ensures the propagation traces are at most of unit length. The following proposition essentially states that in this setting, the discrepancy between the spreads from the IM solution under the true and estimated GLT models is governed by the $\ell_1$ error of the weight estimates.
\begin{proposition}\label{spread_error_bound_propos}
    Consider a directed bipartite graph and a GLT model on it, and assume every child node $v\in V_c$ has a $L$-Lipschitz cdf, that is, $|F_v(x) - F_v(y)| \le L |x - y|$ for any $0\le x, y\le h_v$. Then with the notations above, it holds:
    $$ | \sigma_{\boldsymbol{\theta}}(S^*(\hat{\boldsymbol{\theta}})) - \sigma_{\boldsymbol{\theta}}(S^*(\boldsymbol{\theta}))| \le 2L\|\hat{\boldsymbol{\theta}} -\boldsymbol{\theta}\|_1.
    $$
    In particular, if all threshold cdfs of a GLT model are differentiable and concave, and therefore satisfy the submodularity condition in Theorem \ref{GLT_submodular_theorem}, the Lipschitz constant can be taken as $L=\max_{v\in V_c}F_v'(0)$.
\end{proposition}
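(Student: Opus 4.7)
The plan is to first exploit the bipartite structure to write the influence function in closed form, then use an optimality swap together with the $L$-Lipschitz condition to bound the spread gap by the $\ell_1$ weight error.

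Let $V_1, V_2$ denote the two sides of the bipartite graph, so that $P(v) \subset V_1$ for every $v \in V_c = V_2$ and propagation stops after one step. The first step is to observe that for any seed $S \subset V$ and any parameter $\boldsymbol{\eta}\in\Theta$, the only randomness is whether each $v \in V_2 \setminus S$ gets activated in a single round, which happens with probability $F_v(B_v(S;\boldsymbol{\eta}_v))$. Hence
\begin{equation*}
\sigma_{\boldsymbol{\eta}}(S) \;=\; |S| \;+\; \sum_{v \in V_2 \setminus S} F_v\bigl(B_v(S;\boldsymbol{\eta}_v)\bigr).
\end{equation*}
The $|S|$ term is independent of $\boldsymbol{\eta}$, so the whole $\boldsymbol{\eta}$-dependence of $\sigma_{\boldsymbol{\eta}}(S)$ is captured by the sum of $F_v$'s.

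Next I write $\hat S = S^*(\hat{\boldsymbol{\theta}})$ and $S^\star = S^*(\boldsymbol{\theta})$, and use the standard optimality decomposition
\begin{equation*}
\sigma_{\boldsymbol{\theta}}(S^\star) - \sigma_{\boldsymbol{\theta}}(\hat S)
\;=\; \bigl[\sigma_{\boldsymbol{\theta}}(S^\star) - \sigma_{\hat{\boldsymbol{\theta}}}(S^\star)\bigr]
\;+\; \bigl[\sigma_{\hat{\boldsymbol{\theta}}}(S^\star) - \sigma_{\hat{\boldsymbol{\theta}}}(\hat S)\bigr]
\;+\; \bigl[\sigma_{\hat{\boldsymbol{\theta}}}(\hat S) - \sigma_{\boldsymbol{\theta}}(\hat S)\bigr].
\end{equation*}
By definition $\hat S$ maximizes $\sigma_{\hat{\boldsymbol{\theta}}}$ over $k$-subsets, so the middle bracket is $\le 0$; by definition of $S^\star$ the left-hand side is also $\ge 0$. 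Thus it suffices to bound $\sup_{S}|\sigma_{\boldsymbol{\theta}}(S) - \sigma_{\hat{\boldsymbol{\theta}}}(S)|$ and absorb the factor of 2.

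For the uniform-in-$S$ bound, I apply the $L$-Lipschitz property of each $F_v$ and then expand $B_v$ as a linear functional of $\boldsymbol{\theta}_v$:
\begin{align*}
\bigl|\sigma_{\boldsymbol{\theta}}(S) - \sigma_{\hat{\boldsymbol{\theta}}}(S)\bigr|
&\le \sum_{v \in V_2\setminus S} L \,\bigl|B_v(S;\boldsymbol{\theta}_v) - B_v(S;\hat{\boldsymbol{\theta}}_v)\bigr| \\
&\le L \sum_{v \in V_2\setminus S} \sum_{u \in S\cap P(v)} |b_{u,v} - \hat b_{u,v}|
\;\le\; L\,\|\boldsymbol{\theta} - \hat{\boldsymbol{\theta}}\|_1,
\end{align*}
since each edge $(u,v)$ is counted at most once across $v\in V_2\setminus S$ and $u\in S\cap P(v)$. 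Plugging this into the decomposition yields the claimed factor-$2L$ bound. For the final sentence, I note that if $F_v$ is differentiable and concave on $[0,h_v]$, then $F_v'$ is non-increasing on that interval, hence $0 \le F_v'(x) \le F_v'(0)$ and the mean value theorem gives $L = \max_{v\in V_c} F_v'(0)$. I do not anticipate a major obstacle; the only subtlety worth stating carefully is why the double sum collapses to $\|\boldsymbol{\theta}-\hat{\boldsymbol{\theta}}\|_1$ uniformly in $S$, which uses that the sum is taken over disjoint incoming edge bundles of distinct $v$'s.
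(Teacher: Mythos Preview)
Your proof is correct and follows essentially the same approach as the paper: write the influence function explicitly using the one-step bipartite structure, reduce the spread gap to $2\sup_{|S|\le k}|\sigma_{\boldsymbol{\theta}}(S)-\sigma_{\hat{\boldsymbol{\theta}}}(S)|$ via an optimality/triangle argument, and then bound that supremum by $L\|\boldsymbol{\theta}-\hat{\boldsymbol{\theta}}\|_1$ using the Lipschitz condition. The only cosmetic difference is that the paper isolates the reduction step as a separate lemma (valid for arbitrary graphs) using a two-term triangle inequality and the bound $|\max_S f - \max_S g|\le \max_S|f-g|$, whereas you use a three-term telescoping decomposition and drop the nonpositive middle term; both yield the same factor of $2$.
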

Combining this result with the finite-sample error bound derived in \eqref{finite_sample_bound_mle_error}, we can conclude that the spread of the IM solution obtained under the estimated model converges in probability to the spread of the ground-truth model solution at a rate of $\min_{v\in V_c}\sqrt{N_v}$.

    \section{Experiments}
	
	\label{ch:experiments}
	

In this section, we present numerical results on both simulated and real-world data.  The code for these analyses is available at  
\url{https://github.com/AlexanderKagan/gltm_experiments}.  The Python package \texttt{InfluenceDiffusion}, available at \url{https://github.com/AlexanderKagan/InfluenceDifusion}, includes the convex optimization method and the greedy algorithm to fit the GLT model, as well as code for trace sampling and spread estimation.  Whenever we fit the GLT model by solving the optimization problem in equation \eqref{opt_prob_changed_sum}, we use the SciPy implementation of the SLSQP solver. 

\subsection{Simulation settings}\label{simulation_setting_section}
We generate synthetic networks from the connected Watts-Strogatz model \citep{WattsStrogatz1998}. As the original model is for an undirected graph, we double each sampled undirected edge to go both ways.  We denote the distribution of a directed graph $G$ generated this way by $G \sim  \operatorname{CWS}(n, k, p)$, where $n$ is the number of nodes, $k$ is the initial degree of each node, and $p\in [0, 1]$ is the probability of edge rewiring, controlling the randomness of the graph.   By construction, the number of edges in $G$ is fixed to $kn$ and therefore the edge density is fixed at $k / n$.    The imposed connectedness of $G$ is not strictly necessary for modeling information diffusion, but it is convenient in simulations, ensuring that enough nodes are reachable.

To generate edge eights for $G$, we independently and uniformly sample parent edge weights of each child node from a simplex scaled by a given positive constant $d_{\max}$, which upper bounds the node's weighted in-degree:
\begin{equation}\label{weight_generating_interior_simplex}
\boldsymbol{\theta}_v  \sim \operatorname{Unif}\{\boldsymbol{w} \in \mathbb{R}^{|P(v)|}: \boldsymbol{w} \ge 0, \ \|\boldsymbol{w}\|_1 \le d_{\max}\}.
\end{equation}
To generate seed sets for the traces, we independently and uniformly sample them from node sets of sizes between 1 and $s_{\max}$:
\begin{equation}\label{seed_set_generator}
    D_0 \sim \operatorname{Unif}\{S\subset V: 1\le |S|\le s_{\max}\}.
\end{equation}
Unless otherwise stated, we use $d_{\max}=1$ and $s_{\max} = 5$ as default values.


The difference between two vectors $\boldsymbol{y}$ and $\hat{\boldsymbol{y}}$ (the truth and the estimator) will be measured by Relative Mean Absolute Error (RMAE), defined as ${\|\boldsymbol{y} - \hat{\boldsymbol{y}}\|_1  /  \|\boldsymbol{y}\|_1} $.

\subsection{Estimation of edge weights} \label{lt_weight_estim_exper}
In this section, we study how the quality of weight estimation depends on key parameters of the underlying graph and the ground-truth GLT model. For simplicity, we use the original LT model here, with uniform thresholds, as the observed trends in weight estimation are very similar across different threshold distributions.  

In the first experiment, we fix the number of traces at $N=2000$ while varying the number of graph nodes $n$ and the Watts-Strogatz model average node in-degree $k$.  The weights are sampled as in \eqref{weight_generating_interior_simplex} with $d_{\max}=1$. In the second experiment, we fix the graph size to $n=100$ nodes and the in-degree to $k=10$, and vary the number of traces $N$ and the maximum weighted in-degree of the nodes $d_{\max}$, sampling the weights from \eqref{weight_generating_interior_simplex} with $d_{\max} \in \{0.2, 0.4, 0.6, 0.8, 1\}$. In both scenarios, we use $p=0.2$ as the probability of edge rewiring. 

The results are presented in Figure \ref{fig:LT_model_variation}.
 In the left panel of Figure \ref{fig:LT_model_variation}, we observe that the estimation error increases as the density $k / n$ or the network size $n$ grow. This is expected, as both higher density and larger size increase the number of edge weights to estimate, thus requiring more traces for accurate estimation.  The right panel of Figure \ref{fig:LT_model_variation} shows that lower weights lead to higher estimation errors.  This is because larger weights result in higher node activation probabilities, producing longer traces with more data that can be used in estimation.


\begin{figure}[tbh!]
    \centering
    \includegraphics[width=0.48\linewidth]{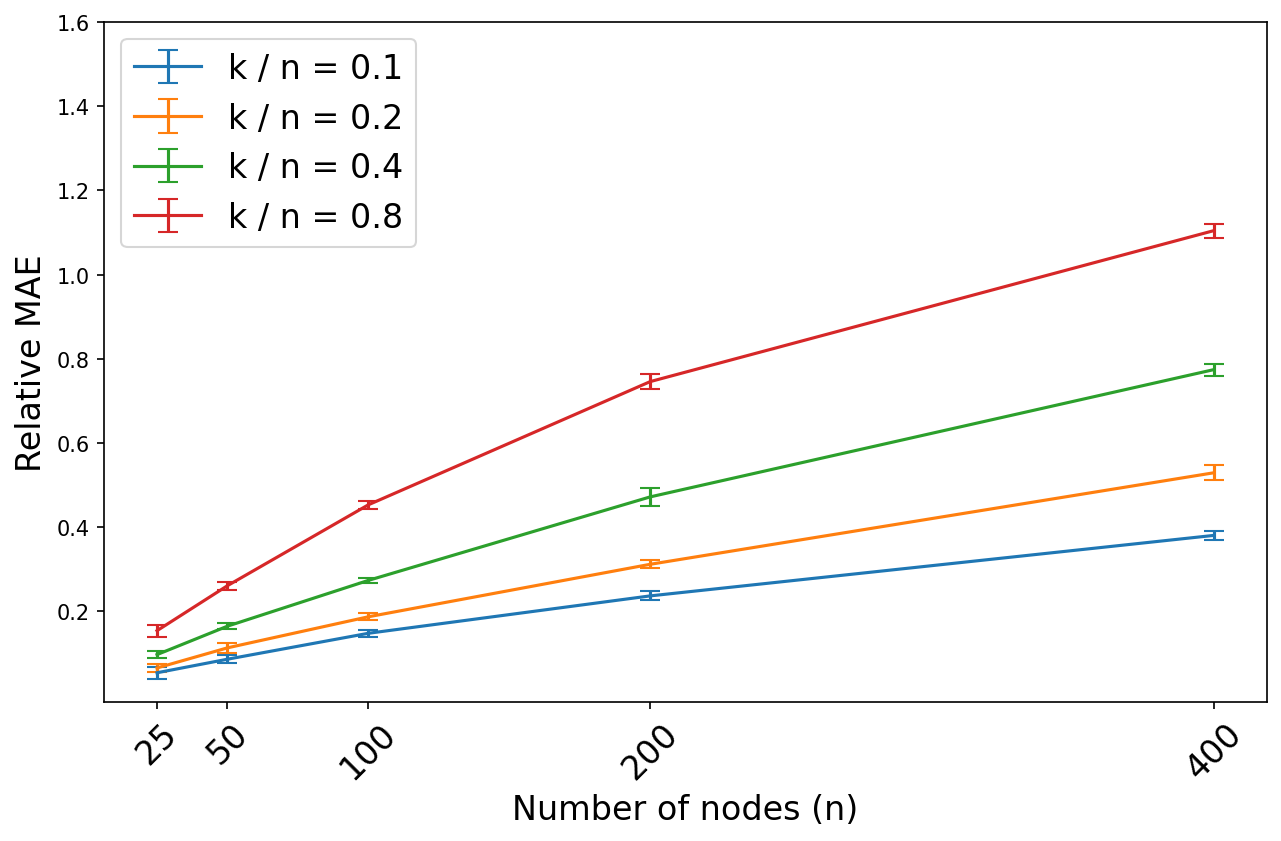} 
     \includegraphics[width=0.47\linewidth]{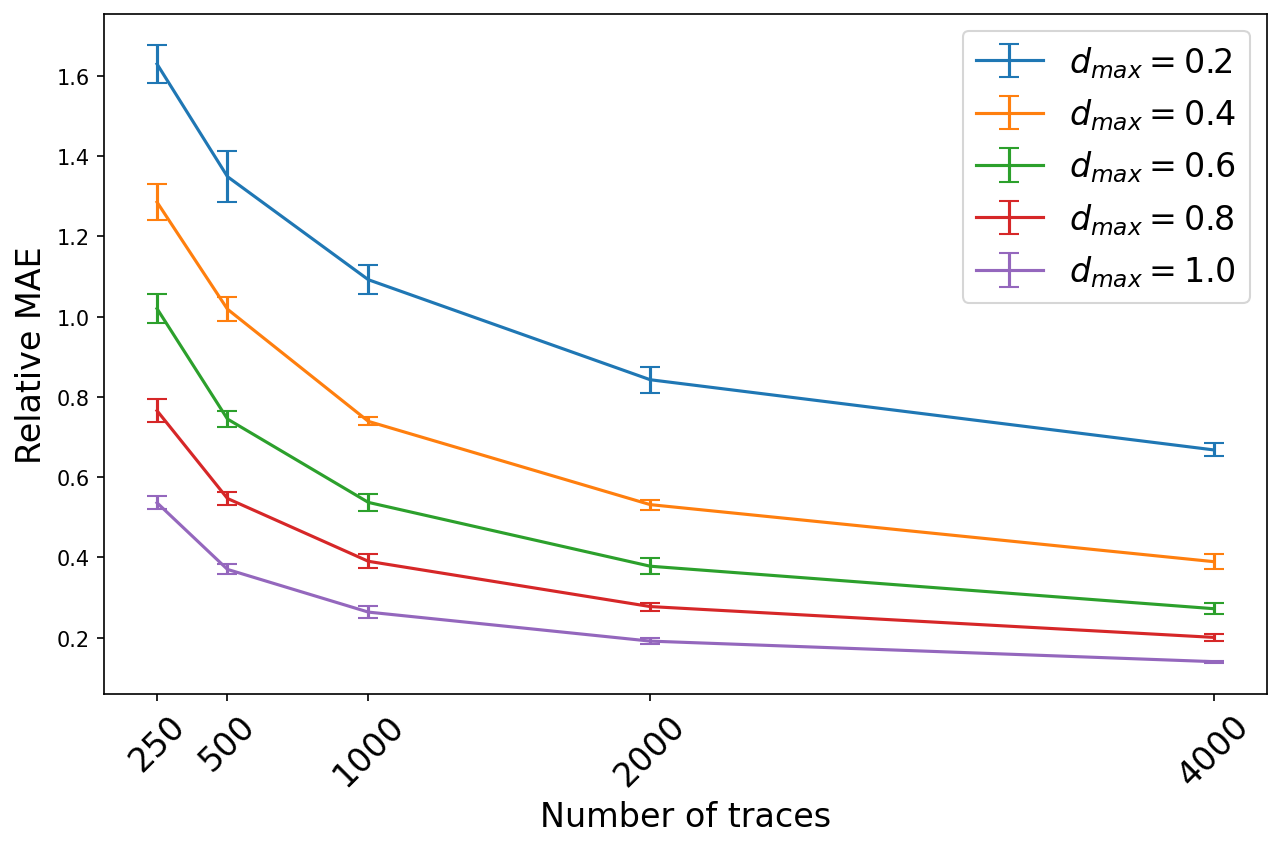} 

    \caption{Left: Relative MAE of the LT estimator as a function of the number of nodes $n$ for different densities $k / n$, with $N=2000$ traces and $d_{\max} = 1$. Right: Relative MAE as a function of the number of traces $N$ for different maximum node in-degrees $d_{\max}$, with $n = 100$ and density $k/n = 0.1$.  The error bars represent two standard errors and are calculated from 10 repetitions of each experiment.} 
\label{fig:LT_model_variation}
\end{figure}




\subsection{Uncertainty quantification and robustness to model misspecification}\label{node_activation_experiment_section}
Here, we focus on node activation probabilities as the main object of interest and present experiments that use asymptotic theory developed in Section \ref{consist_section} to quantify the uncertainty in their estimation.  
We also study how node activation probabilities behave under misspecification of the threshold distribution.

For this experiment, we sample a $100$-node CWS network with $p=0.2$ and $k=5$. The GLT model thresholds all have $\operatorname{Beta}(2, 1)$ distribution, and the weights are sampled as in \eqref{weight_generating_interior_simplex} with $d_{\max} = 1$. With seed sets generated as in \eqref{seed_set_generator} with $s_{\max} = 10$, we sample $N=1500$ traces from this model for training and generate additional 500 traces for testing. Then, for each candidate distribution $\operatorname{Beta}(2, 1)$, $\operatorname{Beta}(3, 1)$, $\operatorname{Unif}[0, 1]$ (LT model), and $\operatorname{Exponential(1)}$ (IC model), we use the training set to estimate the weights of the GLT model under the assumption of the candidate distribution for the node thresholds. Then,  we run through each test trace $\mathcal{D}_n, \ n=1,\ldots, 500$ and do the following: 
\begin{enumerate}
    \item For each ``informative'' node $v \in V_c(\mathcal{D}_n)$, as defined in \eqref{informative_node_set}, extract the last time step it is not activated, that is, $t(v, n) = \arg\max\{t\le T_n:\ v\notin A_{t}^{(n)}\}$,
    \item Under each of the estimated GLT models, compute the probability $v$ is activated at $t(v, n) + 1$ conditional on the history of $\mathcal{D}_n$ as defined in \eqref{trans_prob_glt}. 
\end{enumerate}
We also do this with the ground-truth GLT model to obtain the true conditional activation probabilities. Finally, for each of the four candidate models, we plot the true probabilities against their predicted values along with the corresponding $95\%$ asymptotic confidence intervals computed using the Delta method, as described at the end of Section \ref{consist_section}, shown in Figure \ref{fig:node-activation}.  We also evaluate RMAE between the predicted and true probabilities, as well as the confidence interval average length and coverage, that is, the proportion of times it contains the ground truth probability.
According to RMAE and coverage metrics, the best performance is clearly obtained by using the true model, Beta$(2,1)$, while the average confidence interval length roughly equals 0.04 for all four models.
When misspecified as the "hard to influence" threshold model $\operatorname{Beta}(3, 1)$  (shown in Figure \ref{fig:community_beta}), activation probabilities close to 0 get underestimated, and those close to 1 overestimated.   With the "easy to influence" threshold models given by the uniform and the exponential, which correspond to LT and IC models, all activation probabilities tend to be underestimated. Confidence interval coverage meets the nominal target of 95\% under the true model, and is considerably lower under misspecified models.    

\begin{figure}[tbh!]
    \centering
    \includegraphics[width=0.99\linewidth]{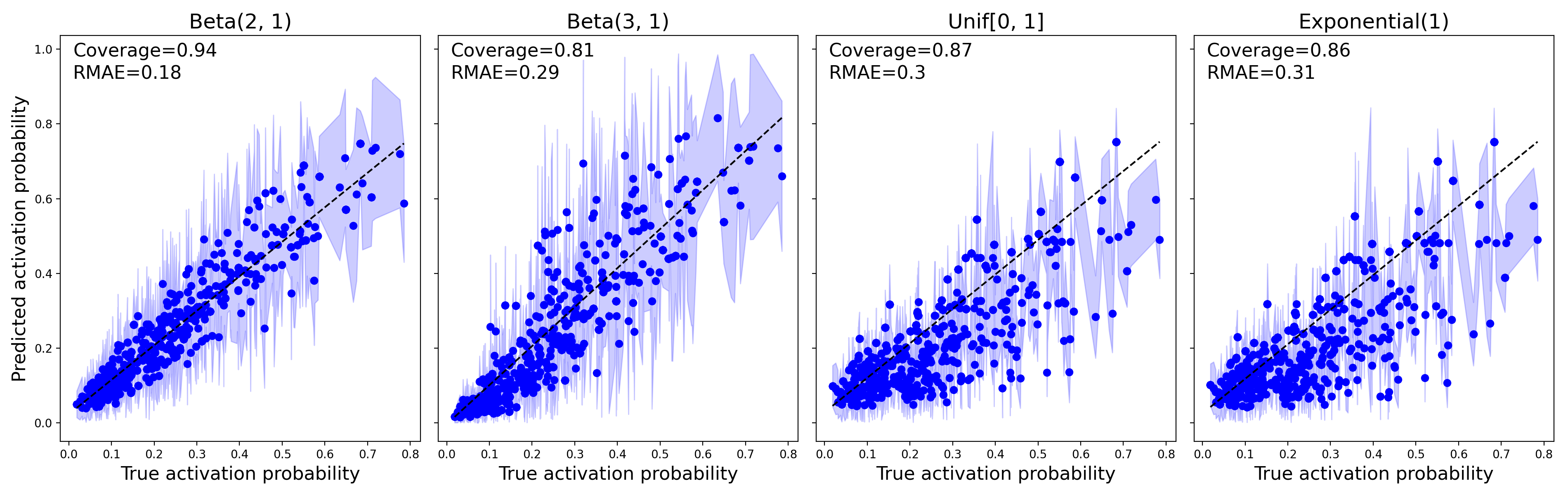} 

    \caption{Estimated node activation probabilities together with the corresponding Delta method confidence intervals computed under different GLT models with $\operatorname{Beta}(2, 1)$-GLT as the ground truth.   
    }
\label{fig:node-activation}
\end{figure}

\subsection{The GLT model in the Influence Maximization problem}\label{IM_experiments_subsection}

In this section, we demonstrate that using an appropriate diffusion model, which can be learned from trace data, can significantly improve the quality of the seed set obtained by Algorithm \ref{algorithm:greedy}.
To explore the behavior of IM solutions across different network instances, we sample 10 networks $G_\ell, \ell=1, \ldots, 10$ from the CWS model with $n=100$, $p=0.2$, and $k=10$. For each network, we generate $N=2000$ traces from the ground-truth GLT model with weights sampled according to \eqref{weight_generating_interior_simplex} with $d_{max}=1$, and the threshold distribution is set to $  F_v \sim \operatorname{Beta}(1, \beta_v)$, with  $\beta_v \sim \operatorname{Uniform}\{1, 2, 3, 4, 5\}.$

To examine how misspecification of the diffusion model impacts IM solutions, we compare the following methods: the LT model and the IC model fitted by solving \eqref{opt_prob_changed_sum}, and the GLT model with both weights and threshold distributions estimated by solving problem \eqref{individ_opt_prob_two_sets}, where the threshold distribution for each node $v$ is assumed to be $\operatorname{Beta}(1, \beta_v)$, with $\beta_v$ estimated from data.  
As benchmarks, we also include the LT model with weights assigned via the following heuristics from  \cite{goyal2011databased}:
	\begin{itemize}
		\item Weighted Cascade ({\bf WC}): The weight of an edge $(u, v)\in E$ is estimated as the inverse of the in-degree of $v$, i.e., $\hat{b}_{u,v} = 1 / |P(v)|$. 
		\item Propagated Trace Proportion ({\bf PTP}):  The weight of edge $(u, v)$ is estimated based on the ratio between the number of traces where $u$ is activated before $v$ and the number of traces where $u$ is activated. Normalization is used to ensure that the in-degree of each node equals 1:
		$$\hat{b}_{u,v} \propto {|\{n: u \in D_{t_u}^{(n)}, v\in D_{t_v}^{(n)},\ t_u < t_v\} | \over |\{n: u \in A(\mathcal{D}_n)\}|}, \qquad \sum_{u\in P(v)} \hat{b}_{u, v} = 1.
		$$
	\end{itemize}

\begin{figure}[t!]
    \centering     \includegraphics[width=0.7\linewidth]{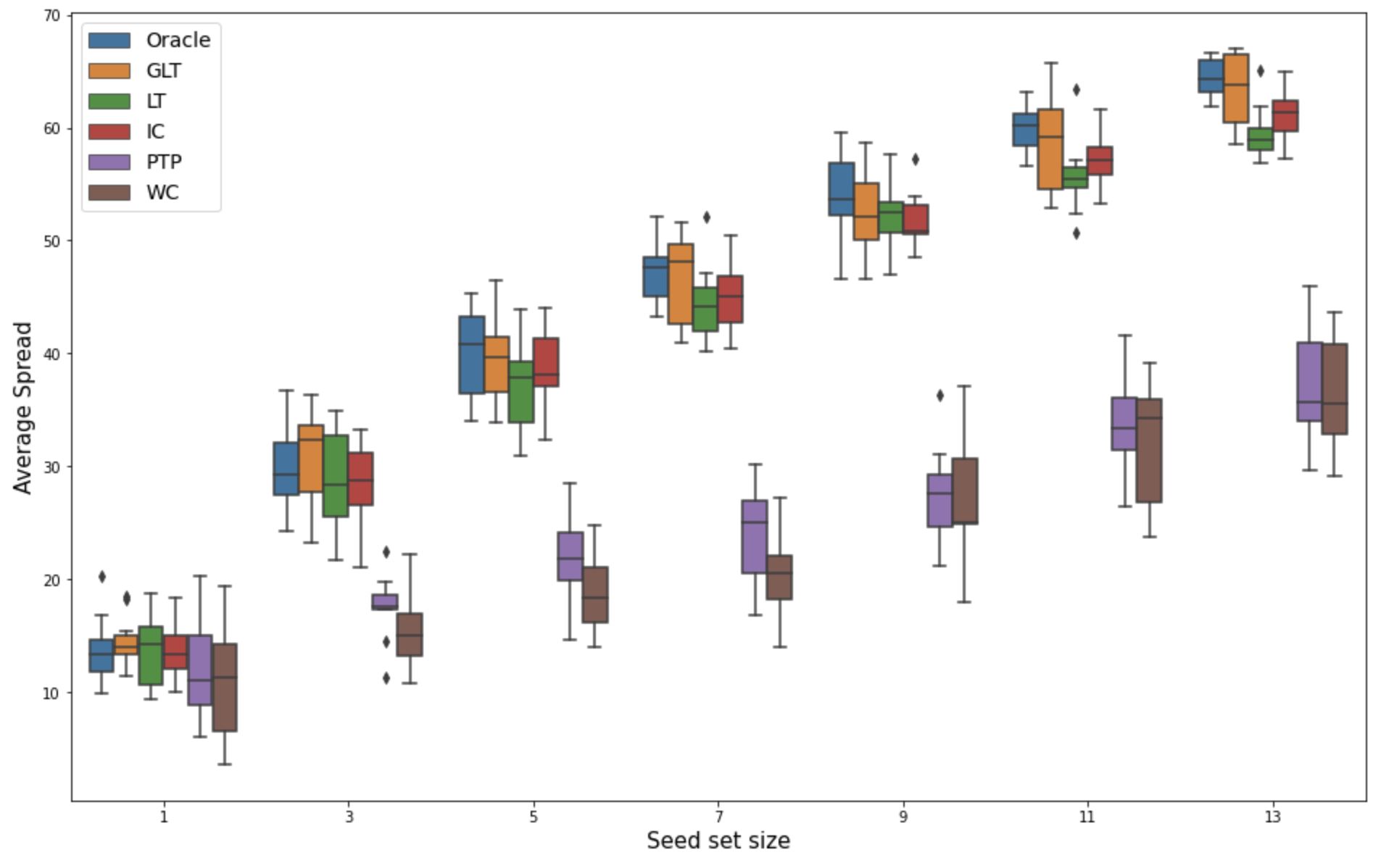}
    \caption{Comparison of the average spread across different seed set sizes, where the seeds are selected by a greedy algorithm under five candidate diffusion models and the ground truth. 
    The ground truth threshold distributions are $F_v \sim \operatorname{Beta}(1, \beta_v)$, where $\beta_v \sim \operatorname{Unif}\{1, 2, 3, 4, 5\}$. 
    Each box plot represents the estimated spread across 10 networks, where the spread is averaged over five fitted models (each trained on a separate set of 2000 traces). All values of the spread function $\sigma(\cdot)$ are estimated using $1000$ Monte Carlo simulations.}
    \label{fig:im_problem_res}
\end{figure}

\noindent
For each of the diffusion models described above and each generated network, we follow these steps.  First, we estimate the model weights and, in the case of GLT, also the $\beta_v$ values via a grid search over $\{1, 2, \dots, 10\}$. Then we 
\begin{enumerate}
\item  Run Algorithm \ref{algorithm:greedy} with seed set size $k = \{1, 4, 7, 10, 13\}$ under the fitted diffusion model to obtain a seed set $\hat{S}_{k}$. The influence function $\sigma$ in Algorithm \ref{algorithm:greedy} is approximated using $1000$ Monte Carlo simulations.
\item   Run the diffusion 1000 times from the seed set obtained in Step 1 and obtain the average spread $\hat{\sigma}_{k}$.

\end{enumerate}
As a benchmark, we also compute the average spread under the ground truth GLT model, referred to as the {\it oracle}.

Figure \ref{fig:im_problem_res} presents boxplots (across the 10 networks) of $\hat{\sigma}_{k}$ for different values of $k$, ranging from 1 to 13.  As the seed set size $k$ increases, the choice of model has a greater impact, likely because for very small seed set size the greedy algorithm tends to select the most connected nodes under any model.  However, significant differences in spread emerge as the seed set size increases.  As expected, LT and IC are inferior to the ground truth and the GLT model, and the heuristics work poorly.   
The GLT model with estimated threshold distributions achieves performance comparable to that of the oracle.

\subsection{Spread estimation}
In the previous experiment, we only evaluated the spreads from seed sets that were selected by the IM algorithm.   In some applications, we may also be interested in assessing spread, for example, of fake news or a virus, initiated from a given seed set that has not been optimally selected. Here we show that, for propagation under the GLT model, selecting an accurate threshold distribution can significantly improve the accuracy of spread estimation from any seed set.

As in Section \ref{node_activation_experiment_section}, we consider a CWS network with $n=100$, $p=0.2$, and $k=10$, and the associated GLT model with weights sampled according to \eqref{weight_generating_interior_simplex}. This time, we set the ground-truth threshold distributions as $F_v\sim\operatorname{Beta}(2, 2)$ for all child nodes $v\in V_c$. With seed sets generated as in \eqref{seed_set_generator} with $s_{\max}=20$, we generate a set of 1000 train traces and  additional 500 test seed sets. We pick a more "easy-to-influence" model and higher than usual $s_{\max}$ to make the observed traces spread farther, and so allow studying the effect of model misspecification on the full range of trace lengths.
Then, for each candidate distribution $\operatorname{Beta}(2, 2)$, $\operatorname{Beta}(1, 2)$, $\operatorname{Beta}(2, 1)$, and $\operatorname{Unif}[0, 1]$ (LT model), we use the training set to estimate the weights of the GLT model under the assumption of the candidate distribution for the node thresholds. We then compute the predicted spread by running the estimated GLT models 1000 times from each test seed set. We also do this with the ground-truth GLT model to obtain the true spreads. In Figure \ref{fig:spread_compar}, we plot the estimated spreads against the ground truth for each candidate GLT model and report the RMAE between them. Similarly to the node activation probabilities, the spread from a given seed set is also a function of the GLT model weights; however, it is a complicated implicit function that does not lend itself to a delta method calculation, so instead we estimated the spread empirically, by repeating the simulation 10 times.  The variability was negligible on the scale of the plots, and is thus not shown; this is expected since spread estimators are known to be robust to small perturbations in diffusion model weights (see, for example, \cite{goyal2011databased}). 

\begin{figure}[h!]
    \centering    \includegraphics[width=0.99\linewidth]{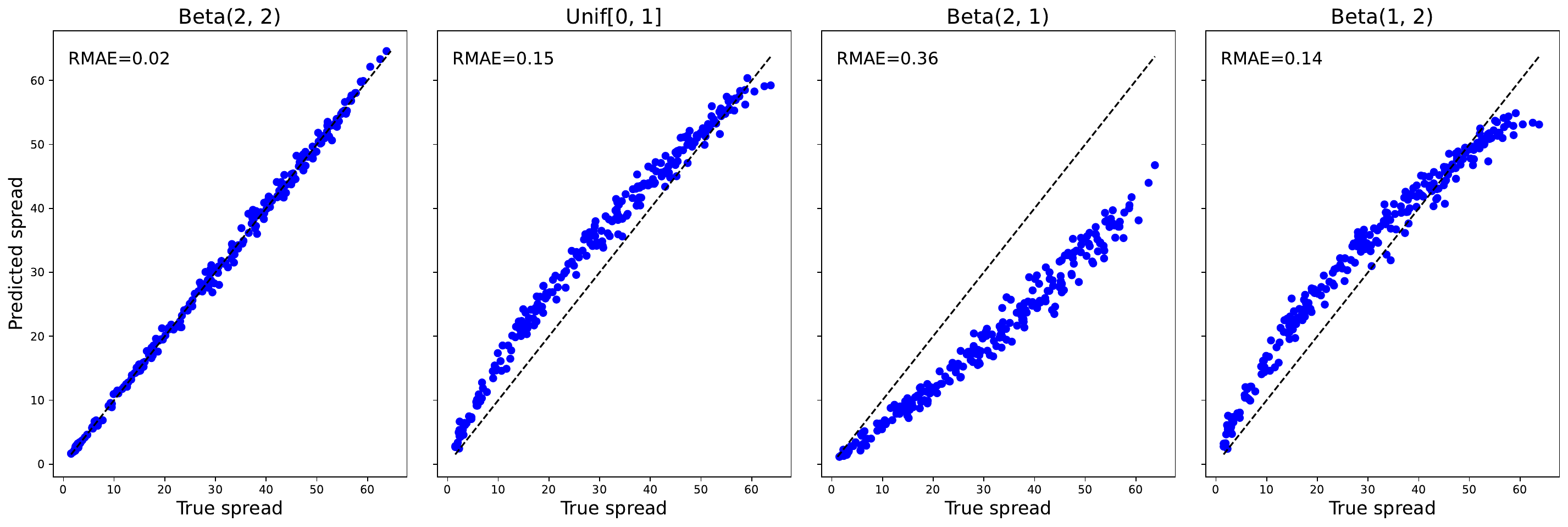}
    \caption{Comparison between the ground-truth spreads from 500 test trace sets and the estimated spreads for several candidate GLT models with a fixed $\operatorname{Beta}$ threshold distribution for all nodes.  The trace-generating model is the GLT with $F_v\sim \operatorname{Beta(2, 2)}$ for all nodes. Weights for each candidate GLT model are estimated using 1000 training traces.}
    \label{fig:spread_compar}
\end{figure}
In Figure \ref{fig:spread_compar}, we observe that the ground-truth model clearly outperforms the other candidates, even the LT model with the same mean of $1/2$, suggesting that getting the shape of the threshold distribution right is important. As a more general conclusion, we can notice that the more "easy-to-influence" models, such as $\operatorname{Beta}(1, 2)$, tend to overestimate the spread and the more "hard-to-influence" models, such as $\operatorname{Beta}(2, 1)$, tend to underestimate it.

\subsection{The movie ratings example}
In this section, we apply the proposed weight estimation procedures to the Flixster dataset, collected from www.flixster.com, a popular social media platform for movie ratings. The dataset contains an undirected, unweighted social network of approximately 1 million users and over 8 million time-stamped ratings of movies by the users, in the time period from 2005 to 2009. Following  \cite{goyal2011databased}, we represent these ratings as {\it action logs}, which is a collection of triples $(u, a, t)$, where $u$ represents the user ID, $a$ the movie ID, and $t$ the time when the user rated the movie; the value of the rating is disregarded.     \cite{goyal2011databased} analyzed this dataset under the assumption that if user $u$ rated movie $a$ before user $v$, and $u$ and $v$ are connected, then $v$ was influenced by $u$.  There is of course the possibility that $v$ rated the movie independently of $u$, but they showed that even without accounting for this possibility, learning from the action logs can significantly improve the estimation of influence propagation. Here, our goal is to compare the accuracy of different diffusion models for predicting node activation events.

To process the data, we first removed all users who rated fewer than 20 movies, 
then applied the algorithm \citep{core_extraction} to extract the core sub-graph of the remaining users, and finally extracted the largest connected component of the core. This resulted in a network of 8174 nodes, approximately 50K undirected edges (which we doubled to create directed edges), and approximately 2.1M action logs.  

The next step is to transform the action logs into trace or pseudo-trace data. 
Inferring full propagation traces from action logs is challenging, even in simple scenarios. For example, consider a graph of three connected users who rated the same movie at distinct times $t_1 < t_2 < t_3$. There are already five possible ways to construct the corresponding trace -- $(\{1, 2,3\}), (\{1\}, \{2, 3\}), (\{1, 2\}, \{3\}),(\{1\}, \{2, 3\})$, or $(\{1\}, \{2\}, \{3\})$ -- because of the inherent ambiguity of converting the continuous time stamps into discrete propagation events.    While one could pick a threshold to decide whether two time stamps should be considered the same or not in the discrete time space, it would be arbitrary, and the resulting analysis can be sensitive to this arbitrary choice.  Instead, we use the pseudo-trace framework described in Section \ref{pseudo_trace_section}. 

\begin{figure}[h!]
    \centering
\includegraphics[width=0.8\linewidth]{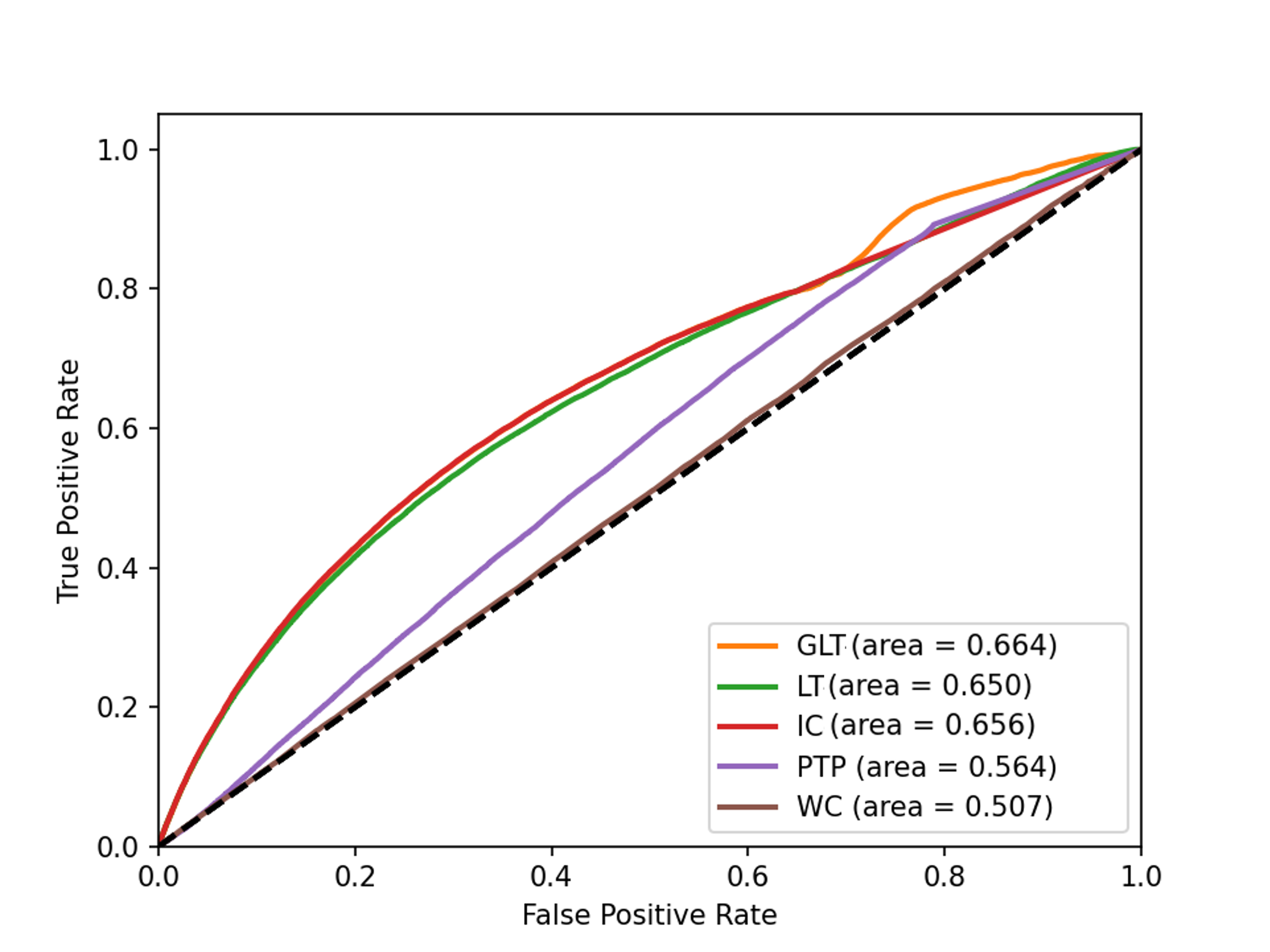}
    \caption{ROC curves and AUC scores for the activation probabilities computed on the test pseudo-trace set, from candidate diffusion models estimated on the training pseudo-trace set.  }
    \label{fig:real_data_prob_predict_compar}
\end{figure}

For each user (node) $v$, we extract the pseudo-traces where $v$ was activated by identifying all movies $a$ they rated and noting the set $A_v^{(a)} \subset P(v)$ of $v$'s parents who rated $a$ before $v$ did. For pseudo-traces where $v$ was not activated, we consider all movies $a$ that $v$ did not rate but at least one of their parents did, noting the set $A_v^{(a)}$ of all its active parents at the last recorded time point in the action log.  We then randomly split the trace data into training (80\%) and test (20\%) sets, stratified by the activation status of the node. 
Using the same candidate diffusion models as in Section \ref{IM_experiments_subsection}, we fit the models on the training pseudo-traces and compute predicted node activation probabilities on the test pseudo-traces. The only difference from the settings in Section \ref{IM_experiments_subsection} is a larger parameter grid for estimation of the Beta distribution parameters, where we allows both $\alpha_v$ and$\beta_v$ to range from 1 to 10.  
Note that for the fitting problem \eqref{individ_opt_prob_two_sets}, we only require $F_v$ to be log-concave, which is satisfied when both $\alpha \ge 1$ and $\beta \ge 1$ for the Beta distribution; it is only for solving the IM problem that $F_v$ is required to be concave.  

Figure \ref{fig:real_data_prob_predict_compar} presents the resulting ROC curves and AUC scores for the estimated diffusion models.   These results are for one random split into training and test;  variability in AUC scores under different random splits was less than 0.001.  The results show that GLT performs the best, followed fairly closely by IC and LT, and these three models significantly  outperform the heuristic PTP and WC.   While this dataset does not represent an ideal test case for our model, since the traces are not observed directly, it demonstrates that flexibility in modeling the thresholds can help even when neither the data collection mechanism nor, presumably, the true propagation model exactly match the GLT framework.

\section{Discussion}
	
	\label{ch:conclusion}
	
In this paper, we have proposed a new flexible framework for information diffusion on networks, the general linear threshold (GLT) model.  We derived identifiability conditions for edge weights which are weaker than previously available, developed a statistically principled likelihood-based method to estimate the edge weights from fully or partially observed traces, and proved that these estimates are $\sqrt{n}$-consistent and asymptotically normal when threshold distributions are known. 
We also proposed a parametric approach to estimating threshold distributions, which saves computational time but is relatively inflexible compared to nonparametric distribution estimators;  we leave that for future work.   In the parametric setting, it would also be of interest to establish identifiability and consistency conditions for the GLT model where both the parent edge weights $\boldsymbol{\theta}_v$ and the threshold distribution parameters $\boldsymbol{\varphi}_v$ vary with $v$.  

In Section \ref{IM_section}, we illustrated the application of the GLT to the IM problem, and established the relationship between the quality of the IM solution under the GLT model errors for the case of bipartite graphs;  establishing this relationship for more general graph classes is a topic for future work.  Another  important question to study is stability of the IM solution to misspecification of node threshold distributions.  

To allow for easy use with arbitrary distribution of thresholds, our implementation of the likelihood optimization problem \eqref{opt_prob_changed_sum} uses the SLSQP solver, which accommodates both convex and non-convex problems. This choice was made to allow for fitting the GLT model with non-log-concave threshold densities, which may violate the convexity condition in Proposition \ref{logconc} but be desirable in practice. When convexity is guaranteed, however, optimization efficiency can be significantly improved by using convex solvers, such as Gurobi or Mosek. Improving optimization speed and quality is another goal for future work.

\bibliographystyle{chicago}
\bibliography{references}

\appendix
 \section{Appendix}
	
	\label{ch:appendix}



\subsection{Section \ref{gltm_introduce_section} Proofs}\label{gltm_introduce_section_proofs}
\begin{proof}[Proof of Proposition \ref{glt_vs_trig}]
Consider a GLT model on the star graph of in-degree 3 in Figure \ref{fig:instar_graph} with equal weights $b_{u,v} = 1/3$. Let node $4$ have a threshold cdf $F$ satisfying
$$F(0) = 0, \ F(1/3) = 0.5, \ F(2/3) = 0.85, \ F(1) = 1.$$ 
Threshold distributions of nodes 1, 2, and 3 do not affect the diffusion model as other nodes cannot activate them.  Any triggering model on this graph will have only 8 relevant parameters,  representing probabilities of each possible subset of $\{1,2,3\}$ to be the triggering set for node 4. We will denote this probability distribution as
$$\mathcal{P} = \{P_\emptyset, P_1, P_2, P_3, P_{12}, P_{13}, P_{23}, P_{123}\}.$$
If the GLT model was a special case of the Triggering model, a distribution  $\mathcal{P}$  would exist such  that the activation probability of node 4 is the same for the two models given any seed set, i.e., the following linear system should have a solution:
$$
\begin{cases}
    P_1 + P_{12} + P_{13} + P_{123} &=  F(b_{1, 4}) = 0.5 \\
    P_2 + P_{12} + P_{23} + P_{123} &=  F(b_{2, 4}) = 0.5 \\
    P_3 + P_{13} + P_{23} + P_{123} &=  F(b_{3, 4}) =0.5\\
    P_1 + P_2 + P_{12} + P_{13} + P_{23} + P_{123} &=  F(b_{1, 3} + b_{2,4}) = 0.85\\
    P_1 + P_3 + P_{12} + P_{13} + P_{23} + P_{123} &=  F(b_{1, 3} + b_{3,4}) = 0.85\\
    P_2 + P_3 + P_{12} + P_{13} + P_{23} + P_{123} &=  F(b_{1, 3} + b_{2,4}) = 0.85\\
    P_1 + P_2 + P_3 + P_{12} + P_{13} + P_{23} + P_{123} &=  F(b_{1, 4} + b_{2,4} + b_{3, 4}) = 1
\end{cases}
$$
Solving this system, we can verify that a solution exists, but it has $P_{123} = - 0.05$, and thus is not a valid probability distribution. 
\end{proof}


\subsection{Section \ref{identif_section} Proofs}\label{identif_section_proofs}

\begin{proof}[Proof of Lemma \ref{pos_trace_prob_lemma}]
    Note that the trace probability in \eqref{individ_trace_prob} consists of at most $|V_c|$ terms (excluding the positive seed set probability), as only child nodes can be activated at $t \ge 1$ and each node in a trace can be activated at most once. In turn, for each node $v \in V_c$, the corresponding term is either $F_v(B_v(A_t; \boldsymbol{\theta}_v)) - F(B_v(A_{t-1}; \boldsymbol{\theta}_v))$ or $1- F_v(B_v(A_{T}; \boldsymbol{\theta}_v)$. By trace feasibility,
    $D_{t} \cap P(v)$ and $A_T \cap P(v)$ are non-empty. Coupled with the strict monotonicity of $F_v$,  it implies that for any $\boldsymbol{\theta}_v \in \tilde{\Theta}_v$ that
    $$F_v\left[B_v(A_{t-1}; \boldsymbol{\theta}_v)\right] < F_v\left[B_v(A_{t}; \boldsymbol{\theta}_v)\right] \quad \text{and} \quad F\left[B_v(A_T; \boldsymbol{\theta}_v)\right] \le F_v\left[B_v(P(v); \boldsymbol{\theta}_v)\right] < 1.$$
   Thus, both types of terms are positive, which completes the proof.
\end{proof}

 Before we proceed to prove Proposition \ref{all_reachable_nodes_propos}, we establish a useful lemma, showing the equivalence between the node reachability and its appearance in a trace.

\begin{lemma}\label{reachability_equiv_lemma}
     The following conditions on node $u\in V$ are equivalent:
    \begin{enumerate}   
        \item[(a)] For all $\boldsymbol{\theta} \in \Tilde{\Theta}$, it holds $\mathbb{P}_{\boldsymbol{\theta}}(u\in A(\mathcal{D})) > 0$; 
        \item[(b)] There exists $\boldsymbol{\theta} \in \Tilde{\Theta}$ such that $\mathbb{P}_{\boldsymbol{\theta}}(u\in A(\mathcal{D})) > 0$;  
        \item[(c)] Node $u$ is reachable.
    \end{enumerate}    
\end{lemma}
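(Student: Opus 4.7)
The natural strategy is to prove the cyclic chain of implications (a) $\Rightarrow$ (b) $\Rightarrow$ (c) $\Rightarrow$ (a), since (a) is the strongest statement and (c) the structural/combinatorial one that is easiest to manipulate.

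The implication (a) $\Rightarrow$ (b) is immediate because $\tilde{\Theta}$ is nonempty (it contains, for example, any vector with all entries equal to $\varepsilon$, which satisfies $\|\boldsymbol{\theta}_v\|_1 \le |P(v)|\varepsilon < \gamma$ by choice of $\gamma$). So a single witness $\boldsymbol{\theta}$ exists.

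For (b) $\Rightarrow$ (c), I would argue by contrapositive: suppose $u$ is not reachable, so $\mathbb{P}^0(u\in D_0)=0$ and no node $v$ with $\mathbb{P}^0(v\in D_0) > 0$ has a directed path to $u$. I would show by induction on $t$ that $\mathbb{P}_{\boldsymbol{\theta}}(u \in A_t) = 0$ for every $\boldsymbol{\theta}\in\tilde{\Theta}$. The base case $t=0$ follows from the reachability assumption on $\mathbb{P}^0$. For the inductive step, feasibility (Definition 2.1, item 3) forces any newly activated node to have a newly activated parent at the previous step; iterating this backward along any sample path that contains $u$ produces a directed path from some seed node to $u$, contradicting non-reachability. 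Hence $u$ never appears in a trace, which contradicts (b).

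The main work is in (c) $\Rightarrow$ (a). Given reachability, pick a seed node $v_0$ with $\mathbb{P}^0(v_0 \in D_0) > 0$ and a directed path $v_0 \to v_1 \to \cdots \to v_\ell = u$ in $G$ (when $u$ itself is in the support of $\mathbb{P}^0$, the argument is even simpler). Choose any seed set $S\ni v_0$ with $\mathbb{P}^0(S) > 0$, and consider the feasible trace $\mathcal{D} = (S, \{v_1\}, \{v_2\}, \ldots, \{v_\ell\})$, where at each step only the next node in the path activates and every other candidate child fails to activate. It is straightforward to check that $\mathcal{D}\in\mathcal{F}(G)$: the seed set is nonempty, the singletons are disjoint from each other and from $S$ (by choosing a simple path, which exists since $G$ is a finite directed graph and $u$ is reachable from $v_0$), and every $v_{t}$ has its parent $v_{t-1}$ newly activated at time $t-1$. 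Because $\mathbb{P}^0(S) > 0$, Lemma \ref{pos_trace_prob_lemma} immediately gives $\mathbb{P}_{\boldsymbol{\theta}}(\mathcal{D}) > 0$ for every $\boldsymbol{\theta}\in\tilde{\Theta}$, and hence $\mathbb{P}_{\boldsymbol{\theta}}(u\in A(\mathcal{D})) > 0$.

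The main obstacle I anticipate is being careful about the simple-path construction (ensuring disjointness of $S, \{v_1\}, \ldots, \{v_\ell\}$ so the trace is feasible) and keeping the induction in (b) $\Rightarrow$ (c) clean; otherwise the argument reduces cleanly to invoking Lemma \ref{pos_trace_prob_lemma} once the right feasible trace has been identified.
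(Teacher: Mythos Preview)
Your proposal is correct and follows essentially the same approach as the paper: the cyclic chain (a)$\Rightarrow$(b)$\Rightarrow$(c)$\Rightarrow$(a), with (b)$\Rightarrow$(c) obtained by tracing back parents via feasibility and (c)$\Rightarrow$(a) obtained by constructing the singleton-path trace $(D_0,\{v_1\},\ldots,\{v_\ell\})$ and invoking Lemma~\ref{pos_trace_prob_lemma}. The only cosmetic difference is that you phrase (b)$\Rightarrow$(c) as a contrapositive while the paper argues it directly, but the underlying backward-path argument is identical; you are also slightly more careful than the paper in flagging the disjointness of $S$ from the path singletons (handled by taking the last path node lying in $S$ as the new starting point).
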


\begin{proof}[Proof of Lemma \ref{reachability_equiv_lemma}]

    Statement (a) trivially implies (b).  To show (b) implies (c), note that (b) implies that there is a feasible trace $\mathcal{D}=(D_0, \ldots, D_t, \ldots, D_T)$ with $\mathbb{P}^0(D_0) > 0$ and $u\in D_t$.   We prove by induction over $\tau \ge 1$ that there is a path to $u$ from a node $w_{t-\tau} \in D_{t-\tau}$.    First,  for $\tau = 1$, since $\mathcal{D}$ is feasible, $u$ should have a parent $w_{t-1}$ in $D_{t-1}$.   Now suppose the induction hypothesis holds for $\tau$.  If there is a path to $u$ from  $w_{t-\tau}\in D_{t-\tau}$, there is also a path from  $D_{t-\tau-1}$ since $w_{t-\tau}$ should have a parent in $D_{t-\tau-1}$ by feasibility.
     This results in a path $(w_0, \ldots, w_{t-1}, u)$ connecting $D_0$ and $u$ which implies that $u\in R$.
     
     To prove (c) implies (a), consider an arbitrary $u\in R$. If $\mathbb{P}^0(u\in D_0) > 0$, then (a) holds.  Assume now there is a sequence of nodes $(w_0, w_1, \ldots, w_T=u)$ such that $(w_{t-1}, w_{t}) \in E$ for all $t=1, \ldots, T$ and $w_0\in D_0$ with $\mathbb{P}^0(D_0) > 0$. Then,  $\mathcal{D} = (D_0, \{w_1\}, \ldots, \{w_T\})$ is a feasible trace that has a positive probability for any $\boldsymbol{\theta}\in \Tilde{\Theta}$ according to Lemma \ref{pos_trace_prob_lemma}.
\end{proof}

\begin{proof}[Proof of Proposition \ref{all_reachable_nodes_propos}]
Suppose there is an unreachable non-isolated node $u \in V$. Then, by Lemma \ref{reachability_equiv_lemma}, there is $\boldsymbol{\theta}\in \Tilde{\Theta}$ such that $\mathbb{P}_{\boldsymbol{\theta}}(u \in A(\mathcal{D})) = 0$. If $u$ has a child $v$, the edge $(u,v)$ will never participate in a trace and thus changing $b_{u,v}$, while keeping all other weights fixed, preserves $\mathbb{P}_{\boldsymbol{\theta}}$. If $u$ has a parent $v$, then $v$ should also be unreachable, as otherwise, there would be a directed path from some positive probability seed set to $u$ passing through $v$. But then changing $b_{v,u}$ alone again does not change $\mathbb{P}_{\boldsymbol{\theta}}$. In both cases, we get a contradiction with identifiability.
\end{proof}

We now conclude this section with the proof of the identifiability theorem.
\begin{proof}[Proof of Theorem \ref{identif_theorem}]
     (Sufficiency)  Suppose there are distinct vectors of parameters $\boldsymbol{\theta}=\{b_{u,v}: (u,v)\in E\}$ and $\Tilde{\boldsymbol{\theta}}=\{\Tilde{b}_{u,v}:  (u,v)\in E\}$ in $\Tilde{\Theta}$ for which $\mathbb{P}_{\boldsymbol{\theta}} = \mathbb{P}_{\Tilde{\boldsymbol{\theta}}}$. Since $\boldsymbol{\theta} \ne \Tilde{\boldsymbol{\theta}}$, there is an edge $(u,v)\in E$ such that $b_{u,v} \ne \Tilde{b}_{u,v}$. 
    Consider the subsets $S_j, j=1,\ldots, m$ of $P(v)$ together with the corresponding traces $\mathcal{D}_j=(D_0^{(j)}, \ldots, D_{t_j}^{(j)})$ satisfying conditions of the theorem. For each $\mathcal{D}_j$, let $t_{jk}, k=1, \ldots, r_j$ with $t_{jr_j} = t_j$ denote all time points before $t_j$ when $D_{t_{jk}} \cap P(v) \ne \emptyset$.  As the equality of distributions $\mathbb{P}_{\boldsymbol{\theta}}$ and $\mathbb{P}_{\Tilde{\boldsymbol{\theta}}}$ implies equality of the corresponding seeded diffusion model instances, we have for any time point $t_{jk}$:
    $$\mathbb{P}_{\boldsymbol{\theta}}(v \notin D_{t_{jk} + 1}|D_0, \ldots, D_{t_{jk}}) = \mathbb{P}_{\Tilde{\boldsymbol{\theta}}}(v \notin D_{t_{jk} + 1}|D_0, \ldots, D_{t_{jk}}),$$
    which implies by \eqref{trans_prob_glt} that 
    \begin{equation}\label{eq_marginal_ps}
        {1 - F_v\left(B_v(A_{t_{jk}}^{(j)}; \boldsymbol{\theta}_v\right) \over 1 - F_v\left(B_v(A_{t_{jk} - 1}^{(j)}; \boldsymbol{\theta}_v))\right)}=  {1 - F_v\left(B_v(A_{t_{jk}}^{(j)}; \tilde{\boldsymbol{\theta}}_v)\right) \over 1 - F_v\left(B_v(A_{t_{jk} - 1}^{(j)}; \tilde{\boldsymbol{\theta}}_v))\right)}.
     \end{equation}
    Taking the product of these equalities across $t_{jk}$ for $k \le r_j$ and $k < r_j$, we obtain by telescoping 
    $$
       F_v\left(B_v(A_{t_{j}}^{(j)}; \boldsymbol{\theta}_v) \right)=  F_v\left(B_v(A_{t_{j}}^{(j)}; \tilde{\boldsymbol{\theta}}_v)\right) \quad \text{and} \quad  F_v\left(B_v(A_{t_{j} - 1}^{(j)}; \boldsymbol{\theta}_v) \right)=  F_v\left(B_v(A_{t_{j} - 1}^{(j)}; \tilde{\boldsymbol{\theta}}_v)\right),
    $$
    where for the second equality we used $A_{t_{j, r_j -1}}^{(j)} \cap P(v) = A_{t_j - 1}^{(j)} \cap P(v)$.
    By the monotonicity of $F_v$, we can apply its inverse to both sides of the equations above to deduce 
    $$ B_v(A_{t_{j}}^{(j)}; \boldsymbol{\theta}_v )=  B_v(A_{t_{j}}^{(j)}; \tilde{\boldsymbol{\theta}}_v) \quad \text{and} \quad  B_v(A_{t_{j} - 1}^{(j)}; \boldsymbol{\theta}_v)=  B_v(A_{t_{j} - 1}^{(j)}; \tilde{\boldsymbol{\theta}}_v).
    $$
   By definition of $\mathcal{D}_j$, we have $D_{t_{j}}^{(j)} \cap P(v) = S_j$, so, by subtracting the above equations we obtain  $B_v(S_j; \boldsymbol{\theta}_v)=  B_v(S_j; \tilde{\boldsymbol{\theta}}_v)$.  Aggregating the resulting equalities across $j=1, \ldots, m$, we obtain
    $\tilde{X}_v^\top\boldsymbol{\theta}_{v} = \tilde{X}_v^\top\Tilde{\boldsymbol{\theta}}_{v}.$
 But according to our assumption, $\tilde{X}_v$ is invertible, thus $\boldsymbol{\theta}_{v} = \Tilde{\boldsymbol{\theta}}_{v}$  and $b_{u,v}=\Tilde{b}_{u,v}$ in particular. Contradiction. 
    \\
    \\
    (Necessity) Suppose $\{\mathbb{P}_{\boldsymbol{\theta}},\boldsymbol{\theta}\in \Tilde{\Theta}\}$ is identifiable but there is $v\in V_c$ with $P(v) = \{u_1, \ldots, u_m\}$ for which conditions of the theorem do not hold.
    Take an arbitrary $\boldsymbol{\theta}_v$ in the interior of $ \Tilde{\Theta}_v$, which is non-empty by definition of $\varepsilon_0$.
    Our further goal is to obtain a contradiction by constructing $\Tilde{\boldsymbol{\theta}} \in \Tilde{\Theta}$ coinciding with $\boldsymbol{\theta}$ everywhere except for $\boldsymbol{\theta}_v$ so that $\mathbb{P}_{\Tilde{\boldsymbol{\theta}}} = \mathbb{P}_{\boldsymbol{\theta}}$. Consider all possible subsets $S_j \subset P(v), j=1, \ldots, k$  which satisfy condition 1 of the theorem. Then, our assumption implies that the matrix $\tilde{X}_v = [\mathbf{1}(u_i \in S_j)] \in \{0, 1\}^{m \times k}$ has $\operatorname{rank}(\tilde{X}_v) < m$. In other words, there is a non-zero vector $z\in \mathbb{R}^m$ such that $\tilde{X}_v^\top z = \boldsymbol{0}_k$. As $\boldsymbol{\theta}_v$ lies in the interior of $\tilde{\Theta}_v$ by definition, we can find a sufficiently small scalar $\delta > 0$, such that $\Tilde{\boldsymbol{\theta}}_{v} = \boldsymbol{\theta}_{v} + \delta z $ still lies in $\Tilde{\Theta}_v$ while preserving \begin{equation}\label{equal_theta_tilde}
       \tilde{X}_v^\top\boldsymbol{\theta}_{v} = \tilde{X}_v^\top\Tilde{\boldsymbol{\theta}}_{v}.
    \end{equation}
    Note that by changing $\boldsymbol{\theta}_v$ alone, we could only change the probability of a feasible trace $\mathcal{D}=(D_0,\ldots, D_T)$ with $\mathbb{P}^0 (D_0) > 0$ and $P(v) \cap D_t \ne \emptyset$ for some $t\le t_v:=\arg\max_{\tau \le T}\{ \tau: v\notin A_\tau\}$ where $t_v$ is the last time $v$ is not activated. Indeed, if $\mathbb{P}^0 (D_0) = 0$ then $\mathbb{P}_{\boldsymbol{\theta}}(\mathcal{D}) = \mathbb{P}_{\Tilde{\boldsymbol{\theta}}}(\mathcal{D}) = 0$ and if $P(v) \cap D_t = \emptyset$ for any  $t\le t_v$, then by \eqref{individ_trace_prob}, trace probability does not depend on $\boldsymbol{\theta}_v$.
    Take such a trace and consider all times $s_j \le t_v, j=1\ldots,r$  for which $P(v) \cap D_{s_j}\ne \emptyset$. 
    Note that for each time $s_j$, the trace $(D_0, \ldots, D_{s_j})$ is also feasible and satisfies condition 1 of the theorem. Therefore, there is a corresponding column $x_j:= [\mathbf{1}(u_i \in D_{s_j})]_{i=1}^m$ in matrix $\tilde{X}_v$ which according to \eqref{equal_theta_tilde} satisfies
    $$B_v(D_{s_j}; \boldsymbol{\theta}_v) = \langle x_j, \boldsymbol{\theta}_v\rangle = \langle x_j,\Tilde{\boldsymbol{\theta}}_v \rangle = B_v(D_{s_j}; \tilde{\boldsymbol{\theta}}_v).
    $$
    Summing these equations over $j \le r$ and $j<r$, we obtain, respectively,
    $$B_v(A_{t_v}; \boldsymbol{\theta}_v) =B_v(A_{t_v}; \tilde{\boldsymbol{\theta}}_v)\ \quad \text {and} \quad B_v(A_{t_v - 1}; \boldsymbol{\theta}_v)=  B_v(A_{t_v-1}; \tilde{\boldsymbol{\theta}}_v).
    $$
    But from \eqref{individ_trace_prob}, trace probability is either a function of $B_v(A_{T}; \boldsymbol{\theta}_v)=B_v(A_{t_v}; \boldsymbol{\theta}_v)$ if $v\notin A_T$ or of  
    $B_v(A_{t_v}; \boldsymbol{\theta}_v)$ and $B_v(A_{t_v - 1}; \boldsymbol{\theta}_v)$ if $v\in A_T$. Therefore, 
    we deduce $\mathbb{P}_{\boldsymbol{\theta}}(\mathcal{D}) = \mathbb{P}_{\Tilde{\boldsymbol{\theta}}}(\mathcal{D})$ and arrive at a contradiction with identifiability of $\Tilde{\Theta}$.
\end{proof}

\subsection{Section \ref{weight_estim_section} Proofs}\label{weight_estim_section_proofs}
\begin{proof}[Proof of Proposition \ref{logconc}]
    Let $F$ be an arbitrary cumulative distribution function with density $f$. We need to show that \begin{equation}\label{integral_form}
        F(x) - F(y) = \int_\mathbb{R} \mathbf{1}(y < t \le x) f(t) dt
    \end{equation}
    is concave on $\{(x, y): x > y\}$.
    Note that $g(x,y,u) := \mathbf{1}(y < t \le x)$ is a log-concave function since for any $\lambda \in [0, 1]$ and points $A_1 = (x_1, y_1, u_1), A_2 = (x_2, y_2, u_2)$ with $x_1 > y_1, x_2 > y_2$, it holds:
\begin{align*}
    g(\lambda A_1 + (1-\lambda) A_2) &= \mathbf{1}\left[\lambda x_1 + (1-\lambda)x_2 < \lambda u_1 + (1-\lambda) u_2 \le \lambda y_1 + (1-\lambda) y_2\right] \\
    &\ge \mathbf{1}\left[x_1 < u_1 \le y_1\right] \mathbf{1}\left[x_2 < u_2 \le y_2\right] \\
    &= \mathbf{1}\left[x_1 < u_1 \le y_1\right]^\lambda \mathbf{1}\left[x_2 < u_2 \le y_2\right]^{1-\lambda} \\
    &= g(A_1)^\lambda g(A_2)^{1-\lambda}.
\end{align*}
Thus, the expression under the integral in \eqref{integral_form} is log-concave as a product of log-concave functions. Finally, by Theorem 6 in \cite{Prkopa}, the integral of a multivariate log-concave function w.r.t. any of its arguments is also log-concave. 
This completes the proof.
\end{proof}

\subsection{Section \ref{consist_section} Proofs} \label{consist_section_proofs}
     Along with the data matrix $X_v$, it will be convenient for this section to define its cumulative row-sum version. More formally, for
     $n\in\mathcal{I}_v$ and $t\in\mathcal{T}_v^{(n)}$, denote the indicator vector of all the parent nodes active at time $t$ within trace $n$ as 
    $z_{t}^{(n)} = [\mathbf{1}(u_i \in A_{t}^{(n)})]_{i=1}^m$, its concatenation over $t\in\mathcal{T}_v^{(n)}$ as $Z_v^{(n)}$, and the further concatenation over $n\in\mathcal{I}_v$ as $Z_{v}\in\{0, 1\}^{N_v \times m}$. 
    
    We start with a technical lemma relating the minimum eigenvalues of $X_v^{(n)}$ and $Z_v^{(n)}$. Consider arbitrary $X\in\mathbb{R}^{k\times m}$ and let $R\in\{0,1\}^{k \times k}$  denote a lower triangular matrix with $R_{ij} = \mathbf{1}[j \le i]$. Then the rows of $Z = RX$ are cumulative sums of the rows of $X$.  Let $\lambda_{\min}(\cdot)$ denote the minimal eigenvalue of a symmetric matrix.  
    \begin{lemma}\label{lemma_delta_X_eigval_ralt}
        With $X$ and $Z$ defined as above, 
        $$\lambda_{\min}(XX^\top) \le 4\lambda_{\min}(ZZ^\top).
        $$
    \end{lemma}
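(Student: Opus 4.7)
The plan is to exploit the invertibility of $R$ and the explicit form of $R^{-1}$. Since $R$ is lower triangular with ones on and below the diagonal, $R^{-1}$ is the ``differencing'' operator: lower bidiagonal with $1$ on the main diagonal and $-1$ on the subdiagonal. Because $X = R^{-1} Z$, for any nonzero $a \in \mathbb{R}^k$ I can write $b = (R^{-1})^\top a$ (equivalently $a = R^\top b$) and obtain the identity
\[
\frac{\|X^\top a\|^2}{\|a\|^2} \;=\; \frac{\|Z^\top b\|^2}{\|R^\top b\|^2}.
\]
This change of variables is a bijection on $\mathbb{R}^k$, so minimizing the left-hand side over $a$ is equivalent to minimizing the right-hand side over $b$.

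Next I would choose $b = b^*$ to be a unit eigenvector of $ZZ^\top$ corresponding to $\lambda_{\min}(ZZ^\top)$. The numerator then equals $\lambda_{\min}(ZZ^\top)$, while the denominator satisfies $\|R^\top b^*\|^2 = (b^*)^\top RR^\top b^* \ge \lambda_{\min}(RR^\top)$. Using the left-hand side as a valid test value for the Rayleigh quotient of $XX^\top$ yields
\[
\lambda_{\min}(XX^\top) \;\le\; \frac{\|X^\top a\|^2}{\|a\|^2} \;=\; \frac{\lambda_{\min}(ZZ^\top)}{\|R^\top b^*\|^2} \;\le\; \frac{\lambda_{\min}(ZZ^\top)}{\lambda_{\min}(RR^\top)}.
\]
So the problem reduces to establishing the universal lower bound $\lambda_{\min}(RR^\top) \ge 1/4$, which is the main obstacle since this bound must hold independently of the size $k$.

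For that, I would pass to $(RR^\top)^{-1} = (R^{-1})^\top R^{-1}$ and use the fact that, for square matrices, $(R^{-1})^\top R^{-1}$ and $R^{-1}(R^{-1})^\top$ share the same spectrum. A direct calculation using the bidiagonal form of $R^{-1}$ shows that $R^{-1}(R^{-1})^\top$ is tridiagonal with main diagonal $(1, 2, 2, \ldots, 2)$ and sub/superdiagonal entries equal to $-1$. Applying Gershgorin's circle theorem, every row's disc has center at most $2$ and radius at most $2$, so all eigenvalues lie in $[0, 4]$. Hence $\lambda_{\max}((RR^\top)^{-1}) \le 4$, which is exactly $\lambda_{\min}(RR^\top) \ge 1/4$. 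Substituting this into the previous display completes the proof. As a consistency check, the known spectrum of the $\min(i,j)$ matrix $RR^\top$ has smallest eigenvalue tending to $1/4$ as $k\to\infty$, so the constant $4$ in the lemma is asymptotically sharp under this approach.
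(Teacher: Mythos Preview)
Your proof is correct and follows essentially the same overall strategy as the paper: both reduce the inequality to the bound $\lambda_{\min}(RR^\top)\ge 1/4$ via a Rayleigh-quotient argument (the paper states it in one line as $\lambda_{\min}(RXX^\top R^\top)\ge \lambda_{\min}(RR^\top)\lambda_{\min}(XX^\top)$, which is the same computation you carry out with the substitution $a=R^\top b$).

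The one genuine difference is how the bound $\lambda_{\min}(RR^\top)\ge 1/4$ is established. The paper notes that $(RR^\top)^{-1}$ is tridiagonal with a known characteristic polynomial and quotes the exact value $\lambda_{\min}(RR^\top)=1/[4\cos^2(\pi/(4k+2))]\ge 1/4$. You instead compute $R^{-1}(R^{-1})^\top$ explicitly and apply Gershgorin to get $\lambda_{\max}\le 4$, hence $\lambda_{\min}(RR^\top)\ge 1/4$. Your route is more elementary and self-contained (no appeal to special eigenvalue formulas), while the paper's route gives the exact smallest eigenvalue and thereby shows the constant $4$ is asymptotically sharp---a point you recover only as a consistency remark rather than as a byproduct of the argument.
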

    \begin{proof}[Proof of Lemma \ref{lemma_delta_X_eigval_ralt}]
 Since$[RR^\top]^{-1}$ is a tri-diagonal Toeplitz matrix with conveniently computed characteristic polynomial, one can show that $\lambda_{\min}(RR^\top) = 1 /[4\cos^2(\pi/(4k + 2))] \ge 1/4$. Therefore, we obtain
        \begin{align*}
            \lambda_{\min}(ZZ^\top)&= \lambda_{\min}(RXXR^\top) \ge \lambda_{\min}(RR^{\top})\lambda_{\min}(XX^\top) \ge {1\over 4} \lambda_{\min}(XX^\top).
        \end{align*}
    \end{proof}
    
With Lemma \ref{lemma_delta_X_eigval_ralt} at hand, we are ready to prove Proposition \ref{ic_and_lt_satify_concavity}.
    \begin{proof}[Proof of Proposition \ref{ic_and_lt_satify_concavity}]
    With $n \in\mathcal{I}_v$ and $t\in\mathcal{T}_v^{(n)}$, denote
    $d_{n, t+1} (\boldsymbol{\theta}_v) := F_v(\boldsymbol{\theta}^\top_vz^{(n)}_{t}) - F_v(\boldsymbol{\theta}^\top_vz^{(n)}_{t-1})$. Additionally, let $d_{n, T_n + 1} (\boldsymbol{\theta}_v) := 1-F_v(\boldsymbol{\theta}^\top_vz^{(n)}_{T_n}) = 1-F_v(\boldsymbol{\theta}^\top_vz^{(n)}_{t(v, n)})$. Observe that for any $t \in\mathcal{T}_v^{(n)}$, we have
    \begin{align*}
        &\mathbb{P}[y^{(n)}_{t+1}=1 \mid X_v] = \mathbb{P}[B_v(A_{t-1}^{(n)}; \boldsymbol{\theta}_v^*)< U_v \le B_v(A_t^{(n)}; \boldsymbol{\theta}_v^*)\mid X_v] = d_{n, t+1}(\boldsymbol{\theta}_v^*) \quad \text{and}\\
        &\mathbb{P}[y^{(n)}_{t+1}=0 \text{ for all } t\in \mathcal{T}_v^{(n)} \mid X_v] = \mathbb{P}[B_v(A_{T_n}^{(n)}; \boldsymbol{\theta}_v^*)< U_v \mid X_v] = d_{n, T_n + 1}(\boldsymbol{\theta}_v^*).
    \end{align*}
   Note that by strict monotonicity and continuity of $F_v$, we can lower bound $d_{n, t+1}(\boldsymbol{\theta}_v), t\in \mathcal{T}_v^{(n)} \cup \{T_n\}$ uniformly for all $\boldsymbol{\theta}_v \in \tilde{\Theta}_v$ and for $\boldsymbol{\theta}^*_v$ in particular as
    $$d_{\min} = \min_{\substack{z_1, z_2 \in [0, \gamma], \\ z_1 \ge z_2 + \varepsilon}} \bigl[F_v(z_1) - F_v(z_2)] > 0.
    $$
    Moreover, with all considered threshold distributions having a log-concave density, it holds $\nabla^2 \log d_{n, t+1} \succeq 0$ for any $n\in\mathcal{I}_v, t\in\mathcal{T}_v^{(n)}\cup \{T_n\}$ per Proposition \ref{logconc}. So, we can lower-bound the expected Hessian as:
    \begin{equation}\label{lower_bound_expected_hessian}
    \begin{aligned}
        \mathbb{E}[-\nabla^2L_v(\boldsymbol{\theta}_v)\mid X_v] &=-\sum_{n\in \mathcal{I}_v}\sum_{t\in\mathcal{T}_v^{(n)}\cup \{T_n \}}d_{n, t+1} (\boldsymbol{\theta}^*_v) \nabla^2 \log d_{n, t+1} (\boldsymbol{\theta}_v)  \\
        &\succeq -d_{\min} \sum_{n\in \mathcal{I}_v}\sum_{t\in\mathcal{T}_v^{(n)}}\nabla^2 \log d_{n, t+1}(\boldsymbol{\theta}_v),
    \end{aligned} 
    \end{equation}
    where each term in the final expression can be expressed using the chain rule in terms of $H_v(z_1, z_2) =\nabla^2_{z_1, z_2} \log (F_v(z_1) - F_v(z_2)) \in \mathbb{R}^{2 \times 2}$ and $L_{n, t} = [z^{(n)}_{t}; z^{(n)}_{t-1}] \in \mathbb{R}^{m \times 2}$ as:
    $$\nabla^2\log d_{n, t+1}(\boldsymbol{\theta}_v)  = L_{n, t}\cdot H_v(\boldsymbol{\theta}^\top_vz^{(n)}_{t}, \boldsymbol{\theta}^\top_vz^{(n)}_{t-1})\cdot L_{n, t}^\top.
    $$
    
    One can show by checking $\det(H_v) > 0$ and $H_{v, 11} < 0$ that for $F_v\sim \operatorname{Beta}(\alpha, \beta)$ with $\alpha \ge 1, \beta \ge 1$ except for the case $\alpha=\beta=1$ (uniform distribution), the Hessian $H_v(z_1, z_2)$ is strictly negative definite for any $z_1 > z_2$. In this case, by continuity of the Hessian and the mapping $A\mapsto \lambda_{\min}(A)$ taking the minimum eigenvalue of a negative definite matrix, we can define
    $$c_1 = \min_{\substack{z_1, z_2 \in [0, \gamma], \\ z_1 \ge z_2 + \varepsilon}} \lambda_{\min}\bigl(H_{v}(z_1, z_2)\bigr) > 0.
    $$
    With that, the last line of \eqref{lower_bound_expected_hessian} can be further lower bounded by 
    $$d_{\min}c_1 \sum_{n\in\mathcal{I}_v} \sum_ {t\in \mathcal{T}_v^{(n)}}L_{n, t}L_{n, t}^\top \succeq d_{\min}c_1Z_v^\top Z_v,
    $$
    where by Lemma \ref{lemma_delta_X_eigval_ralt}, the smallest eigenvalue of the RHS is lower bounded  by ${d_{\min}c_1\over 4}\lambda_{\min}(X_v^\top X_v).$
    The result follows by substituting $c_\lambda=d_{\min}c_1 / 4$.
    
    For $F_v \sim \operatorname{Unif}[0, 1]$ and $F_v\sim \operatorname{Exponential}(1)$,
    the Hessian can be computed with $g(z_1, z_2)=1/ (z_1 - z_2)^2$ and $g(z_1, z_2)= e^{-(z_1+z_2)} / (e^{-z_2} - e^{-z_1})^2$, respectively,  as
    $$H_v(z_1, z_2) = g(z_1, z_2)\begin{pmatrix}
        -1  & 1 \\
        1 & -1
    \end{pmatrix}.$$
   Note that $H_v(z_1, z_2)$ is a negative semi-definite matrix of rank 1 with eigenvector $h=(1, -1)$ and corresponding eigenvalue $\lambda_h(z_1, z_2) < 0$, satisfying $H_v(z_1, z_2)=\lambda_h(z_1, z_2) hh^\top$. By continuity of $H_v$ and the largest eigenvalue of a matrix, we can define 
   $$c_2 = \min_{\substack{z_1, z_2\in [0, \gamma], \\ z_1 \ge z_2 + \varepsilon}} -\lambda_h(z_1, z_2) > 0$$
    and further lower bound the last line of \eqref{lower_bound_expected_hessian} as follows:
    $$ d_{\min}c_2\sum_{n\in\mathcal{I}_v} \sum_ {t\in \mathcal{T}_v^{(n)}}(L_{n, t} h) (L_{n, t} h)^\top = d_{\min}c_2X_v X_v^\top.
    $$
    Letting $c_\lambda=c_2d_{\min}$ completes the proof.
    \end{proof}
    
    Before we proceed to the proof of Theorem \ref{mle_consistency_theorem}, we present its main idea.  The key step is to reformulate \eqref{opt_prob_changed_sum} as a binary classification problem, for which a finite sample result can be conveniently derived by treating the design $X_v$ as fixed and the response as the only source of randomness. The response here is the activation indicator of node $v$, defined for each trace $n\in\mathcal{I}_v$ and $t\in \mathcal{T}_{v}^{(n)}$ as
    $$y^{(n)}_{t + 1} =\mathbf{1}(v\in D_{t + 1}^{(n)}).$$ 
    Then, the log-likelihood of node $v$ in \eqref{individ_node_likelihood} can be rewritten as  \begin{equation}\label{individ_node_loglik_as_classifier}
        L_v(\boldsymbol{\theta}_v) = \sum_{n\in \mathcal{I}_v}\sum_{t\in\mathcal{T}_{v}^{(n)}} \mathbf{1}[t\le t(v, n)]\Bigl\{y_{t + 1}^{(n)}\log \bigl[p_{t+1}^{(n)}(\boldsymbol{\theta}_v)\bigr] + (1-y_{t + 1}^{(n)})\log \bigl[1 - p_{t+1}^{(n)}(\boldsymbol{\theta}_v)\bigr]\Bigr\},
    \end{equation}
    where $p_{t + 1}^{(n)}(\boldsymbol{\theta}_v)$ is the GLT transition probability defined in \eqref{trans_prob_glt} and written with new notation as 
    \begin{equation}\label{tran_prob_glt_classif}
    p_{t + 1}^{(n)}(\boldsymbol{\theta}_v) = {F_v(\boldsymbol{\theta}_v^\top z_{t}^{(n)}) - F_v(\boldsymbol{\theta}_v^\top z_{t - 1}^{(n)}) \over 1 - F_v(\boldsymbol{\theta}_v^\top z_{t - 1}^{(n)})}.
    \end{equation}
    Note that the activation time $t(v, n)$ in this framework is random, and $y_{t+1}^{(n)} = 0$ for any $t < t(v, n)$ and $y_{t+1}^{(n)} = 1$ implies $t=t(v,n)$. 
    In particular, this means that the responses $\mathbf{y}^{(n)} = \{y_{t + 1}^{(n)}: t\in \mathcal{T}_{v}^{(n)}, t\le t(v, n)\}$ are dependent conditionally on $X_v^{(n)}$ and their number is random, since we do not receive more observations after $v$ is activated. This creates additional technical complexity relative to a binary classification model. Fortunately, $\mathbf{y}^{(n)}$ are independent across $n \in \mathcal{I}_v$ due to the assumed trace independence.

Given the log-likelihood of the binary classification model in \eqref{individ_node_loglik_as_classifier}, our next goal is to establish a high probability (with respect to the randomness in $\{\mathbf{y}^{(n)}, n\in \mathcal{I}_v\}$) error bound for the model's MLE conditional on the design $X_v$. 

\begin{proof}[Proof of Theorem \ref{mle_consistency_theorem}]
    Fix a node $v\in V_c$ with $m = |P(v)|$ and $\delta \in (0, 1)$. Denote the negative log-likelihood term corresponding to the trace $n\in\mathcal{I}_v$ and time point $t\in \mathcal{T}_v^{(n)}$ in   \eqref{individ_node_loglik_as_classifier}by 
    $$\ell_{n, t}(\boldsymbol{\theta}_v) =  -\mathbf{1}[t\le t(v, n)]\Bigl\{y_{t + 1}^{(n)}\log \bigl[p_{t+1}^{(n)}(\boldsymbol{\theta}_v)\bigr] + (1-y_{t + 1}^{(n)})\log \bigl[1 - p_{t+1}^{(n)}(\boldsymbol{\theta}_v)\bigr]\Bigr\}.
    $$
    Then, \eqref{opt_prob_changed_sum} essentially solves the following problem:
    $$\hat{\boldsymbol{\theta}}_v = \arg\min_{\boldsymbol{\theta}_v \in \tilde{\Theta}_v} \ell_{N_v}(\boldsymbol{\theta}_v), \quad \text{where} \quad \ell_{N_v}(\boldsymbol{\theta}_v) ={1\over N_v} \sum_{n\in \mathcal{I}_v} \sum_{t\in\mathcal{T}_v^{(n)}}\ell_{n, t}(\boldsymbol{\theta}_v).
    $$

    Our first step is to establish a high probability bound for $\|\nabla\ell_{N_v}(\boldsymbol{\theta}_v^*)\|_2$. 
    With $f_v = F'_v$ denoting the threshold density, we can express the log-likelihood gradient of each sample as 
    \begin{equation}\label{one_sample_loglik_grad}
        \nabla\ell_{n, t}(\boldsymbol{\theta}_v) =  c_{n, t}(\boldsymbol{\theta}_v) \nabla p_{t+1}^{(n)}(\boldsymbol{\theta}_v)\mathbf{1}[t\le t(v, n)] , 
    \end{equation}
    where $c_{n, t}(\boldsymbol{\theta}_v) = { p_{t+1}^{(n)} - y_{t+1}^{(n)} \over p_{t+1}^{(n)}(1-p_{t+1}^{(n)})}$ and 
    $$\nabla p_{t+1}^{(n)}(\boldsymbol{\theta}_v) ={[1-F_v(\boldsymbol{\theta}_v^\top z_{t-1}^{(n)})]  f_v(\boldsymbol{\theta}_v^\top z_{t}^{(n)})z_{t}^{(n)} - [1-F_v(\boldsymbol{\theta}_v^\top z_{t}^{(n)})]f_v(\boldsymbol{\theta}_v^\top z_{t-1}^{(n)})z_{t-1}^{(n)} \over [1-F_v(\boldsymbol{\theta}_v^\top z_{t-1}^{(n)})]^2}.
    $$
    Since $f_v$ is continuous, it achieves its maximum $f_{\max}$ on the interval $[0, \gamma]$ and minimum $f_{\min}$ on $[\varepsilon, \gamma]$. Note that both quantities are positive as both intervals lie in the support of $f_v$. Thus, by definition of $\Tilde{\Theta}$ and feasibility, we have $f_v(\boldsymbol{\theta}_v^\top z_{t-1}^{(n)}) \in [0, f_{\max}]$ and 
    $f_v(\boldsymbol{\theta}_v^\top z_{t}^{(n)}) \in [f_{\min}, f_{\max}]$. Then, for each coordinate $j=1, \ldots, m$ of $\nabla p_{t+1}^{(n)}(\boldsymbol{\theta}_v)$ the triangular inequality implies
    \begin{equation}      \label{nabla_p_bound_coord}
        |[\nabla p_{t+1}^{(n)}(\boldsymbol{\theta}_v)]_j| \
        \le  {2 f_{\max}  \over [1 - F_v(\gamma)]^2} .
    \end{equation}
    To bound $c(\boldsymbol{\theta}_v)$, we first show that $p_{t+1}^{(n)}$ is bounded away from 0 and 1.
    Note that $p_{t+1}^{(n)}(\boldsymbol{\theta}_v) \le F_v(\gamma)$. For the lower bound, if $z_{t-1}^{(n)} = \boldsymbol{0}_m$,
    then $p_{t+1}^{(n)}(\boldsymbol{\theta}_v) \ge F_v(\varepsilon)$ and otherwise, by the mean-value theorem and feasibility there is $\xi \in  [\varepsilon, \gamma]$ such that $p_{t+1}^{(n)}(\boldsymbol{\theta}_v) = f_v(\xi) \boldsymbol{\theta}_v^\top(z_{t}^{(n)} - z_{t-1}^{(n)}) \ge f_{\min}\varepsilon$.
    To summarize,
    \begin{equation*}
        F_v(\gamma) \ge p_{t+1}^{(n)}(\boldsymbol{\theta}_v) \ge \min\{f_{\min}\varepsilon, F_v(\varepsilon)\},
    \end{equation*}
   which implies together with $|p_{t+1}^{(n)}(\boldsymbol{\theta}_v)  - y_{t+1}^{(n)}| \le 1$ that 
    \begin{equation}\label{grad_const_bound}
        |c(\boldsymbol{\theta}_v)| = \Bigl|{p_{t+1}^{(n)}(\boldsymbol{\theta}_v) - y_n\over p_{t+1}^{(n)}(\boldsymbol{\theta}_v)(1-p_{t+1}^{(n)}(\boldsymbol{\theta}_v))}\Bigr|  \le {1 \over \min\{f_{\min}\varepsilon, F_v(\varepsilon)\}(1-F_v(\gamma))}.
    \end{equation}
Combining \eqref{nabla_p_bound_coord} and \eqref{grad_const_bound}, we obtain
\begin{equation}\label{grad_loglik_bound_coord}
|[\nabla\ell_{n, t}(\boldsymbol{\theta}_v^*)]_j | \le {2f_{\max} \over \min\{f_{\min}\varepsilon, F_v(\varepsilon)\}(1-F_v(\gamma))^3} = C_0'(\varepsilon, \gamma, F_v).
\end{equation}
To bound $|\sum_{n\in \mathcal{I}_v}\sum_{t\in\mathcal{T}^{(n)}_v}[\nabla\ell_{n, t}(\boldsymbol{\theta}_v)]_j|$, we will use Azuma-Hoeffding inequality exploiting the convenient conditional structure of the subsequent terms in the sum \footnote{The following derivation is inspired by the proof of Lemma 3 in \citep{InferringGraphs2015}.}. Denote $Y_{n, t} := [\nabla\ell_{n, t}(\boldsymbol{\theta}_v^*)]_j$ and note that
it is identically zero if $[\nabla p_{t +1}^{(n)}(\boldsymbol{\theta}_v^*)]_j = 0$ and otherwise satisfies
\begin{align*}
        \mathbb{E}[Y_{n, t} \ |\ \{Y_{n, \tau}\}_{\tau < t}, X_v^{(n)}]  
        &= \mathbb{E}[Y_{n, t} \ |\ \{y^{(n)}_{\tau + 1} = 0\}_{\tau < t},  X_v^{(n)}]  \\
        &\propto \mathbb{E}[y_{t + 1}^{(n)} - p_{t +1}^{(n)}(\boldsymbol{\theta}_v^*) \ | \ \{y^{(n)}_{\tau + 1} = 0\}_{\tau < t}, X_v^{(n)}] = 0.
\end{align*}
On the other hand, $Y_{n, t}$'s with different $n$ are independent conditional on $Z_v$, so partial sums of $\sum_n\sum_t Y_{n, t}$ form a martingale. Since each term in the sum is almost surely bounded by \eqref{grad_loglik_bound_coord}, we have by Azuma-Hoeffding for any $z > 0$:
$$\mathbb{P}\bigl({1\over N_v}\bigl|\sum_{n\in \mathcal{I}_v}\sum_{t\in\mathcal{T}^{(n)}_v} Y_{n,t} \bigr|\ge  z\bigr) \le 2\exp(-z^2N_v / 2C_0').
$$
Using the union bound over the coordinates $j=1,\ldots, m$ and inequality relating the $\ell_2$ and $\ell_\infty$ norms, we have 
$$\mathbb{P}\bigl(\|\nabla\ell_{N_v}\|_2\ge  z\bigr) \le \mathbb{P}\bigl(\|\nabla\ell_{N_v}\|_\infty \ge  z / \sqrt{m}\bigr) \le 2m\exp(-z^2N_v / 2mC_0') 
$$
Denoting the RHS by $\delta / 2$ and solving for $z$, we obtain:
\begin{equation}\label{grad_concetration}
    \mathbb{P}\bigl(\|\nabla\ell_{N_v}\|_2\ge  \sqrt{ {2C_0'm \over N_v}\log{4m\over \delta}}\bigr) \le \delta / 2
\end{equation}


Our next step is to demonstrate that the Hessian of $\ell_{N_v}(\boldsymbol{\theta}_v)$ is positive definite on the whole $\Tilde{\Theta}_v$ with high probability. 
Denote $H_{n} = \nabla^2[\sum_{t\in \mathcal{T}_v^{(n)}} \ell_{n, t}(\boldsymbol{\theta}_v)]$ so that by Assumption \ref{invertible_hessian_assump}, it holds $\mathbb{E}[{1\over N_v}\sum_{n\in\mathcal{I}_v} H_n] = \mathbb{E}[\nabla^2\ell_{N_v}(\boldsymbol{\theta}_v)] \succeq \lambda_{\min}I$. By telescoping, each $H_n$ can be conveniently rewritten with $t(v, n)$  denoted by $t^*$ for brevity:
\begin{equation}  \label{init_decomp_hessian}
     H_{n} = -y_{t^* + 1}^{(n)} \nabla^2\log [F_v(\boldsymbol{\theta}_v^\top z_{t^*}^{(n)}) - F_v(\boldsymbol{\theta}_v^\top z_{t^*-1}^{(n)})] -(1-y_{t^* + 1}^{(n)}) \nabla^2\log [1-F_v(\boldsymbol{\theta}_v^\top z_{t^*}^{(n)})] 
\end{equation}  
By Assumption \ref{invertible_hessian_assump}, the density $f_v$ is log-concave, implying, per Proposition \ref{logconc}, that $H_n$ is positive semidefinite. 
Our goal now is to apply a Chernoff-type concentration on the sum of independent positive semidefinite matrices $H_n$ to show that it is close to its positive definite expectation, and thus, with high probability, is positive definite itself. For that, we also need the largest eigenvalue of each $H_n$ to be bounded. Denote the two additive terms in the panel above by $H_{n, 1}$ and $H_{n, 2}$ , so that $H_n = H_{n, 1} + H_{n, 2}$. By the chain rule, the spectral norm of the second one is dominated  by
$$\|H_{n, 2}\|_2 \le \|z_{t^*}^{(n)}z_{t^*}^{(n)\top}\|_2\sup_{z \in [\varepsilon, \gamma]} |{d\over dz^2}\log [1 - F_v(z)]| \le C_2(\varepsilon, \gamma, F_v)m,
$$
where the supremum is achieved since the function inside is continuous on the interval by Assumption \ref{assump2}.
Again, by the chain rule and submultiplicativity of the norm, we also have
$$\|H_{n, 1}\|_2 \le \| [z_{t^*}^{(n)};z_{t^*-1}^{(n)}]\|^2_2 \sup_{ \substack{z_1, z_2\in[0, \gamma]\\ \ z_1 \ge z_2 + \varepsilon}} \|\nabla^2_{z_1, z_2} \log[F_v(z_1) - F_v(z_2)]\|_2 \le C_1(\varepsilon, \gamma, F_v)m
$$
where we again used the fact that a continuous function on a compact set achieves its maximum. Therefore, with $c_0' := C_1 + C_2 $, we have $\|H_n\|_2\le c_0'm$. So, by Corollary 5.2 in \citep{Tropp_2011}, we have the following concentration for any $z \in [0, 1]$:
$$\mathbb{P}\bigl( \bigl\|{1\over N_v}\sum_{n\in\mathcal{I}_v} H_n(\boldsymbol{\theta}_v) \|_2 \le \lambda_{\min} z \bigr) \le m\exp[-(1-z)^2 \lambda_{\min} N_v/ 2c_0'm].
$$
Set $z := 1/2$ and $c_0 = 8c_0'$. Then, the corresponding upper bound on the probability above is dominated by $\delta / 2$ when the assumed condition $N_v \ge {c_0m \over \lambda_{\min}}  \log{2m\over \delta}$ is satisfied:
\begin{equation}\label{hessian_concentration}
   \mathbb{P}\bigl(\bigl\|\nabla^2\ell_{N_v}(\boldsymbol{\theta}_v) \bigr\|_2 \le {\lambda_{\min} \over 2} \bigr) \le m\exp[-\lambda_{\min} N_v/ c_0m] \le \delta / 2.
\end{equation}
With that, the union bound implies that the gradient concentration in \eqref{grad_concetration} and the Hessian concentration in \eqref{hessian_concentration} hold together with probability at least $1-\delta$. Assume that both of these events hold. Expanding $\ell_{N_v}$ at $\boldsymbol{\theta}_v^*$ gives for some $\tilde{\boldsymbol{\theta}}_v = \boldsymbol{\theta}_v^* + z(\hat{\boldsymbol{\theta}}_v - \boldsymbol{\theta}_v^*)$  with $z\in [0, 1]$:
$$ \ell_{N_v}(\hat{\boldsymbol{\theta}}_v) = \ell_{N_v}(\boldsymbol{\theta}_v^*) + \nabla\ell_{N_v}(\boldsymbol{\theta}_v^*)^\top (\hat{\boldsymbol{\theta}}_v - \boldsymbol{\theta}_v^*) + {1\over 2}(\hat{\boldsymbol{\theta}}_v - \boldsymbol{\theta}_v^*)^\top\nabla^2\ell_{N_v}(\tilde{\boldsymbol{\theta}}_v)(\hat{\boldsymbol{\theta}}_v - \boldsymbol{\theta}_v^*).
$$
Since $\hat{\boldsymbol{\theta}}_v, \boldsymbol{\theta}_v^* \in \tilde{\Theta}_v$ by Assumption \ref{assump2} and $\tilde{\Theta}_v$ is a convex set by definition, we have  $\tilde{\boldsymbol{\theta}}_v \in \tilde{\Theta}_v$, so  the quadratic term above is lower bounded by ${\lambda_{\min}  \over 4}\|\hat{\boldsymbol{\theta}}_v - \boldsymbol{\theta}_v^*\|_2$.
On the other hand, $\ell_{N_v}(\boldsymbol{\theta}_v^*) \ge \ell_{N_v}(\hat{\boldsymbol{\theta}}_v)$ by optimality, which implies together with the Cauchy-Schwarz inequality 
$$\|\nabla\ell_{N_v}(\boldsymbol{\theta}_v^*)\|_2\|\hat{\boldsymbol{\theta}}_v - \boldsymbol{\theta}_v^*\|_2  \ge -\nabla\ell_{N_v}(\boldsymbol{\theta}_v^*)^\top (\hat{\boldsymbol{\theta}}_v - \boldsymbol{\theta}_v^*) \ge {\lambda_{\min} \over 4}  \|\hat{\boldsymbol{\theta}}_v - \boldsymbol{\theta}_v^*\|_2^2.
$$
The needed bound is obtained by dividing through $\|\hat{\boldsymbol{\theta}}_v - \boldsymbol{\theta}_v^*\|_2$ and plugging in the gradient concentration from \eqref{grad_concetration} with $C_0 = 4\sqrt{2C_0'}$.
\end{proof}

We conclude this section with a proof of the asymptotic normality result:
\begin{proof}[Proof of Proposition \ref{asympt_normality_propos}]
    By Theorem 5.1 in \citep{lehmann1998theory}, the MLE $\hat{\boldsymbol{\theta}}$ of $\boldsymbol{\theta}^*$ is consistent and satisfies
    \begin{equation}\label{normal_all_params}
        \sqrt{N}(\hat{\boldsymbol{\theta}} - \boldsymbol{\theta}^*) \stackrel{\mathbb{P}_{\boldsymbol{\theta}^*}}{\longrightarrow} \mathcal{N}(0, \Sigma(\boldsymbol{\theta}^*)) \quad \text{as} \quad N\rightarrow \infty,
    \end{equation}
    where $\Sigma^{-1}(\boldsymbol{\theta}) = \mathbb{E}_{\mathcal{D}\sim\mathbb{P}_{\boldsymbol{\theta}^*}}[-\nabla^2 \log\mathbb{P}_{\boldsymbol{\theta}}(\mathcal{D})]$,
    if (a) the distribution family $\{\mathbb{P}_{\boldsymbol{\theta}}, \boldsymbol{\theta}\in \tilde{\Theta}\}$ is identifiable, (b) the traces are independent and have common support, (c)  $\boldsymbol{\theta}^*$ lies in the interior of $\tilde{\Theta}$, (d) $\mathbb{P}_{\boldsymbol{\theta}}$ is three times differentiable wrt $\boldsymbol{\theta}$ in an open neighborhood $\omega$ around $\boldsymbol{\theta}^*$, (e) the first and second order partial derivatives of $\log \mathbb{P}_{\boldsymbol{\theta}}(\mathcal{D})$ wrt $\boldsymbol{\theta}$ are dominated by an integrable $g(\mathcal{D})$, (f) third-order partial derivatives of $\log \mathbb{P}_{\boldsymbol{\theta}}(\mathcal{D})$ wrt $\boldsymbol{\theta}$ are bounded on $\omega$, and (g) $\Sigma$ is positive definite on $\omega$. We can verify all of these conditions:  (a) holds due to the assumed identifiability condition of Theorem \ref{identif_theorem};  (b) holds because traces are assumed independent and they have a common support of all feasible traces $\mathcal{F}(G)$ with $\mathbb{P}^0(D_0) > 0$ due to Lemma \ref{pos_trace_prob_lemma}; (c) holds by Assumption \ref{assump2}; (d) holds for any $\boldsymbol{\theta} \in \tilde{\Theta}$ since, according to \eqref{individ_trace_prob}, the trace likelihood is a composition of a linear transformation of $\boldsymbol{\theta}$ and $\{F_v\}_{v\in V_c}$, which are three times differentiable by Assumption \ref{assump2}; (e) holds because first- and second-order partial derivatives of $\log \mathbb{P}_{\boldsymbol{\theta}}$ are continuous on the compact $\tilde{\Theta}$ by Assumption \ref{assump2}
        and thus achieve their maximum and minimum values; 
    (f) holds because third-order partial derivatives are also continuous by Assumption  \ref{assump2} and thus  bounded on the compact $\tilde{\Theta}$; and finally (g) 
  holds by Assumption \ref{local_convexity_assumop} and continuity of the Hessian for the $\omega$ chosen as a ball of a sufficiently small radius.

Thus  $\hat{\boldsymbol{\theta}}$ is consistent and by continuity of $\Sigma^{-1}(\boldsymbol{\theta})$ and the continuous mapping theorem we have that 
$$\Sigma^{-1}(\hat{\boldsymbol{\theta}}) \rightarrow \Sigma^{-1}(\boldsymbol{\theta}^*) \quad \text{in probability as} \ N\rightarrow\infty.$$
On the other hand, by the law of large numbers, we have for any $\boldsymbol{\theta}\in\tilde{\Theta}$ that
$$ \hat{\Sigma}^{-1}(\boldsymbol{\theta}) := -{1\over N}\sum_{n=1}^N\nabla^2\log \mathbb{P}_{\boldsymbol{\theta}}(\mathcal{D}_n) \rightarrow \Sigma^{-1}(\boldsymbol{\theta}) \quad \text{in probability as} \ N\rightarrow\infty.
$$
Combining these two observations, we have $\hat{\Sigma}^{-1}(\hat{\boldsymbol{\theta}})  \rightarrow \Sigma^{-1}(\boldsymbol{\theta}^*)$ in probability. So, by Assumption \ref{local_convexity_assumop}, $\hat{\Sigma}^{-1}(\hat{\boldsymbol{\theta}})$ is positive definite with probability tending to one.  Slutsky's theorem and \eqref{normal_all_params} implies
$$\sqrt{N}\hat{\Sigma}^{-1/2}(\hat{\boldsymbol{\theta}}) (\hat{\boldsymbol{\theta}} - \boldsymbol{\theta}^*) \stackrel{\mathbb{P}_{\boldsymbol{\theta}^*}}{\longrightarrow} \mathcal{N}(0, I_{|E|}) \quad \text{as} \quad N\rightarrow \infty.
$$
Finally, the node-wise asymptotic result follows from the block-diagonal structure of $\hat{\Sigma}^{-1}(\hat{\boldsymbol{\theta}})$, which consists of the blocks  $-{1\over N}\nabla^2L_v(\hat{\boldsymbol{\theta}}_v), v\in V_c$.
\end{proof}

\subsection{Section \ref{IM_section} Proofs} \label{IM_section_proofs}

We start with the following technical lemma formulating a convenient equivalent definition of concavity:
\begin{lemma}\label{GLT_main_lemma}
   Consider the cdf $F$ of the distribution supported on $[0, h]$. Then the condition
   \begin{equation}\label{F_submod}
       F(x + b) - F(x) \ge F(y + b) - F(y)
   \end{equation}
holds for all  triples $(x, y, b)$ with $0\le x \le y \le y + b \le h$
if and only if $F$ is concave on $[0, h]$. 
\end{lemma}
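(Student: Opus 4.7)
The plan is to prove the two implications separately, treating them as a characterization of concavity via the ``decreasing differences'' property $x \mapsto F(x+b) - F(x)$.

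For the forward direction, assume $F$ is concave on $[0, h]$. I would invoke the standard slope-monotonicity characterization of concavity: for any $a < b < c < d$ in $[0, h]$,
\[
\frac{F(b) - F(a)}{b - a} \;\ge\; \frac{F(d) - F(c)}{d - c}.
\]
When $x + b \le y$ in the notation of the lemma, the intervals $[x, x+b]$ and $[y, y+b]$ are disjoint and of equal length, and the inequality above directly yields $F(x+b) - F(x) \ge F(y+b) - F(y)$. When $x + b > y$, the intervals overlap; I would then split the argument using the intermediate point $x+b$ (or $y$) and apply the slope inequality twice on $[x, x+b], [x+b, y+b]$ and $[x, y], [y, y+b]$, canceling the common portion to obtain the same conclusion. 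Equivalently, one can use that the one-sided derivatives $F'_\pm$ of a concave function are non-increasing wherever they exist, so that $F(x+b) - F(x) = \int_x^{x+b} F'_+(t)\,dt$ is non-increasing in $x$ for fixed $b \ge 0$.

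For the reverse direction, the key first step is \emph{midpoint concavity}: for any $0 \le a < c \le h$, put $m = (a+c)/2$ and apply \eqref{F_submod} with $x = a$, $y = m$, $b = m - a = (c-a)/2$. Since $x + b = m$ and $y + b = c$, the condition gives $F(m) - F(a) \ge F(c) - F(m)$, i.e.\ $F(m) \ge \tfrac{1}{2}(F(a) + F(c))$. Iterating this over dyadic convex combinations of $a$ and $c$ establishes $F(\lambda a + (1-\lambda)c) \ge \lambda F(a) + (1-\lambda) F(c)$ for every dyadic $\lambda \in [0,1]$. To pass to arbitrary $\lambda$, I would exploit that $F$, being a cdf, is non-decreasing and bounded, hence measurable; by Sierpinski's theorem (a measurable midpoint-concave function is concave), concavity on all of $[0,h]$ follows.

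The main technical obstacle is the extension from dyadic to arbitrary convex combinations in the reverse direction, since a general cdf need only be right-continuous. The cleanest route is to invoke measurability plus Sierpinski's theorem, as above. Alternatively, one can give a self-contained argument that leverages monotonicity of $F$: approximating $\lambda$ by dyadic rationals $\lambda_n \uparrow \lambda$ from below gives the inequality at $\lambda_n a + (1-\lambda_n) c$, and since $F$ is monotone, both its left and right limits agree at all but countably many points; the concavity inequality passes to such points, and then extends to the remaining countable set by right-continuity combined with the already-established concavity on a dense set.
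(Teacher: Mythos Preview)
Your proposal is correct and follows essentially the same route as the paper: for the concavity-implies-condition direction you both use the slope-monotonicity characterization of concavity (the paper reduces the overlapping case by the substitution $x'=x$, $y'=x+b$, $b'=y-x$, which is equivalent to your splitting argument), and for the converse you both establish midpoint concavity from \eqref{F_submod} and then upgrade to full concavity via a boundedness/measurability result (the paper cites boundedness, you cite Sierpinski via measurability of the monotone $F$).
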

\begin{proof}[Proof of Lemma \ref{GLT_main_lemma}]
 (Necessity) It is enough to verify the ``midpoint'' concavity condition  
    $$F\left({x' + y' \over 2}\right) \ge {F(x') + F(y') \over 2}, \quad 0 \le x' \le y' \le C
    $$
    since for bounded functions (cdf is bounded between 0 and 1), it is known to be equivalent to concavity.
    Plugging $x = x', b = (y'-x') / 2, y=(y' + x') /2$ into \eqref{F_submod} and rearranging the terms implies the needed inequality.
    \\
    \\
    (Sufficiency) Without loss of generality, assume $x + b \le y$ (otherwise, repeat the proof for $x' = x, y' = x + b, b' = y - x$) .
    Consider the equivalent definition of a concave function (Lemma 2.1 in \cite{conv_equiv}), which states that for any $x_1 < x_2 < x_3$ it should hold
    $${F(x_2) - F(x_1)\over x_2 - x_1} \ge {F(x_3) - F(x_2)\over x_3 - x_2},
    $$
    and use this inequality with $(x, x+b, y)$ and $(x+b, y, y+b)$ to obtain what was needed:
     $${F(x + b) - F(x)\over b} \ge {F(y) - F(x + b)\over y - x - b} \ge {F(y + b) - F(y)\over b}.
    $$

\end{proof}

\begin{proof}[Proof of Theorem \ref{GLT_submodular_theorem}] 
      (Sufficiency) Consider a graph $G=(V, E)$ and an arbitrary GLT model on it. By Theorem 1 in \cite{mossel2009submodularity}, it is enough to show that all threshold functions $f_v(S) = F_v\left(\sum_{u\in S} b_{u,v}\right)$ are monotone and submodular. Monotonicity holds trivially since all edge weights are nonnegative and $F_v$ is non-decreasing. To establish submodularity, we need to check for $S^{\prime} \subset S \subset P(v)$ and $w \notin S$ that
    $$F_v\Bigl(\sum_{u\in S \cup \{w\}} b_{u,v}\Bigr) - F_v\Bigl(\sum_{u\in S} b_{u,v}\Bigr) \le F_v\Bigl(\sum_{u\in S' \cup \{w\}} b_{u,v}\Bigr) - F_v\Bigl(\sum_{u\in S'} b_{u,v}\Bigr).$$
This follows by applying Lemma \ref{GLT_main_lemma} to $b:=b_{w, v}, x := \sum_{u\in S'} b_{u,v}, y:= \sum_{u\in S} b_{u,v}$.  The condition  $0 \le x\le y \le y + b\le h_v$ follows from weights' non-negativity:
$$0\le \sum_{u \in S'} b_{u, v} \le \sum_{u \in S} b_{u, v} \le \sum_{u \in S \cup \{w\}} b_{u, v} \le \sum_{u \in P(v)} b_{u, v}  \le h_v.$$
(Necessity) Let $F$ be the cdf of an arbitrary distribution supported on $[0, h]$. By Lemma \ref{GLT_main_lemma}, there exist $(x, y, b)$ with $0\le x \le y\le y + b \le h$, such that $F(x + b) - F(x) < F(y+b) - F(y)$. Let $G$ be a star graph of in-degree 3 as defined in Figure \ref{fig:instar_graph}. Consider an instance of the GLT model on $G$ with weights $\boldsymbol{\theta} =(x, y-x, b)$ and $F$ as the cdf of node 4. Then with notations of Definition \ref{submod_def}, submodularity is violated for $S=\{1, 2\}$, $S' = \{1\}$,  and $v=3$:
$$F(y + b) - F(y) = \sigma(S \cup \{v\}) - \sigma(S) >\sigma(S'\cup \{v\}) - \sigma(S') = F(x + b) - F(x).
$$
\end{proof}

Before proceeding to prove Proposition \ref{spread_error_bound_propos}, we introduce a preliminary lemma that establishes a general bound on the discrepancy between the spreads from the IM solutions obtained under the ground truth and estimated models, which holds for an arbitrary graph. 
\begin{lemma}\label{spread_diff_bound_lemma}
    For an arbitrary graph $G=(V, E)$, it holds with the notations of Proposition \ref{spread_error_bound_propos}:
    $$|\sigma_{\boldsymbol{\theta}}(S^*(\boldsymbol{\theta})) - \sigma_{\boldsymbol{\theta}}(S^*(\hat{\boldsymbol{\theta}}))| \le 2\max_{|S| \le k} |\sigma_{\boldsymbol{\theta}}(S) - \sigma_{\hat{\boldsymbol{\theta}}}(S)|.
    $$
\end{lemma}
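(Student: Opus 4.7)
The plan is to use the standard ``add and subtract'' argument that shows the sub-optimality of a plug-in decision rule is bounded by twice the sup-norm error of the objective estimate. Let $\varepsilon := \max_{|S|\le k}|\sigma_{\boldsymbol{\theta}}(S) - \sigma_{\hat{\boldsymbol{\theta}}}(S)|$. Since $S^*(\boldsymbol{\theta})$ is by definition the maximizer of $\sigma_{\boldsymbol{\theta}}$ over seed sets of size at most $k$, and $S^*(\hat{\boldsymbol{\theta}})$ is a feasible competitor, we have $\sigma_{\boldsymbol{\theta}}(S^*(\boldsymbol{\theta})) \ge \sigma_{\boldsymbol{\theta}}(S^*(\hat{\boldsymbol{\theta}}))$, so the absolute value on the left-hand side can be dropped.

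Next, I would insert and subtract $\sigma_{\hat{\boldsymbol{\theta}}}$ evaluated at both seed sets to obtain the telescoping decomposition
\begin{align*}
\sigma_{\boldsymbol{\theta}}(S^*(\boldsymbol{\theta})) - \sigma_{\boldsymbol{\theta}}(S^*(\hat{\boldsymbol{\theta}}))
&= \bigl[\sigma_{\boldsymbol{\theta}}(S^*(\boldsymbol{\theta})) - \sigma_{\hat{\boldsymbol{\theta}}}(S^*(\boldsymbol{\theta}))\bigr] \\
&\quad + \bigl[\sigma_{\hat{\boldsymbol{\theta}}}(S^*(\boldsymbol{\theta})) - \sigma_{\hat{\boldsymbol{\theta}}}(S^*(\hat{\boldsymbol{\theta}}))\bigr] \\
&\quad + \bigl[\sigma_{\hat{\boldsymbol{\theta}}}(S^*(\hat{\boldsymbol{\theta}})) - \sigma_{\boldsymbol{\theta}}(S^*(\hat{\boldsymbol{\theta}}))\bigr].
\end{align*}
The middle bracket is non-positive because $S^*(\hat{\boldsymbol{\theta}})$ maximizes $\sigma_{\hat{\boldsymbol{\theta}}}$ over size-$k$ seed sets and $S^*(\boldsymbol{\theta})$ is again a feasible competitor. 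Each of the remaining two brackets is, in absolute value, at most $\varepsilon$, since $|S^*(\boldsymbol{\theta})|\le k$ and $|S^*(\hat{\boldsymbol{\theta}})|\le k$.

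Combining these bounds yields $\sigma_{\boldsymbol{\theta}}(S^*(\boldsymbol{\theta})) - \sigma_{\boldsymbol{\theta}}(S^*(\hat{\boldsymbol{\theta}})) \le 2\varepsilon$, which is exactly the claimed inequality. There is no real obstacle here; the proof is a two-line manipulation that only relies on the definition of the IM maximizer and the triangle inequality. The role of this lemma in the paper is to reduce the subsequent task, for bipartite graphs in Proposition \ref{spread_error_bound_propos}, to controlling $\max_{|S|\le k}|\sigma_{\boldsymbol{\theta}}(S) - \sigma_{\hat{\boldsymbol{\theta}}}(S)|$, where the Lipschitz assumption on each $F_v$ converts the pointwise spread difference into an $\ell_1$ bound on $\hat{\boldsymbol{\theta}}-\boldsymbol{\theta}$.
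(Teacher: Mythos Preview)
Your proof is correct and follows essentially the same ``add and subtract'' strategy as the paper. The only cosmetic difference is that the paper inserts a single intermediate point $\sigma_{\hat{\boldsymbol{\theta}}}(\hat S)$ and bounds $|\sigma_{\boldsymbol{\theta}}(S^*)-\sigma_{\hat{\boldsymbol{\theta}}}(\hat S)|=|\max\sigma_{\boldsymbol{\theta}}-\max\sigma_{\hat{\boldsymbol{\theta}}}|\le\max|\sigma_{\boldsymbol{\theta}}-\sigma_{\hat{\boldsymbol{\theta}}}|$, whereas you use a three-term telescope and discard the middle bracket by optimality; both routes are standard and yield the same bound.
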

\begin{proof}[Proof of Lemma \ref{spread_diff_bound_lemma}]
Denote for brevity $S^* = S^*(\boldsymbol{\theta})$ and $\hat{S} = S^*(\hat{\boldsymbol{\theta}})$.
Then, by the triangle inequality, we have
\begin{equation}
     |\sigma_{\boldsymbol{\theta}}(S^*) - \sigma_{\boldsymbol{\theta}}(\hat{S})| \le |\sigma_{\boldsymbol{\theta}}(S^*) - \sigma_{\hat{\boldsymbol{\theta}}}(\hat{S})| +  |\sigma_{\hat{\boldsymbol{\theta}}}(\hat{S}) - \sigma_{\boldsymbol{\theta}}(\hat{S})| . 
\end{equation}    
To bound the first term, we use the definition of $S^*$ and $\hat{S}$, followed by a standard maximum inequality 
\begin{align*}
    |\sigma_{\boldsymbol{\theta}}(S^*) - \sigma_{\hat{\boldsymbol{\theta}}}(\hat{S})| &\le |\max_{|S|\le k} \sigma_{\boldsymbol{\theta}}(S) - \max_{|S|\le k} \sigma_{\hat{\boldsymbol{\theta}}}(S)|\le \max_{|S|\le k} |\sigma_{\boldsymbol{\theta}}(S)  - \sigma_{\hat{\boldsymbol{\theta}}}(S)|.
\end{align*}
The second one can be bounded by observing that the difference evaluated at a given $\hat{S}$ with $|\hat{S}|\le k$ is dominated by its maximum across all possible sets $S$ of size not exceeding $k$:
$$|\sigma_{\hat{\boldsymbol{\theta}}}(\hat{S}) - \sigma_{\boldsymbol{\theta}}(\hat{S})| \le \max_{|S| \le k} |\sigma_{\hat{\boldsymbol{\theta}}}(S) - \sigma_{\boldsymbol{\theta}}(S)|.
$$
Combining the two bounds completes the proof.
\end{proof}

\begin{proof}[Proof of Proposition \ref{spread_error_bound_propos}]
    Consider a directed bipartite graph $G = (V, E)$ with node set $V$ consisting of child and parent node subsets, denoted respectively as $V_c$ and $V_p$, so that $V = V_c \sqcup V_p$ and $E \subset \{(u, v): u\in V_p, \ v \in V_c\}$. By Lemma \ref{spread_diff_bound_lemma}, we only need to show that for any node subset $S \subset V$ with $|S|\le k$, it holds
    $$|\sigma_{\hat{\boldsymbol{\theta}}}(S) - \sigma_{\boldsymbol{\theta}}(S)| \le L \|\hat{\boldsymbol{\theta}} - \boldsymbol{\theta}\|_1.
    $$
    Conveniently, we can explicitly compute the influence function of an arbitrary parameter $\boldsymbol{\theta}$ as 
    $$\sigma_{\boldsymbol{\theta}}(S) = |S| + \sum_{v\in V_c \setminus S} F_v(B_v(S; \boldsymbol{\theta}_v)).
    $$
    Indeed, nodes in $V_c \cap S$ cannot activate anyone else, and nodes in $V_p \cap S$ can also propagate the influence to their children. From that, the needed bound follows by sequentially applying the triangular inequality together with the Lipschitz property of $F_v$:
    \begin{align*}
    |\sigma_{\hat{\boldsymbol{\theta}}}(S) - \sigma_{\boldsymbol{\theta}}(S)| &\le \sum_{v\in V_c \setminus S}|F_v(B_v(S; \boldsymbol{\theta}_v)) - F_v(B_v(S; \hat{\boldsymbol{\theta}}_v))| \\
    &\le L \sum_{v\in V_c \setminus S}| B_v(S; \boldsymbol{\theta}_v) - B_v(S; \hat{\boldsymbol{\theta}}_v)| \\
    &\le L\sum_{v\in V_c \setminus S} \sum_{u\in S \cap P(v)} |b_{u, v} - \hat{b}_{u, v}|\\ 
    &\le L \|\hat{\boldsymbol{\theta}} - \boldsymbol{\theta}\|_1.
    \end{align*}
\end{proof}

\end{document}